\def\adl@drawiv#1#2#3{%
	\hskip0
	\tabcolsep
	\xleaders#3{#2 0\@tempdimb #1{1}#2 0.5\@tempdimb}%
	#2\z@ plus1fil minus1fil\relax
	\hskip0\tabcolsep}
\newcommand{\cdashlinelr}[1]{%
	\noalign{\vskip\aboverulesep
		\global\let\@dashdrawstore\adl@draw
		\global\let\adl@draw\adl@drawiv}
	\cdashline{#1}
	\noalign{\global\let\adl@draw\@dashdrawstore
		\vskip\belowrulesep}}
\renewcommand*{\backref}[1]{}
\renewcommand*{\backrefalt}[4]{%
    \ifcase #1%
          \or [Cited on page~#2.]%
          \else [Cited on pages~#2.]%
    \fi%
    }
\let\mathbb=\mathds
\def\d{{\text {\rm d}}}
\DeclarePairedDelimiter{\ceil}{\lceil}{\rceil}
\DeclareMathOperator*{\argmax}{arg\,max}
\DeclareMathOperator{\Tr}{Tr}
\DeclareMathOperator{\Log}{Log}
\DeclareMathOperator{\Arg}{Arg}
\DeclareMathOperator{\spec}{spec}
\DeclareMathOperator{\sign}{sign}
\DeclareMathOperator{\I}{\mathbf{1}}
\DeclareMathOperator{\supp}{supp}
\def\X{\mathsf{X}}
\def\Y{\mathsf{Y}}
\def\Z{\mathsf{Z}}
\def\A{\mathsf{A}}
\def\B{\mathsf{B}}
\def\C{\mathsf{C}}
\def\E{\mathsf{E}}
\def\R{\mathsf{R}}
\def\Q{\mathsf{Q}}
\newcommand{\proj}[1]{\left\{#1\right\}} 
\newcommand{\be}{{\mathbf e}}
\newcommand{\norm}[2]{\parallel \! #1 \! \parallel_{#2}}
\newcommand{\Renyi}{{}R\'{e}nyi{ }}
\def\0{{\mathbf{0}}}
\def\1{{\mathbf{1}}}
\def\2{{\mathbf{2}}}
\def\3{{\mathbf{3}}}
\def\4{{\mathbf{4}}}
\def\5{{\mathbf{5}}}
\def\6{{\mathbf{6}}}
\def\7{{\mathbf{7}}}
\def\8{{\mathbf{8}}}
\def\9{{\mathbf{9}}}
\def\be{\begin{equation}}
\def\ee{\end{equation}}
\def\bea{\begin{eqnarray}}
\def\eea{\end{eqnarray}}
\def\eps{\varepsilon}
\newcommand{\id}{\operatorname{id}}
\theoremstyle{plain}
\newtheorem{conj}{Conjecture} 
\newtheorem*{fact}{Fact} 
\theoremstyle{definition}
\newtheorem{defn}[theo]{Definition} 
\theoremstyle{remark}
\newtheorem{remark}[theo]{Remark}
\newcommand{\opnorm}{\@ifstar\@opnorms\@opnorm}
\newcommand{\@opnorms}[1]{%
	$\left|\mkern-1.5mu\left|\mkern-1.5mu\left|
	#1
	\right|\mkern-1.5mu\right|\mkern-1.5mu\right|$
}
\newcommand{\@opnorm}[2][]{%
	\mathopen{#1|\mkern-1.5mu#1|\mkern-1.5mu#1|}
	#2
	\mathclose{#1|\mkern-1.5mu#1|\mkern-1.5mu#1|}
}
\tikzset{>={Latex[length=4,width=4]}} 
\colorlet{mylightblue}{blue!5!white}
\colorlet{mydarkblue}{blue!30!black}
\colorlet{myblue}{blue!50!black}
\colorlet{myred}{red!50!black}
\colorlet{mydarkred}{red!30!black}
\colorlet{mydarkgreen}{green!30!black}
\newcommand{\sh}{\kern-0.08em$^\textbf{\#}$\hspace{-3pt}}
\renewcommand{\b}{\kern-0.06em$\flat$}
\begin{document}

\let\origmaketitle\maketitle
\def\maketitle{
	\begingroup
	\def\uppercasenonmath##1{} 
	\let\MakeUppercase\relax 
	\origmaketitle
	\endgroup
}

\title{\bfseries \Large{ 
Error Exponents for Quantum Packing Problems\\
via an Operator Layer Cake Theorem
		}}

\author{ \normalsize 
{Hao-Chung Cheng}$^{1\text{--}5}$
and
{Po-Chieh Liu}$^{1,2}$
}
\address{\small  	
$^1$Department of Electrical Engineering and Graduate Institute of Communication Engineering,\\ National Taiwan University, Taipei 106, Taiwan (R.O.C.)\\
$^2$Department of Mathematics, National Taiwan University\\
$^3$Center for Quantum Science and Engineering, National Taiwan University\\
$^4$Hon Hai (Foxconn) Quantum Computing Center, New Taipei City 236, Taiwan (R.O.C.)\\
$^5$Physics and Mathematics Divisions, National Center for Theoretical Sciences, Taipei 10617, Taiwan (R.O.C.)\\
}

\email{\href{mailto:haochung.ch@gmail.com}{haochung.ch@gmail.com}}

\date{\today}

\begin{abstract}
We establish a one-shot random coding bound for classical-quantum channel coding with a universal dimension-independent prefactor, resolving Burnashev and Holevo's 1998 conjecture.
The bound holds for arbitrary input distributions; after optimization, its asymptotic exponent matches the reliability function of classical-quantum channels above the critical rate, including channels with infinite-dimensional output systems.
This error analysis framework naturally extends to a broad class of quantum packing problems, encompassing classical communication over arbitrary quantum channels, with or without entanglement assistance; entanglement-assisted quantum communication; constant-composition codes; and classical data compression with quantum side information under fixed- and variable-length coding.

Our central tool is an operator layer cake theorem, expressing the directional derivative of the operator logarithm as an integral of spectral projections.
This representation identifies the integral pretty-good measurement with a randomized Holevo--Helstrom measurement, giving an operational explanation of why the pretty-good measurement is pretty good.
\end{abstract}

\maketitle
\vspace{-2.8em}
\tableofcontents

\section{Introduction} \label{sec:introduction}

Shannon's celebrated noisy-channel coding theorem \cite{Sha48} provides an information-theoretic characterization of the ultimate communication capability of a noisy channel, i.e., a stochastic map $p_{\Y\mid\X}$ from an input alphabet $\X$ to an output alphabet $\Y$.
The \emph{achievability} theorem provides codes for sending $k$ bits of information over $n$ uses of the channel such that the associated probability of erroneous decoding at the receiver behaves like
\begin{align} \label{eq:vanishing_error}
\varepsilon_n \rightarrow 0, \quad \text{as} \;\; n\to \infty,
\end{align}
if the ratio $\nicefrac{k}{n}$ (whose limit is called the transmission rate) is below the channel capacity of $p_{\Y\mid\X}$.
The (weak) converse states that, otherwise, vanishing error is not possible.
Shannon's asymptotic result is profoundly simple, yet it has driven major research developments and technological advances in the information era, and later in the field of quantum Shannon theory as well.
	
One unaddressed question was how fast the error in \eqref{eq:vanishing_error} approaches $0$ as $n$ increases.
The convergence rate of error is operationally relevant because, in practice, one may ask how many channel uses (known as the \emph{query complexity} \cite{sample_complexity_25, NW25}) are needed to achieve a prescribed error tolerance, say $10^{-6}$ .
Pioneered by Feinstein \cite{Fei54}, Shannon \cite{Sha59b} and later refined by Gallager \cite{Gal65}, these coding arguments and error bounds have been the focus of extensive research efforts.
Notably, Gallager \cite[Theorem 1]{Gal65} (see also \cite[Theorem 5.6.2]{Gal68}) established a mathematically elegant error estimate in terms of a power-mean expression via random coding with maximum-likelihood decoding: for any input distribution $p_{\X}$,
\begin{align} \label{eq:Gallager}
    \varepsilon_n
&\leq (2^k)^{\frac{1-\alpha}{\alpha}} \left( 
\sum_{y\in\Y}
\left( \sum_{x\in\X} p_{\X}(x) (p_{\Y|\X}(y| x))^{\alpha} \right)^{\nicefrac{1}{\alpha}}
\right)^n, \quad \forall\, \alpha \in [\nicefrac{1}{2},1], \,n\in\mathds{N}.
\end{align}
It was later shown that the exponential decay rate (the so-called \emph{random-coding exponent}) obtained after choosing the best $p_{\X}$ is optimal for rates above a certain critical value \cite{SGB67, SGB67b}.
More importantly, Gallager's random coding bound holds for \emph{any} blocklength $n$ (including short blocklengths), unlike the asymptotic result in \eqref{eq:vanishing_error}.

If the underlying physical medium of the communication channel is quantum mechanical, the channel is a quantum evolution that transforms the state of a quantum system into another state as output.
The so-called HSW theorem \cite{SW97, Hol98} extends Shannon's noisy-channel coding theorem to sending classical information over a quantum channel with asymptotic error behavior as in \eqref{eq:vanishing_error}.
Inspired by Gallager's random coding bound in \eqref{eq:Gallager}, Burnashev and Holevo studied one of the simplest forms of quantum channels---\emph{classical-quantum channels}---that send a codeword $x_1 x_2 \ldots x_n$ of length $n$ to a product state $\rho_{\B_1}^{x_1} \otimes \rho_{\B_2}^{x_2} \otimes \cdots \otimes \rho_{\B_n}^{x_n}$ at the output quantum system $\B^n$ \cite{BH98}, and made the following conjecture for classical-quantum channels.

\smallskip
\noindent\fbox{\begin{minipage}{1\textwidth}
\begin{conj}[Burnashev and Holevo 1998 {\cite{BH98}, \cite[(5)]{Hol00}}] \label{conjecture}
For any classical-quantum channel $x\mapsto \rho_{\B}^x$ and any input distribution $p_{\X}$, the random coding error satisfies 
\begin{align} \label{eq:conjecture}
\varepsilon_n
&\leq c \cdot (2^k)^{\frac{1-\alpha}{\alpha}} \left( 
\Tr\left[
\left( \sum_{x\in\X} p_{\X}(x) (\rho_{\B}^x)^{\alpha} \right)^{\nicefrac{1}{\alpha}}
\right]
\right)^n, \quad \forall\, \alpha \in [\nicefrac{1}{2},1], \,n\in\mathds{N}
\end{align}
for some constant $c$.
\end{conj}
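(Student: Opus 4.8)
The plan is to run a one-shot random coding argument and to decode with a pretty-good-type measurement, but to analyze that measurement not through the Hayashi--Nagaoka operator inequality---which loses a factor of two in the exponent and so cannot reach the Gallager form \eqref{eq:conjecture}---but through the operator layer cake theorem advertised in the abstract. Fix $\alpha\in[\nicefrac{1}{2},1]$ and set $s=\nicefrac{(1-\alpha)}{\alpha}\in[0,1]$. Draw $M=2^{k}$ codewords $c_{1},\dots,c_{M}$ independently from $p_{\X}^{\otimes n}$, write $\rho^{c_{m}}=\rho_{\B}^{x_{1}^{(m)}}\otimes\cdots\otimes\rho_{\B}^{x_{n}^{(m)}}$ for the corresponding product outputs, and decode with the $\alpha$-weighted pretty-good measurement $\Pi_{m}=S_{\alpha}^{-\nicefrac{1}{2}}(\rho^{c_{m}})^{\alpha}S_{\alpha}^{-\nicefrac{1}{2}}$, where $S_{\alpha}=\sum_{m'}(\rho^{c_{m'}})^{\alpha}$. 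By permutation symmetry of the random code it suffices to bound $\mathbb{E}\,\Tr[(\I-\Pi_{1})\rho^{c_{1}}]$, and one has $\I-\Pi_{1}=S_{\alpha}^{-\nicefrac{1}{2}}\Sigma_{\alpha}S_{\alpha}^{-\nicefrac{1}{2}}$ with $\Sigma_{\alpha}=\sum_{m'\neq 1}(\rho^{c_{m'}})^{\alpha}$.

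The heart of the proof is the operator layer cake theorem applied to $S_{\alpha}^{-\nicefrac{1}{2}}\Sigma_{\alpha}S_{\alpha}^{-\nicefrac{1}{2}}=\I-S_{\alpha}^{-\nicefrac{1}{2}}(\rho^{c_{1}})^{\alpha}S_{\alpha}^{-\nicefrac{1}{2}}$. If $(\rho^{c_{1}})^{\alpha}$ and $\Sigma_{\alpha}$ commuted, the classical layer cake formula $f=\int_{0}^{\infty}\I\{f>t\}\,\d t$ would give $S_{\alpha}^{-\nicefrac{1}{2}}\Sigma_{\alpha}S_{\alpha}^{-\nicefrac{1}{2}}=\int_{0}^{\infty}\bigl\{(\rho^{c_{1}})^{\alpha}\le\gamma\,\Sigma_{\alpha}\bigr\}\,\tfrac{\d\gamma}{(1+\gamma)^{2}}$, a mixture of Holevo--Helstrom threshold tests between the true codeword and the rest; in general this fails because the spectral projection of $S_{\alpha}^{-\nicefrac{1}{2}}(\rho^{c_{1}})^{\alpha}S_{\alpha}^{-\nicefrac{1}{2}}$ is not that of $(\rho^{c_{1}})^{\alpha}-\gamma\Sigma_{\alpha}$. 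The point of the operator layer cake theorem---realizing the directional derivative of the operator logarithm $t\mapsto\log(A+tB)$ as an integral over spectral projections---is precisely to recover the identity: it shows that $\I-\Pi_{1}$ equals, up to a controllable correction, such a randomized Holevo--Helstrom measurement. Consequently $\mathbb{E}\,\Tr[(\I-\Pi_{1})\rho^{c_{1}}]\le\int_{0}^{\infty}\mathbb{E}\,\Tr\bigl[\bigl\{(\rho^{c_{1}})^{\alpha}\le\gamma\,\Sigma_{\alpha}\bigr\}\rho^{c_{1}}\bigr]\,\tfrac{\d\gamma}{(1+\gamma)^{2}}$ plus lower-order terms, and every integrand is the error of an \emph{optimal} binary test, so no constant is squandered---this is also the promised operational reason why the pretty-good measurement is pretty good.

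From here the computation follows the classical Gallager template. For each $\gamma$, bound the binary error $\Tr[\{(\rho^{c_{1}})^{\alpha}\le\gamma\Sigma_{\alpha}\}\rho^{c_{1}}]$ by the non-commutative Audenaert--Chernoff inequality $\Tr[X\,\{X\le Y\}]\le\Tr[X^{1-u}Y^{u}]$ for $X,Y\succeq0$, $u\in[0,1]$---after passing between $\rho^{c_{1}}$ and the power $(\rho^{c_{1}})^{\alpha}$ that appears in the decoder---to get a bound of the form $\gamma^{s}\,\Tr\bigl[(\rho^{c_{1}})^{\alpha}\,\Sigma_{\alpha}^{\,s}\bigr]$; the remaining integral $\int_{0}^{\infty}\gamma^{s}(1+\gamma)^{-2}\,\d\gamma$ is finite and supplies the constant $c$. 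Finally take the code expectation: independence of $c_{1}$ from $c_{2},\dots,c_{M}$ together with the tensor-power structure gives $\mathbb{E}_{c_{1}}[(\rho^{c_{1}})^{\alpha}]=\bigl(\sum_{x}p_{\X}(x)(\rho_{\B}^{x})^{\alpha}\bigr)^{\otimes n}$ with no loss, while operator concavity of $t\mapsto t^{s}$ on $[0,\infty)$ for $s\in[0,1]$ gives $\mathbb{E}[\Sigma_{\alpha}^{\,s}]\preceq(M-1)^{s}\bigl(\bigl(\sum_{x}p_{\X}(x)(\rho_{\B}^{x})^{\alpha}\bigr)^{s}\bigr)^{\otimes n}$; multiplying and tracing collapses the product over the $n$ channel uses to $(M-1)^{s}\,\Tr\bigl[\bigl(\sum_{x}p_{\X}(x)(\rho_{\B}^{x})^{\alpha}\bigr)^{1+s}\bigr]^{n}=(2^{k})^{\nicefrac{(1-\alpha)}{\alpha}}\,\Tr\bigl[\bigl(\sum_{x}p_{\X}(x)(\rho_{\B}^{x})^{\alpha}\bigr)^{\nicefrac{1}{\alpha}}\bigr]^{n}$, which is exactly \eqref{eq:conjecture}.

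The step I expect to be the main obstacle is the operator layer cake theorem itself: establishing the integral representation of the derivative of the operator logarithm over spectral projections, identifying the correct weighting measure, and controlling the non-commutator corrections so that the randomized-Holevo--Helstrom picture survives with only a \emph{universal} multiplicative constant rather than a dimension-dependent one. A closely related difficulty, flagged above, is bridging the true codeword state $\rho^{c_{1}}$ and the power $(\rho^{c_{1}})^{\alpha}$ in the $\alpha$-weighted decoder: since $t\mapsto t^{\nicefrac{1}{\alpha}}$ fails to be operator monotone for $\nicefrac{1}{\alpha}>1$, one cannot naively pass from $(\rho^{c_{1}})^{\alpha}\preceq S_{\alpha}$ to $\rho^{c_{1}}\preceq S_{\alpha}^{\nicefrac{1}{\alpha}}$, and it is through the operator convexity of $t\mapsto t^{\nicefrac{1}{\alpha}}$ on $[0,\infty)$---valid precisely because $\nicefrac{1}{\alpha}\in[1,2]$, i.e.\ $\alpha\in[\nicefrac{1}{2},1]$---that the hypothesis on $\alpha$ is genuinely consumed. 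Once this one-shot bound is in hand, choosing the optimal $p_{\X}$ and optimizing over $\alpha$ and blocklength recovers the reliability function above the critical rate, and the same scheme, with binary hypothesis testing between product states replaced by its counterpart for the relevant resource, ports to the other packing problems listed in the abstract.
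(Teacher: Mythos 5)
Your proposal tracks the paper's high-level scheme (random coding, an $\alpha$-weighted pretty-good decoder, the operator layer cake to expose a randomized Holevo--Helstrom structure, then Audenaert tilting, operator concavity, and the tensor-power collapse), but the central step has a genuine gap. You decode with the \emph{conventional} sandwiched PGM $\Pi_m = S_{\alpha}^{-1/2}(\rho^{c_m})^{\alpha}S_{\alpha}^{-1/2}$ and claim that the operator layer cake theorem lets you write $\I-\Pi_1 = S_{\alpha}^{-1/2}\Sigma_{\alpha}S_{\alpha}^{-1/2}$ as a randomized Holevo--Helstrom test ``up to a controllable correction.'' That is not what the theorem delivers. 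Theorem~\ref{theo:Dlog_formula} (and its consequence Theorem~\ref{theo:extremal}) is an \emph{exact} identity for the derivative-of-logarithm quotient $\mathrm{D}\log[S_{\alpha}](\Sigma_{\alpha})$, not for the sandwiched quotient $S_{\alpha}^{-1/2}\Sigma_{\alpha}S_{\alpha}^{-1/2}$; these are different operators whenever $(\rho^{c_1})^{\alpha}$ and $\Sigma_{\alpha}$ do not commute. Applying the classical layer cake to the sandwiched quotient does produce an integral of spectral projections, but of the operators $S_{\alpha}^{-1/2}(\Sigma_{\alpha}-tS_{\alpha})S_{\alpha}^{-1/2}$, whose positive spectral \emph{subspace} differs from that of $\Sigma_{\alpha}-tS_{\alpha}$ (Sylvester's inertia preserves signature, not eigenspaces), so the resulting projections are not the Holevo--Helstrom tests $\{(\rho^{c_1})^{\alpha} < \gamma\Sigma_{\alpha}\}$ and there is no naturally controllable ``correction'' term. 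No such correction is established anywhere in the paper.

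The fix is precisely the device the paper introduces: decode with the \emph{integral} $\alpha$-PGM $\mathring{\Pi}_m = \mathrm{D}\log[S_{\alpha}]\bigl((\rho^{c_m})^{\alpha}\bigr)$, for which Theorem~\ref{theo:extremal} gives, with no correction at all, $\I - \mathring{\Pi}_1 = \int_0^1 \{u(\rho^{c_1})^{\alpha} < (1-u)\Sigma_{\alpha}\}\,\d u$. With that substitution the rest of your sketch does reproduce the paper's tilting inequality (Proposition~\ref{prop:key}) and Theorem~\ref{theo:CQ}. If you want to conclude for the conventional PGM, you need an additional comparison showing it is at least as good, which the paper proves only for $\alpha=1$ (Proposition~\ref{prop:PGMs}) and which is not needed for the main bound. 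Two secondary remarks: the prefactor you extract, $\int_0^\infty \gamma^{s}(1+\gamma)^{-2}\,\d\gamma = s\pi/\sin(s\pi)$ with $s=(1-\alpha)/\alpha$, diverges as $\alpha\to\nicefrac{1}{2}$, so on its own it does not supply a uniform constant $c$; the paper supplements it with a second, Araki-type bound $c_\alpha^{(2)}$ to make $c<1.102$. And the $\alpha\in[\nicefrac{1}{2},1]$ hypothesis is consumed through Audenaert's inequality and the operator concavity of $t\mapsto t^{(1-\alpha)/\alpha}$, not, as you suggest, through operator convexity of $t\mapsto t^{1/\alpha}$.
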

\end{minipage}}
\smallskip

Using the quantum Sibson identity \cite{KW09, SW12, HT14, CGH18}, one can rewrite \eqref{eq:conjecture} in the form:
\begin{align} \label{eq:bound_Renyi}
\varepsilon_n\leq
c \cdot 2^{  -n \sup_{\alpha \in [\nicefrac{1}{2},1]} \frac{1-\alpha}{\alpha} \left[ I_{\alpha}(\X:\B)_{\rho} - \nicefrac{k}{n} \right] },
\end{align}
where $D_{\alpha}(\rho\Vert\sigma) \!\coloneqq\! \frac{1}{\alpha-1}\log_2 \Tr[\rho^{\alpha}\sigma^{1-\alpha}]$ is the Petz--\Renyi divergence, and 
$I_{\alpha}(\X\!:\!\B)_{\rho}
\!\coloneqq\!\! \inf\limits_{\text{state} \;\sigma_{\B}} \! D_{\alpha}\left( \rho_{\X\B} \Vert \rho_{\X} \!\otimes\! \sigma_{\B} \right)$
is the \Renyi information with respect to the joint input-output classical-quantum state $\rho_{\X\B}$.

Burnashev and Holevo's conjectured random coding bound is of theoretical and practical significance for the following reasons.
\begin{enumerate}[(I)]
    \item 
    Although the bound \eqref{eq:conjecture} is written as an $n$-shot expression, it is actually a \emph{one-shot bound} (i.e., $n=1$), and hence it holds also for non-stationary channels and even for arbitrary channels that do not possess any independent and identically distributed (i.i.d.)~structure.
    This consideration is especially prominent for the quantum scenario as imposing a technical i.i.d.~assumption on quantum channels is not always realistic.

    \item
    The error probability $\varepsilon_n$ decays exponentially not only in the asymptotic limit $n\to \infty$, but also for \textbf{any} short and moderate blocklength $n$. 
    Unlike channel capacity, which is in general not achievable in finite blocklength, the random coding bound of the form \eqref{eq:Gallager} is achievable in one-shot.\footnote{Even with the \emph{channel dispersion} back-off term of capacity, large blocklengths of order $\mathcal{O}(\frac{1}{\varepsilon^2})$ (where $\varepsilon$ is the target error probability to achieve) are still required to achieve the second-order rate.}
    At present, realizing a large-scale quantum device (e.g.,~the collective measurement for decoding) remains challenging.

    \item 
    The bound \eqref{eq:conjecture} holds for \emph{any} input distribution $p_{\X}$.
    Hence, without knowing the optimal $p_{\X}$, the guarantee of the exponential decay holds for any suboptimal one.
    This feature is useful from the coding-theoretic perspective because computationally finding the optimal input distributions is generally hard, and practically implementing such an optimal random block code could be challenging.

    \item 
    The constant $c$ is independent of the dimension of the output quantum system $\B$, which allows an accurate estimate of the error probability even for large quantum systems.
\end{enumerate}

At that time, Burnashev and Holevo proved Conjecture~\ref{conjecture} for the special case of pure-state channels with $c=2$.
Later, Hayashi proved an exponential-decay bound with a weaker error exponent, but for any mixed-state channels as well \cite{Hay07}.
Dalai established an asymptotic sphere-packing bound, which improves on Winter's converse bound \cite{Win99} (in the so-called Haroutunian form), and 
matches the error exponent in \eqref{eq:conjecture} for \emph{optimal} input distributions and for rates above the critical rate \cite{Dal13}.
Later, the sphere-packing bound was refined to any constant-composition code \cite{DW14, CHT19} and for finite blocklengths \cite{CHT19}.
Burnashev and Holevo's result for pure-state channels and Hayashi's bound were slightly improved to tighter one-shot bounds \cite{Cheng_simple}.
In ~\cite{BT24}, Beigi and Tomamichel proved \eqref{eq:bound_Renyi} with $c=1$ but with a measured \Renyi information \cite{Don86, HP91}, which recovers Gallager's result in the commuting case.
Substantial progress was made by Renes, who showed Conjecture~\ref{conjecture} with a dimension-dependent constant $c$ and with $p_{\X}$ being the uniform distribution \cite{Ren23}.
This matches Dalai's sphere-packing bound for \emph{symmetric} classical-quantum channels.
Essentially, Renes bypassed Burnashev and Holevo's conjecture but provided a new route via the duality relation to achieve the optimal error exponent for finite-dimensional symmetric classical-quantum channels (together with Hayashi's achievability bound on privacy amplification in the dual domain \cite[Theorem 1]{Hay15_PA}).
Recently, Renes employed Gallager's shaping method \cite{Gal68, Ren25} and, concurrently, Li and Yang utilized the method of types \cite{LY25} to asymptotically approximate the optimal exponent-achieving input distribution so as to match Dalai's sphere-packing bound.
We refer the reader to~\cite{MHU18} for other possible input-shaping methods and their practical overhead.

Although tremendous progress has been made in quantum Shannon theory over the past decades --- see, e.g.~recent developments in error exponent analysis of other quantum information-theoretic tasks \cite{CHT19, CHDH-2018, CHDH2-2018, Dup21, KL21, LY21a, LY21b, LY24a, LY24b, Cheng_simple, CG22, CG22b, SGC22b, AB24, ATB24, OCC+24, CDG24} --- Burnashev and Holevo's conjecture remains a long-standing open question in the field.\footnote{Burnashev and Holevo's conjecture was publicly mentioned in Holevo's \emph{2016 Claude E.~Shannon Award Lecture} at \textit{2016 IEEE International Symposium on Information Theory}, Barcelona, Spain \cite{Holevo2016Shannon}.}
In the following, we provide possible reasons why this problem is so challenging.
\begin{enumerate}[(i)]
    \item 
    Historically, the field lacked a systematic and sharp analytical tool for quantum state discrimination.
    Essentially, one can resort to the maximum-likelihood decoding as in Gallager's random coding bound \eqref{eq:Gallager}.
    Unfortunately, there is no maximum-likelihood decoder in the quantum setting due to noncommutativity.
    The optimal quantum Bayesian decoder does not have a closed-form expression.

    \item 
    One of the key ingredients in large deviation analysis is \emph{tilting} (i.e., a Markov-type inequality in probability theory).
    Although the tilting question has been solved in binary quantum hypothesis testing, it is nontrivial how to extend it to general quantum information-theoretic problems.

    \item
    Due to noncommutativity, there are different proposed quantum \Renyi divergences.
    It has been shown that the Petz--\Renyi divergence \cite{Pet86} has operational meaning in some scenarios \cite{ANS+08}, whereas the sandwiched \Renyi divergence \cite{MDS+13, WWY14} has operational meaning in others \cite{MO15, MO14, LY24a}.
    However, the governing principles are not clear.

    \item 
    A powerful technique in quantum information theory, called \emph{pinching}, was developed by Hayashi \cite{Hay02} to force operators to be commutative.
    How to apply pinching is unclear here.
    
    \item 
    Csisz{\'a}r--K{\"o}rner's random coding based on the method of types \cite[Theorem 10.2]{CK11} might not be applicable here.
    As we will show later in Remark~\ref{remark:CK}, the so-called \emph{dual-domain expression} of the exponent (which naturally appears via the method of types) corresponds to another \emph{larger} quantum \Renyi divergence, which should not be achievable.

    \item 
    In certain circumstances, it is useful to employ additional resources (i.e.,~stronger or even non-physical correlations) to assist the task, yielding a lower error. 
    Then, one uses the assisted performance to estimate the unassisted performance via the \emph{rounding technique} 
    \cite{FSS19, BFO24, AB24, ATB24}.
    The rounding technique works in several settings, but its applicability here is unclear.

\end{enumerate}

\medskip

We prove Burnashev and Holevo's Conjecture~\ref{conjecture} in the affirmative.
We establish the bound 
\eqref{eq:conjecture} with a dimension-independent constant $c< 1.102$ by constructing an \emph{integral pretty-good measurement} defined in terms of the directional derivative of the operator logarithm.
(See Section~\ref{sec:QHT} for more detailed definitions.)
Such measurements were recently analyzed by Beigi and Tomamichel \cite{BT24}.

Moreover, the established random coding bound applies beyond classical-quantum channels to other \emph{quantum packing-type problems}, including classical communication over \emph{any} fully quantum channel with or without entanglement assistance, quantum communication with entanglement assistance, constant-composition codes (for which the established error exponent matches the sphere-packing bound for \emph{any} composition \cite{DW14, CHT19}), and classical data compression for both fixed-length coding and variable-length coding.
Table~\ref{table:main} summarizes the main error-exponent results established here.
Table~\ref{table:comparison} provides a comparison with the prior work on Burnashev and Holevo's conjecture.
Tables~\ref{table:survey_packing} and \ref{table:survey_covering} collect some known exponent results in quantum information theory.

Our key technical contribution is to show that the above-mentioned two-outcome integral pretty-good measurement admits an extremal decomposition into Holevo--Helstrom measurements.
In other words, it is actually equivalent to randomized optimal measurements.
With this, we can directly apply the known results of binary quantum hypothesis testing (e.g.,~the information spectrum method).
Moreover, the above interpretation also provides an intuitive explanation of why the pretty-good measurement is near-optimal.
Indeed, the Holevo--Helstrom measurement can achieve the optimal error exponent even under mismatched priors, which only contribute a constant multiplicative factor; 
integrating the resulting constants over the integral PGM  induces at most a constant multiplicative cost to the optimal error.

The extremal decomposition is a special case of the established \emph{operator layer cake theorem} (Theorem~\ref{theo:Dlog_formula}): For $A>0$ and $B\geq 0$,
\begin{align}
    \lim_{t\to0} \frac{\log(A+tB) - \log A}{t} 
    =
    \int_0^\infty \proj{uA < B} \, \d u
    - \int_{-\infty}^0 \proj{uA>B}\, \d u,
\end{align}
(here $\proj{uA < B}$ denotes the projection onto the positive spectral subspace of $B-uA$).

\medskip
This paper is organized as follows.
Section~\ref{sec:notation} introduces necessary notation.
Section~\ref{sec:QHT} establishes the extremal decomposition of the integral
pretty-good measurement (Theorem~\ref{theo:extremal}) and the tilting inequality
(Proposition~\ref{prop:key}) that underlies all subsequent results.
Section~\ref{sec:CQ} presents a solution to Burnashev and Holevo's conjecture for classical-quantum channels.
Section~\ref{sec:CC} considers constrained codebooks and constant-composition codes.
Section~\ref{sec:CQSW} studies classical data compression with quantum side information under fixed-length coding and variable-length coding.
Sections~\ref{sec:unassisted} and \ref{sec:EA} investigate unassisted and entanglement-assisted classical communication over arbitrary quantum channels, respectively.
Section~\ref{sec:EAQ} extends to entanglement-assisted quantum communication over quantum channels.
We conclude the paper in Section~\ref{sec:conclusions}.
The appendices collect properties of the operator logarithm (Appendix~\ref{sec:log})
and prove the operator layer cake theorem (Appendix~\ref{sec:layer-cake}).


\subsection{Notation} \label{sec:notation}

Let quantum systems (or quantum registers) $\A, \B, \ldots$ be associated with Hilbert spaces $\mathcal{H}_{\A}, \mathcal{H}_{\B}, \ldots $, respectively.
We use $|\A|$ to denote the dimension of $\A$. Unless explicitly stated otherwise, Hilbert spaces are finite-dimensional.
The quantum state of a system $\A$ is represented by a density operator $\rho_{\A}$, i.e., a positive semidefinite operator with unit trace on $\mathcal{H}_{\A}$.
The set of quantum states on $\mathcal{H}_{\A}$ is denoted by $\mathcal{S}(\A)$.
We denote by $\I_{\A}$ the identity operator on $\mathcal{H}_{\A}$, i.e., $\I_{\A} = \sum_{i} |i\rangle\langle i|_{\A}$ for any orthonormal basis $\{|i\rangle_{\A}\}_i$ of $\mathcal{H}_{\A}$.
Sometimes we omit the system subscript if we do not specify it.
Given a bipartite state $\rho_{\A\B} \in \mathcal{S}(\A\B)$, the marginal state of system $\A$ is denoted by $\rho_{\A}$, obtained by tracing out system $\B$: $\rho_{\A} = \Tr_{\B}\left[ \rho_{\A\B}\right] \coloneqq\sum_{i} (\I_{\A} \otimes \langle i \rvert_{\B})\rho_{\A\B} (\I_{\A}\otimes \lvert i\rangle_{\B})$, where $\{\lvert i\rangle_{\B}\}_i$ is any orthonormal basis of $\mathcal{H}_{\B}$. 
If we do not specify the subscript of $\Tr$, we mean that all quantum systems are traced out.
We denote the adjoint by $\dagger$.

For a scalar-valued function $f$ and a normal operator $A$ with spectral decomposition $A = \sum_i \lambda_i |i\rangle\langle i|$ whose spectrum $\{\lambda_i\}_i$ (denoted by $\spec (A)$) lies in the domain of $f$, we define the operator $f(A)$ via functional calculus: $f(A) \coloneqq\sum_i f(\lambda_i) \lvert i\rangle \langle i\rvert$.
Note that logarithms and negative powers are taken on the relevant supports; $A^0$ means the projection operator onto $\supp(A)$.
The operator norm and trace norm of $A$ are denoted by 
$\|A\|_{\infty} \coloneqq \sup_{\|v\|=1} \|Av\|$,
and $\|A\|_1 \coloneqq\Tr[\sqrt{A^\dagger A}]$, respectively.
We adopt L\"owner's partial order on self-adjoint operators; $A>B$ (resp.~$A\geq B$) for self-adjoint $A$ and $B$ means that $A-B>0$ (resp.~$A-B\geq 0$) is a positive definite (resp.~positive semidefinite) operator.
For a self-adjoint operator $X$ with spectral decomposition $X = \sum_i \lambda_i |i\rangle \langle i|$, we define the orthogonal projection onto its positive support by
$\left\{ X > 0 \right\} 
\coloneqq\sum_{i\colon \lambda_i>0}  |i\rangle \langle i|$.
Similarly, $\left\{ X \geq 0 \right\} 
\coloneqq\sum_{i\colon \lambda_i\geq0}  |i\rangle \langle i|$ and $\left\{ X = 0 \right\} 
\coloneqq\sum_{i\colon \lambda_i=0}  |i\rangle \langle i|$.


\begin{table}[t!]
	\centering
	\resizebox{1\columnwidth}{!}{
		\begin{tabular}{@{}cccr@{}}

        \toprule
		\addlinespace

        \textbf{Task} & \textbf{Codebook} & \multicolumn{2}{c}{\textbf{Error upper bound}}

        \\
        
        \midrule
        
        \multirow{4}{4cm}{\centering Classical-quantum channel coding}
        & \multirow{2}{*}{$\forall$ i.i.d.~codebook $p_{\X}$} & \multirow{2}{*}{$c_{\alpha}\cdot 2^{-n \frac{1-\alpha}{\alpha} \left[ I_{\alpha}(\X:\B)_{\rho} - R \right] }$} & \multirow{2}{*}{(Theorem~\ref{theo:CQ})}

        \\

         &  &  &

        \\

        \arrayrulecolor{black!50} \cmidrule{2-2} \cmidrule(lr){3-4}
        
        &  constant-composition & \multirow{2}{*}{$\mathcal{O}(n^{|\X|})\cdot 2^{-n \frac{1-\alpha}{\alpha} \left[ {I}^{\text{Aug}}_{\alpha}(q;\,\mathscr{N}) - R \right] }$ } &\multirow{2}{*}{(Theorem~\ref{theo:CC})}

        \\

        & codes ($\forall$ $n$-type $q_{\X}$) &  &

        \\

        \arrayrulecolor{black}\midrule

        \multirow{6}{4.5cm}{\centering Source coding with quantum side information}
        & i.i.d.~sources $\rho_{\X\B}$ & \multirow{2}{*}{$c_{\alpha}\cdot 2^{-n \frac{1-\alpha}{\alpha} \left[ R - H_{\alpha}(\X
        \,\mid\, \B)_{\rho} \right] }$} & \multirow{2}{*}{(Theorem~\ref{theo:CQSW_iid})}

        \\

        & (fixed-length) & &

        \\

        \arrayrulecolor{black!50}\cmidrule{2-2} \cmidrule(lr){3-4}

         & constant-type $q_{\X}$ & \multirow{2}{*}{$\mathcal{O}(n^{|\X|})\cdot 2^{-n \frac{1-\alpha}{\alpha} \left[ R - H(\X)_{q} + {I}^{\text{Aug}}_{\alpha}(q;\,x\mapsto \rho_{\B}^x)  \right] }$ }& \multirow{2}{*}{(Theorem~\ref{theo:CQSW_cc})}

        \\

        & (fixed-length) & &

        \\

        \arrayrulecolor{black!50}\cmidrule{2-2} \cmidrule(lr){3-4}

        & i.i.d.~source $\rho_{\X\B}$& \multirow{2}{*}{$\mathcal{O}(n^{|\X|})\cdot 2^{-n \frac{1-\alpha}{\alpha} \left[ \bar{R} - H(\X)_{p} + {I}^{\text{Aug}}_{\alpha}(p;\,x\mapsto \rho_{\B}^x)  \right] }$ } &\multirow{2}{*}{(Theorem~\ref{theo:CQSW_variable})}

        \\

        & (variable-length) & &

        \\

        \arrayrulecolor{black}\midrule

        \multirow{3}{4cm}{\centering Unassisted classical communication over quantum channels} & \multirow{3}{*}{$\forall$ ensemble $\rho_{\X^n\A^n}$} & \multirow{3}{*}{$c_{\alpha} \cdot
        2^{
        -n\frac{1-\alpha}{\alpha} \left[ \frac{1}{n} I_{\alpha} (\X^n : \B^n)_{\mathscr{N}^{\otimes n}(\rho)} - R  \right]
        }$ } & \multirow{3}{*}{(Theorem~\ref{theo:CQ_unassisted})}

        \\

        & & &

        \\

        & & & 

        \\

        \arrayrulecolor{black}\midrule

        \multirow{3}{4.5cm}{\centering Entanglement-assisted classical communication over quantum channels} & \multirow{3}{*}{$\forall$ entanglement $\theta_{\mathsf{R}^n\A^n}$} & \multirow{3}{*}{$c_{\alpha} \cdot
        2^{
        -n\frac{1-\alpha}{\alpha} \left[ \frac{1}{n}I_{\alpha} (\mathsf{R}^n : \B^n)_{\mathscr{N}^{\otimes n}(\theta)} - R  \right]
        }$ } & \multirow{3}{*}{(Theorem~\ref{theo:EA})}

        \\

        & & &

        \\

        & & & 

        \\

        \arrayrulecolor{black}\midrule

        \multirow{3}{4.5cm}{\centering Entanglement-assisted quantum communication over quantum channels} & \multirow{3}{*}{$\forall$ entanglement $\theta_{\mathsf{R}^n\A^n}$} & \multirow{3}{*}{$c_{\alpha} \cdot
        2^{
        -n\frac{1-\alpha}{\alpha} \left[ \frac{1}{n}I_{\alpha} (\mathsf{R}^n : \B^n)_{\mathscr{N}^{\otimes n}(\theta)} - 2R  \right]
        }$ } & \multirow{3}{*}{(Theorem~\ref{theo:EAQ})}

        \\

        & & &

        \\

        & & & 

        \\

        \arrayrulecolor{black}\bottomrule
        
        \end{tabular}
	}
	\caption{
		\small 
        Summary of the established finite-blocklength error exponents for various quantum packing-type problems.
        All error (upper) bounds hold for all $\alpha \in [\nicefrac{1}{2},1]$ and all blocklengths $n\in\mathds{N}$; the prefactor $c_{\alpha}$ is uniformly bounded by $1.102$.
        For fixed-length coding, the rate is defined as $R\coloneqq\frac{1}{n}\log_2 |\mathsf{M}|$.
        For variable-length source coding, the average rate is defined in \eqref{eq:rate_CQSW_variable}.
	}	\label{table:main}	
\end{table}

\begin{table}[ht!]
	\centering
	\resizebox{1\columnwidth}{!}{
		\begin{tabular}{@{}lccccccccccc@{}}

        &  
        \multicolumn{5}{c}{\cellcolor{gray!10} Burnashev--Holevo’s 1998 conjecture for c-q channels}
        & &
        \multicolumn{5}{c}{\cellcolor{gray!10} Beyond Burnashev--Holevo’s conjecture}

        \\
        \addlinespace
        
        \toprule
		\addlinespace

         & \multirow{2}{*}{One-shot} & Asymptotically  & \multirow{2}{*}{$\forall$ Inputs} & Infinite &\multirow{2}{*}{Prefactor} & & c.~c.~ & CQSW & CQSW & Fully quantum & Entanglement


         \\

        & & tight$^*$ & & dimension & & & codes & fixed-length & variable-length & channels & assistance

        \\
        
        \cmidrule(r){1-7} \cmidrule{8-12}
        \addlinespace

        {\small{Burnashev--Holevo \!\cite{BH98}}} & {\color{ForestGreen}\ding{51}} & Pure-state c-q & {\color{ForestGreen}\ding{51}} & {\color{ForestGreen}\ding{51}} & $2$ & & & & & & 

        \\

        \arrayrulecolor{black!35}
        \cmidrule(rr){1-7} \cmidrule{8-12}
        
        Hayashi \cite{Hay07} & {\color{ForestGreen}\ding{51}} & {\color{Red}\ding{53}} & {\color{ForestGreen}\ding{51}} & {\color{Red}\ding{53}} & $4$ & & & & & &

        \\

        \arrayrulecolor{black!35}
        \cmidrule(rr){1-7} \cmidrule{8-12}
        
        Cheng \cite{Cheng_simple} & {\color{ForestGreen}\ding{51}} & {\color{Red}\ding{53}} & {\color{ForestGreen}\ding{51}} & {\color{ForestGreen}\ding{51}} & $1$ & & & & & &

        \\

        \cmidrule(rr){1-7} \cmidrule{7-12}

        Renes \cite{Ren23} & {\color{ForestGreen}\ding{51}} & Symmetric c-q & {\color{Red}\ding{53}} & {\color{Red}\ding{53}} & $ \frac{\alpha \cdot \nu_{\B}^{ \frac{\alpha+1}{\alpha}  } }{1-\alpha}$ & &  & {\color{ForestGreen}\ding{51}} & & &

        \\

        \cmidrule(rr){1-7} \cmidrule{8-12}

        {\footnotesize{Beigi--Tomamichel \cite{BT24}}} & {{\color{ForestGreen}\ding{51}}} & {{\color{Red}\ding{53}}} & {{\color{ForestGreen}\ding{51}}} & {\color{Red}\ding{53}} &{$1$} & & & & & & 
        
        \\

        \cmidrule(rr){1-7} \cmidrule{8-12}

        Renes \cite{Ren25}, & \multirow{2}{*}{{\color{Red}\ding{53}}} & \multirow{2}{*}{{\color{ForestGreen}\ding{51}}} & \multirow{2}{*}{{\color{ForestGreen}\ding{51}}} & \multirow{2}{*}{{\color{Red}\ding{53}}} & \multirow{2}{*}{$\textrm{poly}(n)$} & & & & & & 

        \\

        Li--Yang \cite{LY25} & & & & & & & & & & &

        \\
    
        \cmidrule(rr){1-7} \cmidrule{8-12}

        \cellcolor{gray!15}& \cellcolor{gray!15}& \cellcolor{gray!15}& \cellcolor{gray!15}& \cellcolor{gray!15}& \cellcolor{gray!15}& \cellcolor{gray!15}& \cellcolor{gray!15}& \cellcolor{gray!15}& \cellcolor{gray!15}& \cellcolor{gray!15}& \cellcolor{gray!15}
        
        \\

        \cellcolor{gray!15} \multirow{-2}{*}{\textbf{This Work}} & \cellcolor{gray!15} \multirow{-2}{*}{{\color{ForestGreen}\ding{51}}} & \cellcolor{gray!15} \multirow{-2}{*}{{\color{ForestGreen}\ding{51}}} & \cellcolor{gray!15} \multirow{-2}{*}{{\color{ForestGreen}\ding{51}}} &
        \cellcolor{gray!15} \multirow{-2}{*}{{\color{ForestGreen}\ding{51}}} & \cellcolor{gray!15} \multirow{-2}{*}{$c_{\alpha} \!<\! 1.102$}  & \cellcolor{gray!15} & \cellcolor{gray!15} \multirow{-2}{*}{{\color{ForestGreen}\ding{51}}} & \cellcolor{gray!15} \multirow{-2}{*}{{\color{ForestGreen}\ding{51}}} & \cellcolor{gray!15} \multirow{-2}{*}{{\color{ForestGreen}\ding{51}}} & \cellcolor{gray!15} \multirow{-2}{*}{{\color{ForestGreen}\ding{51}}} & \cellcolor{gray!15} \multirow{-2}{*}{{\color{ForestGreen}\ding{51}}}

        \\
        
        
        \arrayrulecolor{black}\bottomrule
        
        \end{tabular}
	}
	\caption{
		\small 
        Comparison with prior work on Burnashev and Holevo's conjecture.
        The left block shows classical-quantum (c-q) channels, and the right block shows extensions of Burnashev and Holevo's conjecture to other quantum packing-type problems.
        Here, \emph{asymptotic tightness}$^*$ is with respect to certain critical-rate regions.
        The coefficient $\nu_{\B}$ is the number of distinct eigenvalues of an operator on system $\B$.
	}	\label{table:comparison}	
\end{table}

\begin{table}[htbp]
	\centering
	\resizebox{1\columnwidth}{!}{
		\begin{tabular}{@{}cccccc@{}}

        \toprule

         & \multicolumn{2}{c}{\cellcolor{gray!10} \textbf{Error Exponent}} & & \multicolumn{2}{c}{\cellcolor{gray!10} \textbf{Strong Converse Exponent}} 

        \\

        \arrayrulecolor{black!55}
        \cmidrule{2-3}\cmidrule{5-6}

        & Achievability & Converse & & Achievability & Converse

        \\

        \arrayrulecolor{black}\cmidrule{1-1}\cmidrule{2-3} \cmidrule{5-6} 

        
        \cellcolor{orange!5}& \multicolumn{2}{c}{\multirow{2}{*}{$\sup\limits_{\alpha \in [0,1]} (1-\alpha) D_{\alpha}(\rho\Vert\sigma)$}}
        &
        & \multicolumn{2}{c}{\multirow{2}{*}{$0$}}

        \\

        \cellcolor{orange!5}& & & & &
        
        \\

        \multirow{-3}{3.5cm}{\cellcolor{orange!5}\centering Symmetric binary hypothesis testing}& \cite{ACM+07, ANS+08, JOP+12, beigi2025some, LHC25_layer_cake} & \cite{NS09, ANS+08} & & &

        \\
        
        \arrayrulecolor{black!45}\cdashlinelr{1-6}

        \cellcolor{orange!5}& \multicolumn{2}{c}{\multirow{2}{*}{$\min_{i\neq j}\sup\limits_{\alpha \in [0,1]} (1-\alpha) D_{\alpha}(\rho_i \Vert \rho_j)$}}
        &
        & \multicolumn{2}{c}{\multirow{2}{*}{$0$}}

        \\

        \cellcolor{orange!5}& & & & &
        
        \\

        \multirow{-3}{3.5cm}{\cellcolor{orange!5}\centering Symmetric multiple hypothesis testing}& \cite{Li16} & \cite{NS09, ANS+08} & & &

        \\

        \arrayrulecolor{black!45}\cdashlinelr{1-6}

        \cellcolor{orange!5}& \multicolumn{2}{c}{\multirow{2}{*}{$\sup\limits_{\alpha \in [0,1]} \frac{1-\alpha}{\alpha} \left[ D_{\alpha}(\rho\Vert\sigma) - R \right]$}}
        &
        & \multicolumn{2}{c}{\multirow{2}{*}{$\sup\limits_{\alpha > 1} \frac{\alpha-1}{\alpha} \left[ R - \widetilde{D}_{\alpha}(\rho\Vert\sigma) \right]$}}

        \\

        \cellcolor{orange!5}& & & & &
        
        \\

        \multirow{-3}{3.5cm}{\cellcolor{orange!5}\centering Asymmetric binary hypothesis testing}& \cite{ACM+07, ANS+08, JOP+12, beigi2025some, LHC25_layer_cake} & \cite{NS09, ANS+08} & & \multicolumn{2}{c}{\cite{MO14}}

        \\
        
        \arrayrulecolor{black!45}\cdashlinelr{1-6}

        \cellcolor{orange!5}& \multirow{2}{*}{$\sup\limits_{\alpha \in [\nicefrac{1}{2},1]} \frac{1-\alpha}{\alpha}\left[ I_{\alpha}(\X:\B)_{\rho} -R \right]$}
        & \multirow{3}{*}{\textbf{?}}
        &
        & \multirow{3}{*}{\textbf{?}}
        & \multirow{2}{*}{$\sup\limits_{\alpha > 1} \frac{\alpha-1}{\alpha}\left[ R - \widetilde{I}_{\alpha}(\X:\B)_{\rho} \right]$}

        \\

        \cellcolor{orange!5}& & & & &
        
        \\

        \multirow{-3}{3.5cm}{\cellcolor{orange!5}\centering Classical-quantum channel coding (i.i.d.~codes)}& (Theorem~\ref{theo:CQ}) & & & & \cite{CG24}
        
        \\

        \arrayrulecolor{black!45}\cdashlinelr{1-6}
        
        \cellcolor{orange!5}& \multirow{2}{*}{$\sup\limits_{\alpha \in [\nicefrac{1}{2},1]} \frac{1-\alpha}{\alpha}\left[ {I}^{\text{Aug}}_{\alpha}(p;\mathscr{N}) -R \right]$}
        & \multirow{2}{*}{$\sup\limits_{\alpha \in (0,1]} \frac{1-\alpha}{\alpha}\left[ {I}^{\text{Aug}}_{\alpha}(p;\mathscr{N}) -R \right]$}
        &
        & \multicolumn{2}{c}{\multirow{2}{*}{$\sup\limits_{\alpha > 1} \frac{\alpha-1}{\alpha}\left[ R - \widetilde{I}^{\text{Aug}}_{\alpha}(p;\mathscr{N}) \right]$}}

        \\

        \cellcolor{orange!5}& & & & &
        
        \\

        \multirow{-3}{3.5cm}{\cellcolor{orange!5}\centering Classical-quantum channel coding (c.~c.~codes)}& (Theorem~\ref{theo:CC}) &  \cite{DW14,CHT19,SC25} & & \cite{MO18} &{\cite{CHDH2-2018}}
        
        \\

        \arrayrulecolor{black!45}\cdashlinelr{1-6}
        
        \cellcolor{orange!5}& \multirow{2}{*}{$\sup\limits_{\alpha \in [\nicefrac{1}{2},1]} \frac{1-\alpha}{\alpha}\left[ I_{\alpha}(\mathscr{N}_{\X\to\B}) -R \right]$}
        & \multirow{2}{*}{$\sup\limits_{\alpha \in (0,1]} \frac{1-\alpha}{\alpha}\left[ I_{\alpha}(\mathscr{N}_{\X\to\B}) -R \right]$}
        &
        & \multicolumn{2}{c}{\multirow{2}{*}{$\sup\limits_{\alpha > 1} \frac{\alpha-1}{\alpha}\left[ R - \widetilde{I}_{\alpha}(\mathscr{N}_{\X\to\B})  \right]$}}

        \\

        \cellcolor{orange!5}& & & & &
        
        \\

        \multirow{-3}{3.5cm}{\cellcolor{orange!5}\centering Classical-quantum channel coding (optimal codes)} & (Thms.~\ref{theo:CQ}, \ref{theo:CC}) \cite{Ren25, LY25} &  \cite{Dal13, DW14,CHT19,SC25} & & \cite{MO14, MO18} & \cite{MO14, WWY14, CG24}
        
        \\

        \arrayrulecolor{black!45}\cdashlinelr{1-6}
        
        \cellcolor{orange!5}& \multicolumn{2}{c}{\multirow{2}{*}{$\sup\limits_{\alpha \in (0,1]} \frac{1-\alpha}{\alpha}\left[ I_{\alpha}(\mathscr{N}_{\X\to\B}) -R \right]$ \;(w/ \!activation)}}
        &
        & \multicolumn{2}{c}{\multirow{2}{*}{$\sup\limits_{\alpha > 1} \frac{\alpha-1}{\alpha}\left[ R - \widetilde{I}_{\alpha}(\mathscr{N}_{\X\to\B})  \right]$}}

        \\

        \cellcolor{orange!5}& & & & &
        
        \\

        \multirow{-3}{3.5cm}{\cellcolor{orange!5}\centering Non-signaling (NS) classical-quantum channel coding} & \multicolumn{2}{c}{\cite{ATB24}} & & \multicolumn{2}{c}{\cite{AB24}}
        
        \\        

        \arrayrulecolor{black!45}\cdashlinelr{1-6}
        
        \cellcolor{orange!5}& \multirow{2}{*}{$\sup\limits_{\alpha \in [\nicefrac{1}{2},1]} \frac{1-\alpha}{\alpha}\left[ \chi_{\alpha}^{\text{reg}}(\mathscr{N}_{\A\to\B}) -R \right]$}
        & \multirow{3}{*}{\textbf{?}}
        & 
        & \multirow{3}{*}{\textbf{?}} & \multirow{3}{*}{\textbf{?}}

        \\

        \cellcolor{orange!5}& & & & &
        
        \\

        \multirow{-3}{3.5cm}{\cellcolor{orange!5}\centering Unassisted classical communication over quantum channels} & (Theorem~\ref{theo:CQ_unassisted})  &   & & &
        
        \\

        \arrayrulecolor{black!45}\cdashlinelr{1-6}
        
        \cellcolor{orange!5}& \multirow{2}{*}{$\sup\limits_{\alpha \in [\nicefrac{1}{2},1]} \frac{1-\alpha}{\alpha}\left[ I^{\text{reg}}_{\alpha}(\mathscr{N}_{\A\to\B}) -R \right]$}
        & \multirow{3}{*}{\textbf{?}}
        & 
        & \multicolumn{2}{c}{\multirow{2}{*}{$\sup\limits_{\alpha > 1} \frac{\alpha-1}{\alpha}\left[ R - \widetilde{I}_{\alpha}(\mathscr{N}_{\A\to\B}) \right]$}}

        \\

        \cellcolor{orange!5}& & & & &
        
        \\

        \multirow{-3}{3.5cm}{\cellcolor{orange!5}\centering EA classical communication over quantum channels} & (Theorem~\ref{theo:EA})  &   & & \cite{LY24b, AB24} & \cite{GW14}
        
        \\

        \arrayrulecolor{black!45}\cdashlinelr{1-6}
        
        \cellcolor{orange!5}& \multirow{2}{*}{$\sup\limits_{\alpha \in [\nicefrac{1}{2},1]} \frac{1-\alpha}{\alpha}\left[ I^{\text{reg}}_{\alpha}(\mathscr{N}_{\A\to\B}) - 2R \right]$}
        & \multirow{3}{*}{\textbf{?}}
        & 
        & \multicolumn{2}{c}{\multirow{2}{*}{$\sup\limits_{\alpha > 1} \frac{\alpha-1}{\alpha}\left[ 2R - \widetilde{I}_{\alpha}(\mathscr{N}_{\A\to\B}) \right]$}}

        \\

        \cellcolor{orange!5}& & & & &
        
        \\

        \multirow{-3}{3.5cm}{\cellcolor{orange!5}\centering EA quantum communication over quantum channels} & (Theorem~\ref{theo:EAQ})  &   & & \cite{LY24b, AB24} & \cite{GW14}
        
        \\        

        \arrayrulecolor{black!45}\cdashlinelr{1-6}
        
        \cellcolor{orange!5}& \multirow{3}{*}{$\sup\limits_{\alpha \in (0,1]} \!\!\frac{1-\alpha}{\alpha}\!\left[ I^{\text{reg}}_{\alpha}(\mathscr{N}_{\A\to\B}) \!-\!R \right]$ (w/ \!activation)}
        & \multirow{3}{*}{\textbf{?}}
        & 
        & \multicolumn{2}{c}{\multirow{2}{*}{$\sup\limits_{\alpha > 1} \frac{\alpha-1}{\alpha}\left[ R - \widetilde{I}_{\alpha}(\mathscr{N}_{\A\to\B}) \right]$}}

        \\

        \cellcolor{orange!5}& & & & &
        
        \\

        \multirow{-3}{3.5cm}{\cellcolor{orange!5}\centering NS classical communication over quantum channels} & \cite{ATB24} &   & & \multicolumn{2}{c}{\cite{AB24}}
        
        \\        

        \arrayrulecolor{black!45}\cdashlinelr{1-6}
        
        \cellcolor{orange!5}& \multirow{2}{*}{$\sup\limits_{\alpha \in [\nicefrac{1}{2},1]} \frac{1-\alpha}{\alpha}\left[ R- H_{\alpha}(\X\mid\B)_{\rho}  \right]$}
        & \multirow{2}{*}{$\sup\limits_{\alpha \in (0,1]} \frac{1-\alpha}{\alpha}\left[ R-H_{\alpha}(\X\mid\B)_{\rho} \right]$}
        &
        & \multicolumn{2}{c}{\multirow{2}{*}{$\sup\limits_{\alpha > 1} \frac{\alpha-1}{\alpha}\left[ \widetilde{H}_{\alpha}(\X\mid\B)_{\rho} - R\right]$}}

        \\

        \cellcolor{orange!5}& & & & &
        
        \\

        \multirow{-3}{3.5cm}{\cellcolor{orange!5}\centering Source coding with QSI (i.i.d.~sources)} & (Theorem~\ref{theo:CQSW}) \cite{Ren23} &  \cite{CHDH-2018} & & \multicolumn{2}{c}{\cite{CHDH-2018}}
        
        \\

        \arrayrulecolor{black!45}\cdashlinelr{1-6}
        
        \cellcolor{orange!5}& \multirow{2}{*}{$\sup\limits_{\alpha \in [\nicefrac{1}{2},1]} \frac{1-\alpha}{\alpha}\left[ R - H(\X)_q + {I}^{\text{Aug}}_{\alpha}(q;\mathscr{N}) \right]$}
        & \multirow{2}{*}{$\sup\limits_{\alpha \in (0,1]} \frac{1-\alpha}{\alpha}\left[ R - H(\X)_q + {I}^{\text{Aug}}_{\alpha}(q;\mathscr{N}) \right]$}
        &
        & \multicolumn{2}{c}{\multirow{2}{*}{$\sup\limits_{\alpha > 1} \frac{\alpha-1}{\alpha}\left[ H(\X)_q - \widetilde{I}^{\text{Aug}}_{\alpha}(q;\mathscr{N}) - R  \right]$}}

        \\

        \cellcolor{orange!5}& & & & &
        
        \\

        \multirow{-3}{3.5cm}{\cellcolor{orange!5}\centering Source coding with QSI (constant type)} & (Theorem~\ref{theo:CQSW_cc})  &  \cite{CHDH2-2018} & & \multicolumn{2}{c}{\cite{CHDH2-2018}}
        
        \\

        \arrayrulecolor{black!45}\cdashlinelr{1-6}
        
        \cellcolor{orange!5}& \multirow{2}{*}{$\sup\limits_{\alpha \in [\nicefrac{1}{2},1]} \frac{1-\alpha}{\alpha}\left[ \bar{R} - H(\X)_p + {I}^{\text{Aug}}_{\alpha}(p;\mathscr{N}) \right]$}
        & \multirow{2}{*}{$\sup\limits_{\alpha \in (0,1]} \frac{1-\alpha}{\alpha}\left[ \bar{R} - H(\X)_p + {I}^{\text{Aug}}_{\alpha}(p;\mathscr{N}) \right]$}
        &
        & \multicolumn{2}{c}{\multirow{2}{*}{$0$}}

        \\

        \cellcolor{orange!5}& & & & &
        
        \\

        \multirow{-3}{3.5cm}{\cellcolor{orange!5}\centering Source coding with \!QSI (variable \!length)} & (Theorem~\ref{theo:CQSW_variable})  &  \cite{CHDH2-2018} & & \multicolumn{2}{c}{\cite{CHDH2-2018}}
        
        \\

                \arrayrulecolor{black!45}\cdashlinelr{1-6}

        \cellcolor{orange!5}& \multicolumn{2}{c}{\multirow{3}{*}{$\lim_{n\to\infty} \frac{1}{n} \sup_{\alpha\in(0,1]} \frac{1-\alpha}{\alpha}\left[\inf_{\sigma_n \in \text{SEP}_n} D_{\alpha}\left(\rho^{\otimes n} \Vert \sigma_n \right) -  R \right]$}}
        &
        & \multicolumn{2}{c}{\multirow{3}{*}{$\lim_{n\to\infty} \frac{1}{n} \sup_{\alpha>1} \frac{\alpha-1}{\alpha}\left[R - \inf_{\sigma_n \in \text{SEP}_n} \widetilde{D}_{\alpha}\left(\rho^{\otimes n} \Vert \sigma_n \right) \right]$}}

        \\

        \cellcolor{orange!5}& & & & &
        
        \\

        \cellcolor{orange!5}& & & & &
        
        \\

        \multirow{-4}{3.2cm}{\cellcolor{orange!5}\centering Non-entangling-assisted entanglement distillation} & \multicolumn{2}{c}{\cite{lin2026entanglement}} & & \multicolumn{2}{c}{\cite{lin2026entanglement}}

        \\
        
    \arrayrulecolor{black}\bottomrule
    \end{tabular}
	}
	\caption{
    Summary of the exponent analysis for various quantum packing-type problems (highlighted in \colorbox{orange!10}{orange})
    in quantum information theory.
    The regularized quantities are denoted by ``\text{reg}'' in the superscript (see e.g.~\eqref{eq:Holevo-reg}).
    All information quantities with ``$\widetilde{{\color{white}tt}}$'' are defined with respect to the sandwiched \Renyi divergence \cite{MDS+13,WWY14}.
    Assistance with entanglement (resp.~non-signaling) is abbreviated by EA (resp.~NS).
    We omit expurgated bounds because of space limitations.
    We denote by the ``\textbf{?}'' results that are not yet established to the best of our knowledge.
	}	\label{table:survey_packing}	
\end{table}

\begin{table}[htbp]
	\centering
	\resizebox{1\columnwidth}{!}{
		\begin{tabular}{@{}cccccc@{}}

        \toprule

         & \multicolumn{2}{c}{\cellcolor{gray!10} \textbf{Error Exponent}} & & \multicolumn{2}{c}{\cellcolor{gray!10} \textbf{Strong Converse Exponent}} 

        \\

        \arrayrulecolor{black!55}
        \cmidrule{2-3}\cmidrule{5-6}

        & Achievability & Converse & & Achievability & Converse

        \\

        \arrayrulecolor{black}\cmidrule{1-1}\cmidrule{2-3} \cmidrule{5-6} 

        \cellcolor{violet!5}& \multirow{2}{*}{$\sup\limits_{\alpha \in [1,2]} \frac{\alpha-1}{\alpha}\left[ R -  \widetilde{I}_{\alpha}(\X:\B)_{\rho} \right]$}
        & \multirow{3}{*}{\textbf{?}}
        &
        & \multirow{3}{*}{\textbf{?}}
        & \multirow{2}{*}{$\sup\limits_{\alpha \in [0, 1]} (1-\alpha)\left[ I_{\alpha}(\X:\B)_{\rho} -R \right]$}

        \\

        \cellcolor{violet!5}& & & & &
        
        \\

        \multirow{-3}{3.2cm}{\cellcolor{violet!5}\centering Classical-quantum soft covering, i.i.d. (trace distance)} & \cite{CG22} & & & & \cite{CG22}
        
        \\

        \arrayrulecolor{black!45}\cdashlinelr{1-6}
        
        \cellcolor{violet!5}& \multirow{2}{*}{$\sup\limits_{\alpha \in [1,2]} \frac{\alpha-1}{\alpha}\left[ R -  \widetilde{I}^{\text{Aug}}_{\alpha}(p;\mathscr{N}) \right]$}
        & \multirow{3}{*}{\textbf{?}}
        &
        & \multirow{3}{*}{\textbf{?}}
        & \multirow{2}{*}{$\sup\limits_{\alpha \in [0, 1]} (1-\alpha)\left[ {I}^{\text{Aug}}_{\alpha}(p;\mathscr{N}) -R \right]$}

        \\

        \cellcolor{violet!5}& & & & &
        
        \\

        \multirow{-3}{3.2cm}{\cellcolor{violet!5}\centering Classical-quantum soft covering, c.~c. (trace distance)} & \cite{CG22} & & & & \cite{CG22}
        
        \\

        \arrayrulecolor{black!45}\cdashlinelr{1-6}

        \cellcolor{violet!5}& \multirow{2}{*}{$\sup\limits_{\alpha \in [1,2]} \frac{\alpha-1}{2}\left[ R -  \widetilde{D}_{\alpha}(\rho_{\X\B}\Vert \rho_{\X}\otimes \rho_{\B}) \right]$}
        & \multirow{3}{*}{\textbf{?}}
        &
        & \multirow{3}{*}{\textbf{?}}
        & \multirow{3}{*}{\textbf{?}}

        \\

        \cellcolor{violet!5}& & & & &
        
        \\

        \multirow{-3}{3.5cm}{\cellcolor{violet!5}\centering Classical-quantum soft \!covering \!(purified distance)\!\!\!} & \cite{sharp25} & & & & 
        
        \\

        \arrayrulecolor{black!45}\cdashlinelr{1-6}
        
        \cellcolor{violet!5}& \multirow{2}{*}{$\sup\limits_{\alpha \in [1,2]} \frac{\alpha-1}{\alpha}\left[ R -  \inf_{\sigma_{\B}} \widetilde{D}_{\alpha}(\rho_{\A\B}\Vert \tau_{\A}\otimes \sigma_{\B}) \right]$}
        & \multirow{3}{*}{\textbf{?}}
        &
        & \multirow{3}{*}{\textbf{?}}
        & \multirow{2}{*}{$\sup\limits_{\alpha \in [0,1]} \!(1-\alpha)\left[   D_{\alpha}(\rho_{\A\B}\Vert \tau_{\A}\otimes \rho_{\B} ) \!-\! R \right]$}

        \\

        \cellcolor{violet!5}& & & & &
        
        \\

        \multirow{-3}{3cm}{\cellcolor{violet!5}\centering Convex splitting (trace distance)}& \cite{CG22b} & & & & \cite{CG22b}
        
        \\

        \arrayrulecolor{black!45}\cdashlinelr{1-6}
        
        \cellcolor{violet!5}& \multirow{2}{*}{$\sup\limits_{\alpha \in [1,2]} \frac{\alpha-1}{2}\left[ R -  \widetilde{D}_{\alpha}(\rho_{\A\B}\Vert \tau_{\A}\otimes \rho_{\B}) \right]$}
        & \multirow{3}{*}{\textbf{?}}
        &
        & \multirow{3}{*}{\textbf{?}}
        & \multirow{3}{*}{\textbf{?}}

        \\

        \cellcolor{violet!5}& & & & &
        
        \\

        \multirow{-3}{3cm}{\cellcolor{violet!5}\centering Convex splitting \!\!(purified distance)\!}& \cite{sharp25} & & & & 
        
        \\

        \arrayrulecolor{black!45}\cdashlinelr{1-6}
        
        \cellcolor{violet!5}& \multirow{2}{*}{$\sup\limits_{\alpha \in [1,2]} \frac{\alpha-1}{\alpha}\left[  \widetilde{H}_{\alpha}(\X\mid\E)_{\rho} - R \right]$}
        & \multirow{3}{*}{\textbf{?}}
        &
        & \multirow{3}{*}{\textbf{?}}
        & \multirow{3}{*}{$\sup\limits_{\alpha \in [0, 1]} (1-\alpha) \left[ R - H_{\alpha}^{\downarrow}(\X\mid\E)_{\rho} \right]$}

        \\

        \cellcolor{violet!5}& & & & &
        
        \\

        \multirow{-3}{3.5cm}{\cellcolor{violet!5}\centering Privacy~amplification against QSI (trace distance)}& \cite{Dup21,CDG24,regula2026rethinking} & & & & \cite{SGC22a}
        
        \\

        \arrayrulecolor{black!45}\cdashlinelr{1-6}

        \cellcolor{violet!5}& \multirow{3}{*}{$\sup\limits_{\alpha \in [1,2]} \frac{\alpha-1}{2}\left[  \widetilde{H}_{\alpha}^{\downarrow}(\X\mid\E)_{\rho} - R \right]$}
        & \multirow{3}{*}{$\sup\limits_{\alpha>1} \frac{\alpha-1}{2}\left[ \widetilde{H}_{\alpha}^{\downarrow}(\X\mid\E)_{\rho} - R \right]$}
        &
        & \multicolumn{2}{c}{\multirow{3}{*}{$\sup\limits_{\alpha \in [\nicefrac{1}{2}, 1]} \frac{1-\alpha}{\alpha} \left[ R - \widetilde{H}_{\alpha}(\X\mid\E)_{\rho} \right]$ (squared fidelity)}}

        \\

        \cellcolor{violet!5}& & & & &

        \\

        \cellcolor{violet!5}& & & & &
        
        \\

         \multirow{-4}{3.4cm}{\cellcolor{violet!5}\centering Privacy~amplification against QSI (purified distance) (non-composable)}& \cite{Hay15_PA, KL21, sharp25} & \cite{KL21}& & \multicolumn{2}{c}{\cite{LY24a}} 
        
        \\

        \arrayrulecolor{black!45}\cdashlinelr{1-6}

        \cellcolor{violet!5}& \multirow{2}{*}{$\sup\limits_{\alpha \in [1,2]} \frac{\alpha-1}{2}\left[  \widetilde{H}_{\alpha}^{\downarrow}(\X\mid\E)_{\rho} - R \right]$}
        & \multirow{2}{*}{$\sup\limits_{\alpha>1} \frac{\alpha-1}{2}\left[\widetilde{H}_{\alpha}^{\downarrow}(\X\mid\E)_{\rho} - R \right]$}
        &
        & \multicolumn{2}{c}{\multirow{2}{*}{$\sup\limits_{\alpha \in [\nicefrac{1}{2}, 1]} \frac{1-\alpha}{\alpha} \left[ R - \widetilde{H}_{\alpha}^{ \frac{1-2\alpha}{1-\alpha} }(\X\mid\E)_{\rho} \right]$ (squared fidelity)}}

        \\

        \cellcolor{violet!5}& & & & &
        
        \\

         \multirow{-3}{3.4cm}{\cellcolor{violet!5}\centering Privacy~amplification against QSI (purified distance)}& \cite{Hay15_PA, KL21, sharp25} & \cite{KL21}& & \multicolumn{2}{c}{\cite{rubboli2026sc}} 
        
        \\

        \arrayrulecolor{black!45}\cdashlinelr{1-6}
        
        \cellcolor{violet!5}& \multirow{2}{*}{$\sup\limits_{\alpha \in [1,2]} \frac{\alpha-1}{\alpha}\left[ 2R - \log|\A| + \widetilde{H}_{\alpha}(\A\!\mid\!\E)_{\rho} \right]$}
        & \multirow{3}{*}{\textbf{?}}
        &
        & \multirow{3}{*}{\textbf{?}}
        & \multirow{2}{*}{$\sup\limits_{\alpha \in [0, 1]} (1\!-\!\alpha) \left[ \log |\A|  - H_{\alpha}(\A\!\mid\!\E)_{\rho}\! - 2R \right]$}

        \\

        \cellcolor{violet!5}& & & & &

        \\

        \multirow{-3}{3cm}{\cellcolor{violet!5}\centering Quantum\! decoupling \,\,\,(trace distance)}& \cite{Dup21,CDG24,regula2026rethinking} & & & & \cite{CDG24}
        
        \\

        \arrayrulecolor{black!45}\cdashlinelr{1-6}

        \cellcolor{violet!5}& \multirow{2}{*}{$\sup\limits_{\alpha\in[1,2]} \frac{\alpha-1}{2}\left[ 2R - \widetilde{I}_{\alpha}(\E\!:\!\A)_{\rho}  \right]$}
        & \multirow{2}{*}{$\sup \limits_{\alpha>1} \frac{\alpha-1}{2}\left[ 2R- \widetilde{I}_{\alpha}(\E\!:\!\A)_{\rho}  \right]$}
        &
        & \multicolumn{2}{c}{\multirow{2}{*}{$\sup\limits_{\alpha \in[\nicefrac{1}{2},1] } \!\!\frac{1-\alpha}{\alpha}\!\left[  \widetilde{I}^{\downarrow\downarrow,\text{reg}}_{\alpha}(\E\!:\!\A)_{\rho} \!-\! 2R \right]$ (squared fidelity)}}

        \\

        \cellcolor{violet!5}& & & & &
        
        \\

        \multirow{-3}{3.2cm}{\cellcolor{violet!5}\centering Catalytic quantum decoupling (purified distance)} & \cite{LY21a, sharp25} & \cite{LY21a}& & \multicolumn{2}{c}{\cite{LY24b}} 
        
        \\

        \arrayrulecolor{black!45}\cdashlinelr{1-6}

        \cellcolor{violet!5}& \multirow{2}{*}{$\sup\limits_{\alpha\in[1,2]} \frac{\alpha-1}{2}\left[ 2R - \log |\A| + \widetilde{H}_{\alpha}^{\downarrow}(\A\!\mid\!\E)_{\rho}  \right]$}
        & \multirow{2}{*}{$\sup \limits_{\alpha>1} \frac{\alpha-1}{2}\left[ 2R - \log |\A| + \widetilde{H}_{\alpha}^{\downarrow}(\A\!\mid\!\E)_{\rho} \right]$}
        &
        & \multirow{2}{*}{\textbf{?}} & \multirow{2}{*}{\textbf{?}}

        \\

        \cellcolor{violet!5}& & & & &
        
        \\

        \multirow{-3}{3.2cm}{\cellcolor{violet!5}\centering Quantum decoupling (purified distance)} & \cite{LY21a, sharp25} & \cite{LY21a}& & \multicolumn{2}{c}{\cite{LY24b}} 
        
        \\

        \arrayrulecolor{black!45}\cdashlinelr{1-6}

        \cellcolor{violet!5}& \multicolumn{2}{c}{\multirow{2}{*}{{\color{white}tttttttt}$\sup\limits_{\alpha>1} \frac{\alpha-1}{2}\left[ R - \widetilde{I}_{\alpha}(\mathscr{N}_{\X\to\B})  \right]$ \;(\text{no critical rate})}}
        &
        & \multicolumn{2}{c}{\multirow{2}{*}{$\sup\limits_{\alpha\in[\nicefrac{1}{2},1]} \frac{1-\alpha}{\alpha}\left[  \widetilde{I}_{\alpha}(\mathscr{N}_{\X\to\B}) -R \right]$}}

        \\

        \cellcolor{violet!5}& & & & &
        
        \\

        \multirow{-3}{3.2cm}{\cellcolor{violet!5}\centering EA/NS c-q channel simulation (purified \!distance)} & \cite{LY21b, AYB24} & \cite{AYB24} & &  \multicolumn{2}{c}{\cite{AYB24}}

        \\

        \arrayrulecolor{black!45}\cdashlinelr{1-6}

        \cellcolor{violet!5}& \multirow{2}{*}{$\sup\limits_{\alpha\in[1,2]} \frac{\alpha-1}{2}\left[ R - \widetilde{I}_{\alpha}(\mathscr{N}_{\A\to\B})  \right]$}
        & \multirow{2}{*}{$\sup\limits_{\alpha>1} \frac{\alpha-1}{2}\left[ R - \widetilde{I}_{\alpha}(\mathscr{N}_{\A\to\B})  \right]$}
        &
        & \multirow{3}{*}{\textbf{?}}
        & \multirow{3}{*}{\textbf{?}}

        \\

        \cellcolor{violet!5}& & & & &
        
        \\

        \multirow{-3}{3.2cm}{\cellcolor{violet!5}\centering EA quantum channel simulation (purified distance)} & \cite{LY21b, sharp25} & \cite{LY21b}& &  &

        \\

        \arrayrulecolor{black!45}\cdashlinelr{1-6}

        \cellcolor{violet!5}& \multirow{3}{*}{$\sup\limits_{\alpha\in[\nicefrac{1}{2},1)}\!\frac{1-\alpha}{2\alpha}\!\left[ \lim\limits_{n\to\infty} \frac{1}{n} \!\sup\limits_{\mathscr{L}:\A^n:\B^n\to\mathsf{C}:\mathsf{D} } {I}_{\alpha} (\mathsf{C}\rangle\mathsf{D})_{\mathscr{L}(\rho_{\A\B}^{\otimes n})} - R \right]$}
        & \multirow{4}{*}{\textbf{?}}
        &
        & \multirow{4}{*}{\textbf{?}}
        & \multirow{4}{*}{\textbf{?}}

        \\

        \cellcolor{violet!5}& & & & &
        
        \\

        \cellcolor{violet!5}& & & & &
        
        \\

        \multirow{-4}{3.2cm}{\cellcolor{violet!5}\centering One-way LOCC entanglement distillation} & \cite{berta2026any-shot} & & &  &

        \\

        \arrayrulecolor{black!45}\cdashlinelr{1-6}

        \cellcolor{violet!5}& \multirow{3}{*}{$\sup\limits_{\alpha\in[\nicefrac{1}{2},1)}\! \frac{1-\alpha}{2\alpha}\!\left[ \lim\limits_{n\to\infty} \frac{1}{n} \!\sup\limits_{\phi_{\underline{\A}^n \A^n }} {I}_{\alpha}(\underline{\A}^n\rangle\B^n)_{\mathscr{N}^{\otimes n}(\phi_{\underline{\A}^n \A^n })} \!-\! R \right]$}
        & \multirow{3}{*}{\textbf{?}}
        &
        & \multirow{4}{*}{\textbf{?}}
        & \multirow{4}{*}{\textbf{?}}

        \\

        \cellcolor{violet!5}& & & & &
        
        \\

        \cellcolor{violet!5}& & & & &
        
        \\

        \multirow{-4}{3.2cm}{\cellcolor{violet!5}\centering Unassisted quantum communication} & \cite{berta2026any-shot} & & &  &

        \\
        
        \arrayrulecolor{black}\bottomrule
        \end{tabular}
	}
	\caption{
    Summary of the exponent analysis for various quantum covering-type problems (highlighted in \colorbox{violet!10}{violet})
    in quantum information theory.
    All quantities with ``$\widetilde{{\color{white}tt}}$'' are defined with respect to the sandwiched \Renyi divergence \cite{MDS+13,WWY14}.
    The rate for privacy amplification (resp.~quantum decoupling) is with respect to the remaining (resp.~discarded) system.
    Note the following different notions of conditional \Renyi entropies:
    $H_{\alpha}^{\downarrow}(\A\!\mid\!\E)_{\rho} \coloneqq - {D}_{\alpha}(\rho_{\A\E}\Vert\!\I_{\A}\otimes \rho_{\E})$ (and its sandwiched version is defined similarly);
    $\widetilde{H}_{\alpha}^{\lambda}(\A\!\mid\!\E)_{\rho}$ was defined in \cite{rubboli2024lambda} (see also \cite{li2025two-parameter}).
    Assistance with entanglement (resp.~non-signaling) is abbreviated by EA (resp.~NS).
    We omit expurgated bounds because of space limitations.
    We denote by the ``\textbf{?}'' results that are not yet established to the best of our knowledge.
	}	\label{table:survey_covering}	
\end{table}

\newpage
\section{Quantum Hypothesis Testing and Tilting} \label{sec:QHT}

\emph{Quantum hypothesis testing}, or equivalently, \emph{quantum state discrimination}, lies at the core of quantum information science, as it serves as a primitive tool for characterizing various fundamental quantum information-processing tasks.
The problem setup is the following.
The state of the quantum system is modeled by a density operator $\rho_{\B}^x \in \mathcal{S}(\B)$ on a Hilbert space $\mathcal{H}_{\B}$ with prior probability $p_{\X}(x)$.
The aim is to guess the true index $x \in \X$ with high probability by performing quantum measurements, which are modeled by positive operator-valued measures (POVMs) $\{ \Lambda_{\B}^x \}_{x\in\X}$, i.e., $\Lambda_{\B}^x \geq 0$ and $\sum_{x\in\X} \Lambda_{\B}^x = \I_{\B}$.
The minimum error probability of discrimination among the states is given by the optimization:
\begin{align}
    \varepsilon(\X\mid\B)_{\rho}
    \coloneqq1 -
    \sup_{ \text{POVM } \{ \Lambda_{\B}^x \}_{x\in\X} } \sum_{x\in\X} p_{\X}(x) \Tr\left[ \rho_{\B}^x \Lambda_{\B}^x \right],
\end{align}
where $\rho_{\X\B} = \sum_{x\in\X} p_{\X}(x) |x\rangle \langle x |\otimes \rho_{\B}^x$ is the joint classical-quantum state that describes the problem.

From the early developments by Holevo and Helstrom \cite{Hel67, Hol72, Hol78}, it is well known that the above optimization can be expressed as a semidefinite program \cite{YKL75, KRS09, AM14, Wat18}.
Unfortunately, closed-form expressions for the optimal measurements and the maximum success probability are generally unavailable, except for binary hypotheses (i.e., $|\X|=2$).
This makes the performance analysis of quantum state discrimination difficult.
Moreover, the computational cost of finding the optimal performance becomes quite high even for dozens of qubits, which makes practical implementations challenging.\footnote{
If the underlying states $\{\rho_{\B}^x\}_{x\in\X}$ are $n$-fold identical product states, then there is an efficient algorithm for computing the optimal success probabilities by employing the state symmetry; see e.g.~\cite[\S B of Supplementary Material]{sample_complexity_25}.
}

Recent work on quantum hypothesis testing often uses suboptimal measurements.
One important example is the conventional pretty-good measurement (PGM) with respect to $\left\{ p_{\X}(x), \rho_{\B}^x \right\}_{x\in\X}$ \cite{Bel75, HW94}:
\begin{align}
    {\Pi}_{\B}^{x} \coloneqq
    \left( {\rho}_{\B} \right)^{-\nicefrac{1}{2}} p_{\X}(x)\rho_{\B}^x  \left( {\rho}_{\B} \right)^{-\nicefrac{1}{2}}
    + p_{\X}(x) \I_{\B^{\perp}}, \quad \forall\, x\in \X,
\end{align}
where $\rho_{\B} = \sum_{x\in\X} p_{\X}(x) \rho_{\B}^x$ is the marginal state on system $\B$ of $\rho_{\X\B}$, and $\I_{\B^{\perp}}$ denotes the projection onto the kernel of the ensemble, i.e., $p_{\X}(x)\rho_{\B}^x \I_{\B^{\perp}} = 0$ for all $x\in\X$.
In the commuting case, PGM corresponds to the \emph{stochastic likelihood decoder}, in which the probability of deciding a hypothesis (given an observation) is proportional to the \textit{a posteriori distribution}.
To ensure resolution of unity (i.e., $\sum_{x\in\X} \Pi_{\B}^x = \I_{\B}$), PGM includes the sandwiched term $({\rho}_{\B})^{-\nicefrac{1}{2}}\cdot ({\rho}_{\B})^{-\nicefrac{1}{2}}$, which acts as the denominator of a quotient.
One may also define a PGM by replacing each positive
operator by its $\alpha$-power via functional calculus, i.e., $A\leftarrow A^{\alpha}$ and $B\leftarrow B^{\alpha}$, $\alpha > 0$; we call this the ``$\alpha$-PGM''.
PGMs are useful in quantum information theory because the resulting suboptimal error is at most twice the optimal error $\varepsilon(\X\mid\B)_{\rho}$ \cite{BK02}, \cite[Theorem 3.10]{Wat18}.
We refer the reader to~\cite{Cheng_simple} for more details.

Due to noncommutativity, there is no unique way to define the \emph{quotient} in the quantum setting.
To the best of our knowledge, Lieb observed that the directional derivative of the logarithmic function at a positive definite operator $X>0$ in a Hermitian direction $Y = Y^{\dagger}$ also serves as a noncommutative quotient~\cite{Lie73}:
\begin{align} \label{eq:quotient}
\frac{Y}{X} \equiv \mathrm{D} \log [X] (Y) 
\coloneqq\lim_{t\to 0} \frac{\log(X+tY) - \log X}{t}.
\end{align}
Hence, one may use it to define a variant of the PGM:
\begin{align} \label{eq:integral-PGM-multiple}  
    \mathring{\Pi}_{\B}^{x} \coloneqq\frac{ p_{\X}(x)\rho_{\B}^x  }{ \rho_{\B}  } + p_{\X}(x)\I_{\B^{\perp}}, \quad \forall\, x\in \X.
\end{align}
Some integral representations are known for the quotient $\frac{Y}{X}$ (Fact~\ref{item:Dlog_integral}).
We therefore call \eqref{eq:integral-PGM-multiple} the ``integral PGM'' (and the ``integral $\alpha$-PGM'' for the $\alpha$-powered version).
See Appendix~\ref{sec:log} for more details.
Recently, Beigi and Tomamichel derived several useful properties and promoted the use of \eqref{eq:integral-PGM-multiple} \cite{BT24}; see also \cite{Umlaut, Qumlaut}.

The success probabilities using the conventional PGM $\{{\Pi}_{\B}^{x}\}_{x\in\X}$ and the integral PGM $\{\mathring{\Pi}_{\B}^{x}\}_{x\in\X}$ can be written as
\begin{align}
    \sum_{x\in \X} p_{\X}(x) \Tr\left[  \rho_{\B}^x \cdot \Pi_{\B}^x \right]
    &= \widetilde{Q}_2 \left( \rho_{\X\B} \Vert \I_{\X}\otimes \rho_{\B} \right);
    \\
    \sum_{x\in \X} p_{\X}(x) \Tr\left[  \rho_{\B}^x \cdot \mathring{\Pi}_{\B}^x \right]
    &= \mathring{Q}_2 \left( \rho_{\X\B} \Vert \I_{\X}\otimes \rho_{\B} \right),
\end{align}
where 
\begin{align}
    \widetilde{Q}_2(A\Vert B) \coloneqq\Tr\left[ \left( B^{-\nicefrac{1}{4}} A B^{-\nicefrac{1}{4}} \right)^2 \right],
    \quad
    \mathring{Q}_2(A\Vert B) \coloneqq\Tr\left[ A \frac{A}{B} \right].
\end{align}
The former is called the \emph{quasi-collision divergence} \cite{Ren05}, and the latter is defined similarly using the logarithmic-derivative quotient (see e.g.~\cite{HP13}).

It is known in the quantum information geometry community that $\widetilde{Q}_2 \geq \mathring{Q}_2 $ \cite{LR99,TKR+10} (we provide a proof in Proposition~\ref{prop:PGMs} below for completeness), which implies that the error probability of using conventional PGM is no larger than that of the integral one.
In other words, a tight error estimate for the integral PGM also ensures performance guarantees for the conventional PGM.

\medskip
\noindent\fbox{\begin{minipage}{1\textwidth}
In practice, conventional PGMs may be adopted as they yield better performance and can be implemented by applying known quantum algorithms \cite{GLM+22}, whereas the integral PGMs serve as a convenient proxy for error analysis below.
\end{minipage}}
\smallskip

\begin{prop}[Relation between conventional and integral PGMs] \label{prop:PGMs}
For any $A\geq 0$ and $C>0$,
\begin{align}
    \Tr\left[ A C^{-\nicefrac{1}{2}} A C^{-\nicefrac{1}{2}} \right] \geq 
    \Tr\left[ A \cdot \mathrm{D} \log [C] (A) \right].
\end{align}

In particular, for any classical-quantum state $\rho_{\X\B}$,
\begin{align}
    \widetilde{Q}_2 \left( \rho_{\X\B} \Vert \I_{\X}\otimes \rho_{\B} \right)
    \geq 
    \mathring{Q}_2 \left( \rho_{\X\B} \Vert \I_{\X}\otimes \rho_{\B} \right).
\end{align}
\end{prop}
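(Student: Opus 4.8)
The plan is to diagonalize $C$ and rewrite both traces as quadratic forms in the matrix entries of $A$, whose kernels turn out to be (the reciprocals of) the geometric mean and the logarithmic mean of the eigenvalues of $C$, respectively; the proposition then collapses to the classical fact that the geometric mean never exceeds the logarithmic mean. First I would invoke an integral representation of the directional derivative of the logarithm (Fact~\ref{item:Dlog_integral}), e.g.\ the resolvent form
\begin{align}
  \mathrm{D}\log[C](A) \;=\; \int_0^\infty (C+s)^{-1} A\, (C+s)^{-1}\, \d s ,
\end{align}
so that $\Tr\!\left[ A\,\mathrm{D}\log[C](A)\right] = \int_0^\infty \Tr\!\left[ A (C+s)^{-1} A (C+s)^{-1}\right]\, \d s$; equivalently one could read off the matrix entries of $\mathrm{D}\log[C](A)$ directly from the divided-difference formula for derivatives of operator functions.

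Next, writing $C=\sum_i c_i\,|i\rangle\!\langle i|$ with $c_i>0$ and $A_{ij} = \langle i|A|j\rangle$, and using that $A=A^{\dagger}$ forces $A_{ij}A_{ji}=|A_{ij}|^2\ge 0$, I would expand both quantities in this eigenbasis as diagonal sums,
\begin{align}
  \Tr\!\left[ A C^{-\nicefrac{1}{2}} A C^{-\nicefrac{1}{2}}\right]
  = \sum_{i,j} \frac{|A_{ij}|^2}{\sqrt{c_i c_j}} ,
  \qquad
  \Tr\!\left[ A\,\mathrm{D}\log[C](A)\right]
  = \sum_{i,j} |A_{ij}|^2 \int_0^\infty \frac{\d s}{(c_i+s)(c_j+s)} ,
\end{align}
and evaluate the elementary integral as $\int_0^\infty (c_i+s)^{-1}(c_j+s)^{-1}\,\d s = \frac{\log c_i-\log c_j}{c_i-c_j}$, read as $1/c_i$ when $c_i=c_j$, which is precisely the reciprocal logarithmic mean of $c_i,c_j$. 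Subtracting the two expressions, the statement reduces to the termwise inequality $\frac{1}{\sqrt{c_ic_j}}\ge \frac{\log c_i-\log c_j}{c_i-c_j}$, i.e.\ $\sqrt{ab}\le \frac{a-b}{\log a-\log b}$ for all $a,b>0$ (the geometric--logarithmic mean inequality), after which summing against the nonnegative weights $|A_{ij}|^2$ finishes the proof. The ``in particular'' assertion is then the special case $A=\rho_{\X\B}\ge 0$ and $C=\I_{\X}\otimes\rho_{\B}$ restricted to $\supp(\I_{\X}\otimes\rho_{\B})$, where $C$ is positive definite and $C^{-\nicefrac{1}{2}},\log C$ are well defined, together with the identities $\widetilde{Q}_2(\rho_{\X\B}\Vert\I_{\X}\otimes\rho_{\B})=\Tr[\rho_{\X\B}\,C^{-\nicefrac{1}{2}}\rho_{\X\B}\,C^{-\nicefrac{1}{2}}]$ and $\mathring{Q}_2(\rho_{\X\B}\Vert\I_{\X}\otimes\rho_{\B})=\Tr[\rho_{\X\B}\,\mathrm{D}\log[C](\rho_{\X\B})]$.

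The only genuine content here is the scalar mean inequality, so I do not expect a serious obstacle; what needs a little care is the degenerate case $c_i=c_j$ (the divided difference must be taken as the limit $1/c_i$, and one should check the integral still converges there), and, for the operator application, the reduction to the common support so that the negative power and the operator logarithm make sense. An alternative, basis-free packaging is to view both traces as double operator integrals of $A$ against the symbols $(a,b)\mapsto 1/\sqrt{ab}$ and $(a,b)\mapsto \frac{\log a-\log b}{a-b}$; monotonicity of such integrals in the symbol then reduces the claim to the identical pointwise inequality.
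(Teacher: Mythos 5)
Your proof is correct and follows essentially the same route as the paper: diagonalize $C$, invoke Lieb's resolvent integral formula for $\mathrm{D}\log[C](A)$, expand both traces as sums of $|A_{ij}|^2$ against kernels given by the reciprocal geometric and logarithmic means of the eigenvalues, and conclude by the scalar geometric--logarithmic mean inequality. Your additional remarks about the $c_i=c_j$ limit and the support restriction in the ``in particular'' statement are correct clarifications that the paper leaves implicit.
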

\begin{proof}
Let $C = \sum_i \lambda_i |i\rangle \langle i |$ be the spectral decomposition of $C$.
By inspection, we have $C^{-\nicefrac{1}{2}} A C^{-\nicefrac{1}{2}} = \sum_{i,j} \frac{1}{\sqrt{\lambda_i \lambda_j}} \langle i|A|j \rangle \cdot |i\rangle \langle j|$.
Then, the left-hand side is 
\begin{align}
\Tr\left[ A C^{-\nicefrac{1}{2}} A C^{-\nicefrac{1}{2}} \right] 
&= \sum_{i,j}  \frac{1}{ \sqrt{\lambda_i \lambda_j}  }  \left| \langle i |A|j\rangle \right|^2.
\end{align}

On the other hand, via Lieb's integral formula for $\mathrm{D} \log [\,\cdot\,](\,\cdot\,)$ (Fact~\ref{item:Dlog_integral}), we have
\begin{align}
\mathrm{D} \log [C] (A) 
=\sum_{i,j} \int_0^\infty \frac{\d t}{(\lambda_i+t)(\lambda_j + t) } \langle i|A|j \rangle \cdot |i\rangle \langle j|
=\sum_{i,j} \frac{\log \lambda_i - \log \lambda_j}{\lambda_i - \lambda_j }  \langle i | A | j \rangle \cdot |i\rangle \langle j|.
\end{align}
The right-hand side is then
\begin{align}
    \Tr\left[ A \cdot \mathrm{D} \log [C] (A) \right]
    &=
    \sum_{i,j}  \frac{\log \lambda_j - \log \lambda_i}{\lambda_j - \lambda_i }
    \left|    \langle i |A | j \rangle \right|^2.
\end{align}
The proof is concluded because the logarithmic mean is greater than the geometric mean (see e.g.~\cite[Example 5.22]{HP14}), and taking reciprocals reverses the inequality:
\begin{align}
    \frac{1}{\sqrt{\lambda_i \lambda_j} }  
    \geq
    \frac{\log \lambda_i - \log \lambda_j}{\lambda_i - \lambda_j },
    \quad \forall\, \lambda_i, \lambda_j >0
\end{align}
(for $\lambda_i = \lambda_j$, we write $\nicefrac{(\log \lambda_i - \log \lambda_j)}{(\lambda_i - \lambda_j)} \equiv \nicefrac{1}{\lambda_i}$, and both sides are equal.)
\end{proof}

\medskip
Next, let us turn our attention to binary hypothesis testing because a general recipe for analyzing one-shot quantum information processing is to reduce a task to the binary setting.
Suppose the null hypothesis and the alternative hypothesis are described by positive semidefinite operators $A$ and $B$, respectively.
Here, we consider a more general scenario without imposing the unit-trace constraint; one may think of $A$ and $B$ as density operators weighted by the prior probabilities (e.g., $p_{\X}(x) \rho_{\B}^x$).
The minimum error for distinguishing them is 
\begin{align}
\varepsilon^{\star}_{A \Vert B}
\coloneqq\inf_{ 0\leq T \leq \I} \Tr\left[A (\I-T)\right] + \Tr\left[ B T \right],
\end{align}
where the operator $0\leq T\leq \I$ is called a \emph{test} for deciding the null hypothesis.
Equivalently, $\{T, \I-T\}$ forms a two-outcome POVM.

Holevo and Helstrom showed that the minimum error is 
\begin{align} \label{eq:noncommutative_minimum}
    \varepsilon^{\star}_{A \Vert B}
    = \Tr\left[ A \wedge B \right]
    = \frac{ \Tr[A+B] }{2} - \frac{\left\| A - B \right\|_1}{2}
\end{align}
(here, $A \wedge B \coloneqq\argmax_{H=H^\dagger} \left\{\Tr[H]: H\leq A, H\leq B\right\}  = \frac12\left[ A+B-|A-B|\right]$ \cite{Cheng_simple}),
and the optimal test $T^{\star}_{A\Vert B}$ is achieved by the \emph{Holevo--Helstrom measurement}:\footnote{In the commuting case, the Holevo--Helstrom measurement is equal to the classical Neyman--Pearson test.}
\begin{align} \label{eq:HH}
    T^{\star}_{A\Vert B}
    = \left\{ A > B \right\} + \delta \cdot \left\{ A = B \right\},
\end{align}
where $\delta \in [0,1]$ is arbitrary and used to break ties at random.

Later, Audenaert \textit{et al.} proved that the optimal error admits a quantum Chernoff bound \cite[Theorem 1]{ACM+07}:
\begin{align} \label{eq:Chernoff}
\varepsilon^{\star}_{A \Vert B} \leq 2^{- (1-\alpha) D_{\alpha}\left(A\Vert B \right)}, \quad \forall\, \alpha \in [0,1],
\end{align}
in which the (non-normalized) order-$\alpha$ Petz--\Renyi divergence \cite{Pet86} is defined by 
\begin{align} \label{eq:Petz}
D_{\alpha}\left(A\Vert B \right)
\coloneqq\frac{1}{\alpha-1}\log_2 \Tr\left[ A^{\alpha} B^{1-\alpha}\right], \quad \forall\, \alpha \in (0,1)
\end{align}
for $A \not\perp B$ and is defined to be infinite otherwise; the order-$0$ and order-$1$ cases are defined via continuous extensions.
The Chernoff bound is remarkable because it applies to any $n$-fold independent and identically distributed (i.i.d.) scenario.
Namely, for $A\leftarrow p \rho^{\otimes n}$ and $B\leftarrow q \sigma^{\otimes n}$, where $n$ is the number of copies, $(p,q)$ are fixed ($n$-independent) prior probabilities, and $(\rho,\sigma)$ is a pair of states, then \eqref{eq:Chernoff} implies the following upper bound on the error probability for \emph{any} number of copies:
\[
\varepsilon^{\star}_{p \rho^{\otimes n} \Vert q \sigma^{\otimes n}} \leq 2^{- n \cdot\sup_{\alpha \in [0,1]} (1-\alpha) D_{\alpha}(\rho\Vert \sigma) }, \quad \forall\, n\in\mathds{N}.
\]
Later, the error exponent $\sup_{\alpha \in [0,1]} (1-\alpha) D_{\alpha}(\rho\Vert \sigma)$ was shown to be asymptotically optimal \cite{NS09, ANS+08}.
Moreover, the above results tell us that the prior probabilities contribute only an $n$-independent multiplicative factor $\sup_{\alpha\in[0,1]} p^{\alpha} q^{1-\alpha} \leq 1$ to $\varepsilon^{\star}$, and hence, they do not affect the error exponent $\sup_{\alpha \in [0,1]} (1-\alpha) D_{\alpha}(\rho\Vert \sigma)$.

We analyze the error probability using the integral PGM:
\begin{align} \label{eq:integral-PGM}
\left\{ \frac{A}{A+B}, \frac{B}{A+B} \right\}
\end{align}
because its error bound plays a fundamental role in characterizing various quantum packing-type problems below.
(Here, without loss of generality, we may restrict the Hilbert space to the joint support of $A$ and $B$ such that $A+B>0$.)

The main contribution of this section is to show that the integral PGM in \eqref{eq:integral-PGM}
admits an extremal decomposition in terms of Holevo--Helstrom measurements so that standard hypothesis-testing tools apply. 
The extremal decomposition is a special case of the \emph{operator layer cake theorem} for the derivatives of the operator logarithm (Theorem~\ref{theo:Dlog_formula}), detailed in Appendix~\ref{sec:layer-cake}.\footnote{
The extremal decomposition of an operator is in general not unique. 
For example, one immediately has
$\frac{B}{A+B} = \int_0^1 \{ u \I < \frac{B}{A+B} \} \, \d u$ by the classical layer cake representation \cite{LM01}.
However, the representation in Theorem~\ref{theo:extremal} will be needed for our purposes.
}

\begin{theo}[Extremal decomposition] \label{theo:extremal}
For all positive semidefinite operators $A$ and $B$ such that $A+B>0$, the following extremal decomposition holds for the quotient defined in \eqref{eq:quotient}:
\begin{align}
\frac{B}{A+B} 
=
\int_0^1 \left\{ u A < (1-u) B \right\} \d u.
\end{align}
\end{theo}
\begin{proof}
    Theorem~\ref{theo:Dlog_formula} with $A\leftarrow A+B$ and $B\leftarrow B$ shows that
    \begin{align}
        \frac{B}{A+B}
        = \int_0^{\infty} \proj{ u(A+B) < B} \d u
        - \int_{-\infty}^0 \proj{ u(A+B) > B} \d u.
    \end{align}
    Since $B\geq 0$ and $A+B>0$, we have $u(A+B) - B < 0$ for all $u<0$.
    Hence, the projection $\proj{u(A+B) > B}$ in the second integration is the zero operator.
    On the other hand, 
    $B - u(A+B) = (1-u) B - u A < 0$ for all $u > 1$.
    The projection $\proj{u(A+B) < B}$ in the first integration is the zero operator for all $u>1$, and hence we can restrict the integration interval from $[0,\infty)$ to $[0,1]$. 
\end{proof}

Theorem~\ref{theo:extremal} demonstrates an operational interpretation  of the integral PGM---one draws a parameter $u\in[0,1]$ uniformly at random and applies the Holevo--Helstrom measurement $\left\{ T_{uA\Vert(1-u)B}^{\star}, \I - T_{uA\Vert(1-u)B}^{\star} \right\}$ with the corresponding priors $(u,1-u)$ as illustrated in Figure~\ref{fig:Integral-PGM} below, i.e.,
\begin{align*}
\left\{ \frac{A}{A+B}, \frac{B}{A+B} \right\}
= \left\{  \int_{0}^1 T^{\star}_{uA\Vert (1-u)B} \d u, \int_{0}^1 \left( \I - T^{\star}_{uA\Vert (1-u)B} \right) \d u \right\}
= 
\int_{0}^1 \left\{ T^{\star}_{uA\Vert (1-u)B}, \I - T^{\star}_{uA\Vert (1-u)B} \right\} \! \d u.
\end{align*}

\begin{figure}[ht!]
	\centering
	\resizebox{1\columnwidth}{!}{
	\includegraphics{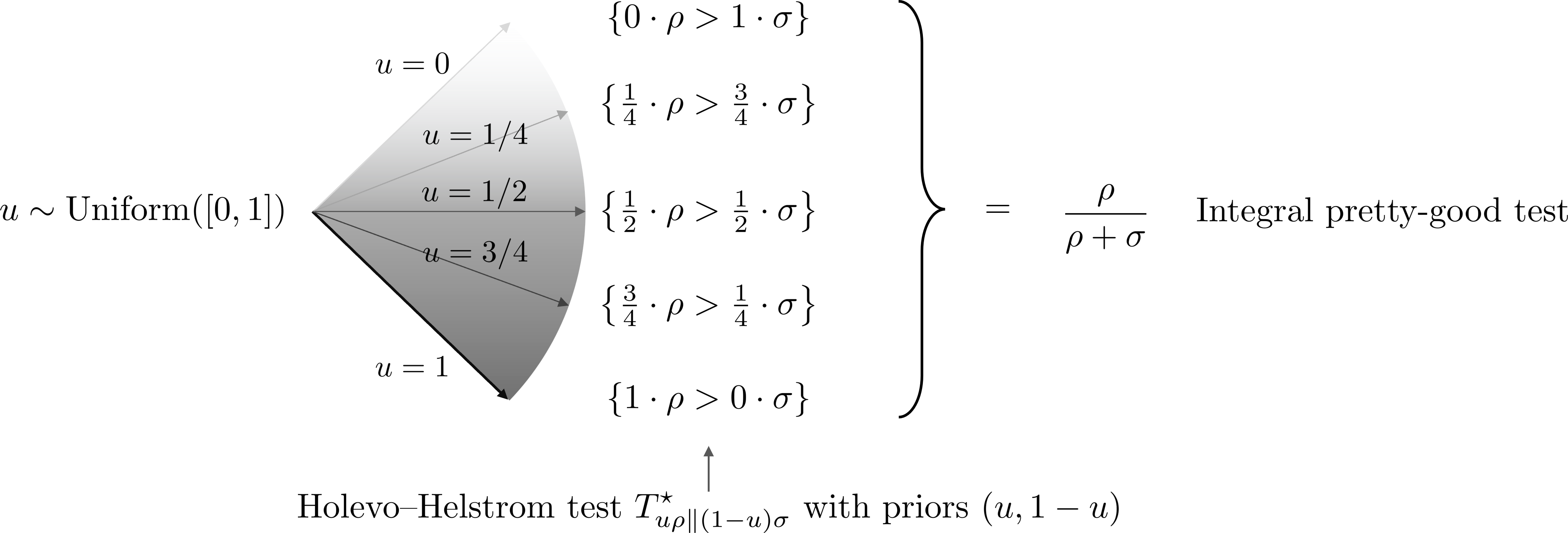}     	
	}
	\caption{\small
    A schematic illustration of the integral pretty-good test in \eqref{eq:integral-PGM} in terms of a randomized Holevo--Helstrom test \eqref{eq:HH} with the priors $(u,1-u)$, where $u$ is drawn uniformly at random.}
	\label{fig:Integral-PGM}
\end{figure}

\smallskip
\noindent\fbox{\begin{minipage}{1\textwidth}
The interpretation given in Theorem~\ref{theo:extremal} also provides an intuitive explanation of why both conventional and integral pretty-good measurements are pretty good (cf.~Proposition~\ref{prop:PGMs}).
Indeed, the Holevo--Helstrom measurement can achieve the optimal error exponent even under mismatched priors, because the wrong priors contribute only a constant multiplicative factor;
integrating the resulting constants over the integral PGM induces at most a constant multiplicative cost to the optimal error.
\end{minipage}}
\smallskip

With this observation, we obtain the following tilting inequality in Proposition~\ref{prop:key}, which serves as the main technical tool in our later analysis.

\begin{prop}[Tilting inequality] \label{prop:key}
	For all finite-dimensional positive semidefinite operators $A$ and $B$, the following holds:
	\begin{align}
	\Tr\left[ A \frac{B^{\alpha}}{A^{\alpha} + B^{\alpha}} \right]
	&\leq c_{\alpha} \cdot \Tr\left[ A^{\alpha} B^{1-\alpha} \right]
    \quad\forall\, \alpha \in [\nicefrac{1}{2},1],
	\end{align}
    where
    \begin{align} \label{eq:c_alpha}
    c_\alpha^{(1)} \coloneqq\frac{1-\alpha}{\alpha}\frac{\pi}{\sin\left(\frac{1-\alpha}{\alpha}\pi\right)}, 
    \quad
    c_\alpha^{(2)}\coloneqq\left( 2\alpha\right)^{-\frac{1}{\alpha}}
    \left(1-\frac{1}{2\alpha}\right)^{2-\frac{1}{\alpha}}
    \frac{\alpha}{1-\alpha}, 
    \quad
    c_{\alpha}\coloneqq\min \left\{c_\alpha^{(1)}, c_\alpha^{(2)} \right\}<1.102,
    \end{align}
    with the endpoint values understood by one-sided limits, namely
\(c_{\nicefrac{1}{2}}=c_{\nicefrac{1}{2}}^{(2)}\coloneqq1\) and
\(c_1=c_1^{(1)}\coloneqq1\).

\end{prop}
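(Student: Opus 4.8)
The plan is to peel the integral pretty-good measurement $\frac{B^\alpha}{A^\alpha+B^\alpha} = \mathrm{D}\log[A^\alpha+B^\alpha](B^\alpha)$ into Holevo--Helstrom layers, estimate each layer by an elementary ``tilted'' Audenaert-type inequality, and then integrate. First I would invoke the extremal decomposition of Theorem~\ref{theo:extremal} (itself a consequence of the operator layer cake theorem, Theorem~\ref{theo:Dlog_formula}) with $A \leftarrow A^\alpha$ and $B \leftarrow B^\alpha$ to write
\begin{align*}
\frac{B^\alpha}{A^\alpha+B^\alpha} \;=\; \int_0^1 \proj{u A^\alpha < (1-u) B^\alpha}\, \d u ,
\end{align*}
whence, by linearity of the trace,
\begin{align*}
\Tr\!\left[ A\, \frac{B^\alpha}{A^\alpha+B^\alpha} \right]
\;=\; \int_0^1 \Tr\!\left[ A\, \proj{u A^\alpha < (1-u) B^\alpha} \right] \d u .
\end{align*}

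The heart of the argument is a single per-layer estimate: for all $A,B\ge 0$, all $\alpha\in(0,1]$ and all $v>0$,
\begin{align*}
\Tr\!\left[ A\, \proj{A^\alpha < v B^\alpha} \right] \;\le\; v^{\frac{1-\alpha}{\alpha}}\, \Tr\!\left[ A^\alpha B^{1-\alpha} \right].
\end{align*}
Writing $C = A^\alpha$, $D = v B^\alpha$ and $\beta = \nicefrac{1}{\alpha}$, this is equivalent to the self-contained operator inequality $\Tr\!\left[C^\beta \proj{C<D}\right] \le \Tr\!\left[C D^{\beta-1}\right]$, needed only for $\beta\in[1,2]$ when $\alpha\in[\nicefrac{1}{2},1]$. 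The one structural input is that $\Pi := \proj{C<D} = \proj{D-C>0}$ commutes with $D-C$, so in the $2\times2$ block decomposition induced by $\Pi$ (write $X_{ij}$ for the blocks of $X$) the off-diagonal blocks of $C$ and $D$ agree, while $0\le \Pi C\Pi \le \Pi D\Pi$ and $0\le \Pi^\perp D\Pi^\perp \le \Pi^\perp C\Pi^\perp$. The endpoint $\beta=1$ is immediate, since $\Pi \le \proj{D>0}$; the endpoint $\beta=2$ follows from a short block computation which, using the matching off-diagonal blocks, reduces the claim to $\Tr\!\left[C_{11}^2\right]\le\Tr\!\left[C_{11}D_{11}\right]$, true because $0\le C_{11}\le D_{11}$. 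For general $\beta\in(1,2)$ I would either interpolate between these endpoints (a Stein-type three-lines argument in $\beta$), or feed the resolvent representations $C^\beta = \tfrac{\sin\pi(\beta-1)}{\pi}\int_0^\infty t^{\beta-2} C^2(C+t)^{-1}\,\d t$ and $D^{\beta-1} = \tfrac{\sin\pi(\beta-1)}{\pi}\int_0^\infty t^{\beta-2} D(D+t)^{-1}\,\d t$ into the problem, reducing it to the per-$t$ inequality $\Tr\!\left[C^2(C+t)^{-1}\Pi\right]\le\Tr\!\left[C\,D(D+t)^{-1}\right]$, to be attacked by the same block structure together with operator convexity of $x\mapsto x^2/(x+t)$ and operator concavity of $x\mapsto x/(x+t)$.

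Granting the per-layer bound, substituting $v = \nicefrac{(1-u)}{u}$ and integrating gives
\begin{align*}
\Tr\!\left[ A\, \frac{B^\alpha}{A^\alpha+B^\alpha} \right]
\;\le\; \Tr\!\left[A^\alpha B^{1-\alpha}\right] \int_0^1 \Big(\tfrac{1-u}{u}\Big)^{\frac{1-\alpha}{\alpha}}\,\d u ,
\end{align*}
and the Euler beta integral evaluates the prefactor to $\Gamma\!\big(1+\tfrac{1-\alpha}{\alpha}\big)\Gamma\!\big(1-\tfrac{1-\alpha}{\alpha}\big) = \tfrac{1-\alpha}{\alpha}\cdot\tfrac{\pi}{\sin\!\big(\tfrac{1-\alpha}{\alpha}\pi\big)} = c_\alpha^{(1)}$; in particular $c_1 = 1$, recovering Gallager's constant. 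Since this integral diverges as $\alpha\downarrow\nicefrac{1}{2}$, a complementary estimate is needed there, obtained by a sharper treatment of the layers with $v$ large (equivalently $u$ close to $0$, which is precisely where $c_\alpha^{(1)}$ degrades); this produces the alternative constant $c_\alpha^{(2)}$ of \eqref{eq:c_alpha}, which equals $1$ at $\alpha=\nicefrac{1}{2}$ and diverges as $\alpha\uparrow1$. Setting $c_\alpha := \min\{c_\alpha^{(1)},c_\alpha^{(2)}\}$, together with a one-variable calculus check that the two curves meet below $1.102$ on $[\nicefrac{1}{2},1]$, completes the proof.

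The difficulty is concentrated in the per-layer inequality $\Tr\!\left[C^\beta \proj{C<D}\right]\le\Tr\!\left[C D^{\beta-1}\right]$ for non-integer $\beta\in(1,2)$: the fractional powers interact badly with compression by $\Pi$ (operator convexity of $x\mapsto x^\beta$ moves $\Tr[C^\beta\Pi]$ in the wrong direction, so one cannot simply replace $C$ by $\Pi C\Pi$), which forces the detour through interpolation or integral representations, all while keeping every constant \emph{dimension-free}; it is this step, together with pinning down $c_\alpha^{(2)}$ near $\alpha=\nicefrac{1}{2}$, that dictates the somewhat intricate form of $c_\alpha$ in \eqref{eq:c_alpha}.
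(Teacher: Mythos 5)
Your derivation of the first bound $c_\alpha^{(1)}$ follows the paper's route: extremal decomposition from Theorem~\ref{theo:extremal} with $A\leftarrow A^\alpha$, $B\leftarrow B^\alpha$; a per-layer tilted estimate $\Tr[A\{A^\alpha<vB^\alpha\}]\le v^{\frac{1-\alpha}{\alpha}}\Tr[A^\alpha B^{1-\alpha}]$; and the beta-integral giving $\frac{1-\alpha}{\alpha}\frac{\pi}{\sin(\frac{1-\alpha}{\alpha}\pi)}$. The per-layer bound is correct, but you have left its proof as the main open step. It is exactly the combination of Audenaert \emph{et al.}'s inequality (Lemma~\ref{lemm:Audenaert}), $\Tr[A^{1+s}\{A<B\}]\le\Tr[AB^s\{A<B\}]$, followed by the short trace manipulation that removes the projection from the right-hand side. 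The paper simply cites Audenaert's lemma, which is a nontrivial published result. Your proposal instead attempts to reprove it (in the reparametrized form $\Tr[C^\beta\{C<D\}]\le\Tr[CD^{\beta-1}]$, $\beta\in[1,2]$) via interpolation or resolvent integral representations: you verify $\beta=1$ and $\beta=2$ and wave at the middle. Even the $\beta=2$ block computation is not quite as stated — $\Tr[C^2\Pi]=\Tr[C_{11}^2]+\Tr[D_{12}D_{21}]$ and $\Tr[CD]=\Tr[C_{11}D_{11}]+2\Tr[D_{12}D_{21}]+\Tr[C_{22}D_{22}]$, so the reduction to $\Tr[C_{11}^2]\le\Tr[C_{11}D_{11}]$ drops terms that happen to have favorable signs, which must be checked — and the Stein/resolvent route for non-integer $\beta$ is genuinely unclear and not carried through. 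If you simply invoke Audenaert's lemma here the first bound is complete; as written it is not.

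The more serious gap is $c_\alpha^{(2)}$, which you do not derive at all. You correctly observe that $c_\alpha^{(1)}$ blows up as $\alpha\downarrow\nicefrac{1}{2}$ and that a complementary estimate localized at small $u$ (large $v$) is needed, but "a sharper treatment of the layers" is not a proof. The paper's derivation of $c_\alpha^{(2)}$ is a structurally different argument: starting again from the extremal decomposition, it replaces the $\frac{1}{(1+v)^2}$ weight by $\kappa_\alpha v^{\frac{1}{\alpha}-2}$ via Young's inequality, then invokes the \emph{operator change-of-variables theorem} (Theorem~\ref{theo:change-of-variable}) to convert the layer integral into a Lieb-type resolvent integral, and finally applies an Araki-type inequality (Lemma~\ref{lemm:Araki}) with the function $g(x)=x^{1-1/\alpha}(1-tx)^{1/\alpha}$ to decouple $A$ and $B$. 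None of this is anticipated in the proposal, and it is precisely what keeps $c_\alpha$ bounded (by $1.102$) near $\alpha=\nicefrac{1}{2}$. Since the statement asserts a uniform dimension-free constant on all of $[\nicefrac{1}{2},1]$, the proposal does not establish it.
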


Figure~\ref{figure:c_alpha} numerically plots the multiplicative coefficients $c_\alpha^{(1)}$ and $c_\alpha^{(2)}$.

\begin{figure}[h]
\centering
\includegraphics[width=0.6\linewidth]{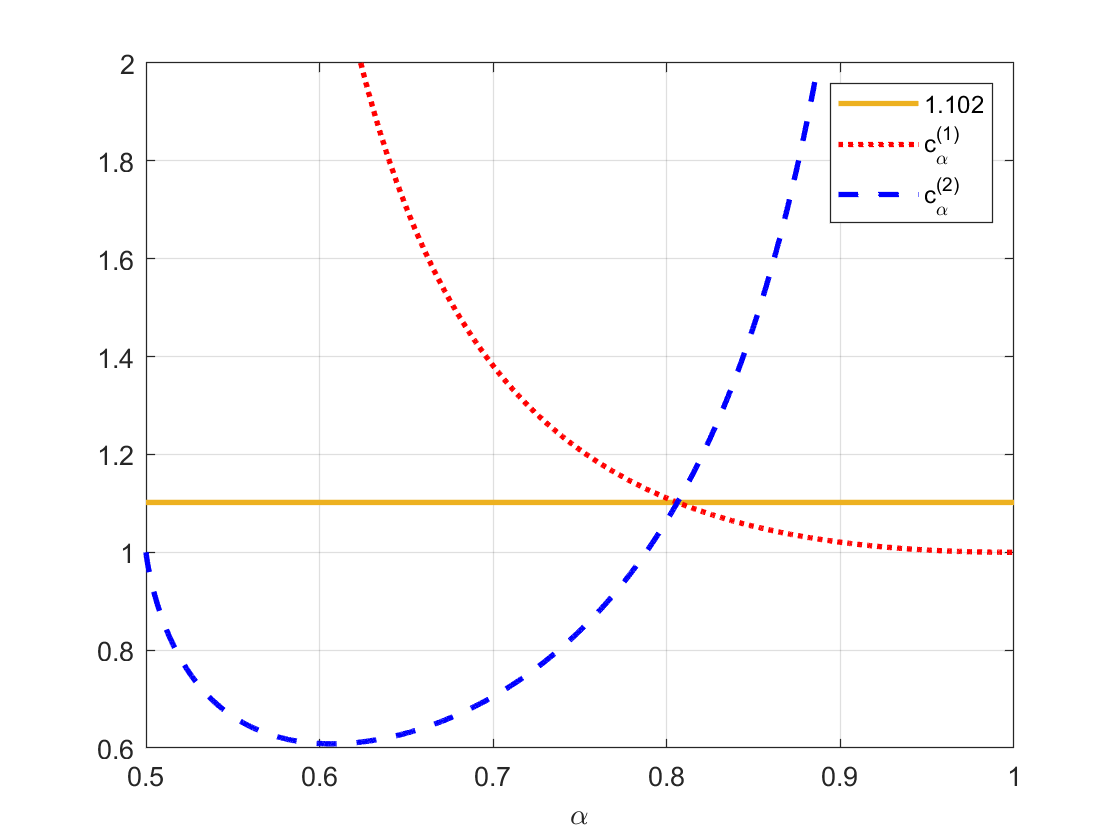}
	\caption{A numerical illustration of the coefficients $c_{\alpha}^{(1)}$ (dotted {\color{red}red} line) and $c_{\alpha}^{(2)}$ (dashed {\color{blue}blue} line) for $\alpha \in [\nicefrac{1}{2},1]$ in \eqref{eq:c_alpha} of Proposition~\ref{prop:key}.
    The maximum of the pointwise minimum $c_{\alpha}=\min\{c_{\alpha}^{(1)},c_{\alpha}^{(2)}\}$ over $\alpha\in[\nicefrac{1}{2},1]$ is $\approx 1.101911<1.102$.
	} \label{figure:c_alpha}
\end{figure} 

\begin{proof}
We will prove two bounds with different prefactors $c_{\alpha}^{(1)}$ and $c_{\alpha}^{(2)}$, respectively, and then take the smaller of the two, i.e.,~$c_{\alpha} =\min\left\{ c_{\alpha}^{(1)}, c_{\alpha}^{(2)}\right\}$.

We start with the first bound with the prefactor $c_{\alpha}^{(1)}$.
Via the extremal decomposition given in Theorem~\ref{theo:extremal}, we have, for any $\alpha \in (\nicefrac{1}{2},1]$,
\begin{align} \notag
		\Tr\left[ A \frac{B^{\alpha}}{A^{\alpha} + B^{\alpha}} \right]
		&= \int_{0}^{1} \Tr\left[ A \left\{  A^{\alpha} < \frac{1-u}{u} B^{\alpha} \right\}  \right] \mathrm{d} u
		\\
		&{\leq} \int_{0}^{1} \left(\frac{1-u}{u}\right)^{\frac{1-\alpha}{\alpha}} \Tr\left[ A^{\alpha} B^{1-\alpha} \left\{ A^{\alpha} < \frac{1-u}{u} B^{\alpha} \right\}\right] \mathrm{d} u, \quad \forall \,\alpha \in (\nicefrac{1}{2}, 1],
\end{align}
where we employ the inequality of Audenaert \textit{et al.} (Lemma~\ref{lemm:Audenaert} below).
The prefactor can be calculated by the residue theorem:
\begin{align} \notag
    c_{\alpha}^{(1)} 
    \coloneq
    \int_{0}^{1} \left(\frac{1-u}{u} \right)^{\frac{1-\alpha}{\alpha}} \d u
    =\int_0^\infty \frac{v^\frac{1-\alpha}{\alpha}}{\left(v+1\right)^2}\,\d v
    = \frac{1-\alpha}{\alpha}\frac{\pi}{\sin\left(\frac{1-\alpha}{\alpha}\pi\right)}\quad \forall \,\alpha \in (\nicefrac{1}{2}, 1)
\end{align} and $c_{1}^{(1)} \coloneqq \lim_{\alpha \nearrow 1} c_{\alpha}^{(1)} = 1$.
Furthermore, let $v = \frac{1-u}{u} \geq 0 $.
By the cyclic property of trace and its invariance under conjugate transpose,
\begin{align*}
\Tr\left[ A^{\alpha} B^{1-\alpha} \left\{ A^{\alpha} < v B^{\alpha} \right\} \right]
&= \Tr\left[ B^{1-\alpha}  A^{\alpha} \left\{ A^{\alpha} < v B^{\alpha} \right\} \right]
\\
&=   \Tr\left[ B^{1-\alpha} (A^{\alpha} - v B^{\alpha})\left\{ A^{\alpha} < v B^{\alpha} \right\} \right] + 
v\Tr\left[ B \left\{ A^{\alpha} < v B^{\alpha} \right\} \right]
\\
&\leq \Tr\left[ B^{1-\alpha} (A^{\alpha} - v B^{\alpha}) \right] +
v\Tr\left[ B \left\{ A^{\alpha} < v B^{\alpha} \right\} \right]
\\
&= \Tr\left[ A^{\alpha} B^{1-\alpha}\right] - v\Tr\left[ B \left( \I - \left\{  A^{\alpha} < vB^{\alpha}\right\}\right) \right]
\\
&\leq \Tr\left[ A^{\alpha} B^{1-\alpha}\right],
\end{align*}
proving the first bound with prefactor $c_{\alpha}^{(1)}$.

Next, we proceed to the second bound with the prefactor $c_{\alpha}^{(2)}$ for $\alpha\in [\nicefrac{1}{2},1) $.
For $t>0$, let \[X(t)=\frac{\I}{\sqrt{t\I+A^\alpha}}B^\alpha\frac{\I}{\sqrt{t\I+A^\alpha}},\] then by Lieb's integral formula,
\begin{align}
\Tr\left[A\frac{B^{\alpha}}{A^{\alpha} + B^{\alpha}}\right] \notag
&=\Tr\left[A \int_0^\infty\!\! \frac{\I}{t\I+A^\alpha+B^\alpha}B^\alpha\frac{\I}{t\I+A^\alpha+B^\alpha}\d t\right]\notag
\\
&\overset{\textnormal{(a)}}{=}\Tr\left[A \int_0^\infty\!\! \frac{\I}{\sqrt{t\I+A^\alpha}}\frac{\I}{\I+X(t)}X(t)\frac{\I}{\I+X(t)}\frac{\I}{\sqrt{t\I+A^\alpha}}\d t\right]\notag
\\
&\overset{\textnormal{(b)}}{\le} \kappa_\alpha\Tr\left[A \int_0^\infty \frac{\I}{t\I+A^\alpha}X(t)^{\frac1\alpha-1}\d t\right]\notag
\\
&=\kappa_\alpha\Tr\left[ \int_0^\infty \frac{A}{t\I+A^\alpha}\left( \frac{1}{\sqrt{t\I+A^\alpha}} B^\alpha \frac{1}{\sqrt{t\I+A^\alpha}}\right)^{\frac{1}{\alpha}-1}\d t\right] \label{eq:before_Araki}
\\
&\overset{\textnormal{(c)}}{\leq} \kappa_\alpha\Tr\left[ \int_0^\infty \frac{A}{t\I+A^\alpha}\left(B^\alpha\right)^{\frac{1}{\alpha}-1}\left(\frac{\I}{t\I+A^\alpha}\right)^{\frac{1}{\alpha}-1}\d t\right]\notag
\\
&=\kappa_\alpha\Tr\left[AB^{1-\alpha} \int_0^\infty \left(\frac{\I}{t\I+A^\alpha}\right)^{\frac{1}{\alpha}}\d t\right]\notag
\\
&=\kappa_\alpha\frac{\alpha}{1-\alpha}\Tr\left[A^\alpha B^{1-\alpha}\right],
\end{align}
where we denote $\kappa_\alpha=(2\alpha)^{-\frac{1}{\alpha}}(1-\frac{1}{2\alpha})^{2-\frac{1}{\alpha}}$ and write $\kappa_{\nicefrac{1}{2}} \coloneqq  \lim_{\alpha \searrow \nicefrac{1}{2}} \kappa_{\alpha} =  1$.
In (a), we used the identity 
\begin{align}
\frac{\I}{t\I+A^\alpha+B^\alpha}=\frac{\I}{\sqrt{t\I+A^\alpha}}\frac{\I}{\I+X(t)}\frac{\I}{\sqrt{t\I+A^\alpha}}.
\end{align}
The first inequality (b) comes from Young's inequality: for any $v\geq 0$ and $\alpha \in (\nicefrac{1}{2},1]$,
\begin{align}
1+v
&= \frac{1}{2\alpha}\cdot2\alpha+\left(1-\frac{1}{2\alpha}\right)\left(\frac{1}{1-\frac{1}{2\alpha}}v\right)\\
&\geq (2\alpha)^{\frac{1}{2\alpha}} \left(\frac{1}{1-\frac{1}{2\alpha}}\right)^{1-\frac{1}{2\alpha}}\cdot v^{1-\frac{1}{2\alpha}},
\end{align}
which implies that
\begin{align} \notag
\frac{v}{(1+v)^2}\leq(2\alpha)^{-\frac{1}{\alpha}}\left(1-\frac{1}{2\alpha}\right)^{2-\frac{1}{\alpha}}\cdot v^{\frac{1}{\alpha}-1}
=\kappa_\alpha v^{\frac{1}{\alpha}-1} 
\end{align}
for $\alpha \in (\nicefrac{1}{2},1]$ and $\frac{v}{(1+v)^2} \leq v$ for $\alpha = \nicefrac{1}{2}$.

At this stage, the expression \eqref{eq:before_Araki} is the desired bound with prefactor $c_{\alpha}^{(2)} \coloneqq\kappa_{\alpha} \cdot \frac{\alpha}{1-\alpha}$ in the commuting scenario, i.e., provided that $A$ and $B$ commute.
Hence, it remains to handle the noncommutativity via an Araki-type inequality,
which is precisely the role of step (c).
For (c), we fix $t>0$ and set a positive definite operator $C=(t\I+A^\alpha)^{-1}\leq t^{-1}\I$. Then $\frac{A}{t\I+A^\alpha}=g(C)$, for $g(x)=x^{1-1/\alpha}(1-tx)^{1/\alpha}$, with domain $(0,1/t]\supseteq \spec(C)$.
Since $s=\nicefrac{1}{\alpha}-1\in [0,1]$ and 
\begin{align}
x^sg(x)=(1-tx)^{\frac{1}{\alpha}}
\end{align}
is nonincreasing in $x$ on its domain $(0,1/t]$, the Araki-type inequality (Lemma~\ref{lemm:Araki} below with $X\leftarrow C$ and $Y\leftarrow B^\alpha$) gives
\begin{align}
\Tr\left[ \frac{A}{t\I+A^\alpha}\left( \frac{1}{\sqrt{t\I+A^\alpha}} B^\alpha \frac{1}{\sqrt{t\I+A^\alpha}}\right)^{\frac{1}{\alpha}-1}\right]
&=\Tr\left[ g(C)\left( C^{\nicefrac{1}{2}} B^\alpha C^{\nicefrac{1}{2}}  \right)^{\frac{1}{\alpha}-1}\right]
\\
&\leq \Tr\left[ g(C)(B^\alpha)^{\frac{1}{\alpha}-1}C^{\frac{1}{\alpha}-1} \right]
\\
&=\Tr\left[ \frac{A}{t\I+A^\alpha}\left(B^\alpha\right)^{\frac{1}{\alpha}-1}\left(\frac{1}{t\I+A^\alpha}\right)^{\frac{1}{\alpha}-1}\right],
\end{align}
which completes the proof.
\end{proof}

\begin{lemm}[Audenaert \textit{et al.}'s inequality {\cite[Lemma 1]{ACM+07},  \cite[Lemma 2]{ANS+08}}] \label{lemm:Audenaert}
	For positive semidefinite operators $A$ and $B$,
	\begin{align} \notag
		\Tr\left[ A^{1+s} \left\{ A< B \right\} \right]
		&\leq \Tr\left[ A B^s \left\{ A< B \right\} \right], \quad \forall\, s\in[0,1].
	\end{align}
	Equivalently,
	\begin{align}
        \label{eq:Audenaert0}
		\Tr\left[ A \left\{ A^\alpha < B^{\alpha} \right\} \right]
		&\leq \Tr\left[ A^{\alpha} B^{1-\alpha} \left\{ A^{\alpha} < B^{\alpha} \right\} \right],
        \quad \forall\, \alpha \in [\nicefrac{1}{2},1].
	\end{align}
\end{lemm}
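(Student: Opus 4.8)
The plan is to first observe that the two displayed inequalities are the same statement, and then to prove one of them. Applying the first form to $A\mapsto A^{\alpha}$ and $B\mapsto B^{\alpha}$ with $s=\tfrac{1-\alpha}{\alpha}\in[0,1]$ gives the second (since $(A^{\alpha})^{1+s}=A$ and $(B^{\alpha})^{s}=B^{1-\alpha}$), and applying the second to $A\mapsto A^{1+s}$, $B\mapsto B^{1+s}$, $\alpha=\tfrac{1}{1+s}$ recovers the first; so it suffices to treat the first. Writing $P\coloneq\{A<B\}=\{B-A>0\}$ and using cyclicity of the trace with $A^{1+s}=A\cdot A^{s}$, the claim is equivalent to $\Tr[A(B^{s}-A^{s})P]\ge0$ for all $s\in[0,1]$. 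The case $s=1$ is immediate: $(B-A)P=(B-A)_{+}\ge0$, so $\Tr[A(B-A)P]=\Tr[A(B-A)_{+}]$ is a trace of a product of two positive semi-definite operators; and $s=0$ follows by continuity. So fix $s\in(0,1)$. I would also record that $P$ commutes with $B-A$, whence $[P,A]=[P,B]$, $(B-A)P=P(B-A)=(B-A)_{+}$, and $(B-A)_{-}P=0$; the identity $[P,A]=[P,B]$ additionally shows $\Tr[AB^{s}P]$ is real, so the inequality is meaningful.

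Next I would linearise $x\mapsto x^{s}$ by the integral representation $x^{s}=\tfrac{\sin\pi s}{\pi}\int_{0}^{\infty}\tfrac{x}{x+t}\,t^{s-1}\,\d t$, which combined with the resolvent identity $(A+t)^{-1}-(B+t)^{-1}=(A+t)^{-1}(B-A)(B+t)^{-1}$ yields
\[
B^{s}-A^{s}=\frac{\sin\pi s}{\pi}\int_{0}^{\infty} t^{s}\,(A+t)^{-1}(B-A)(B+t)^{-1}\,\d t .
\]
Substituting this into $\Tr[A(B^{s}-A^{s})P]$ and using that $A$ commutes with $(A+t)^{-1}$, the problem reduces to proving positivity of
\[
\int_{0}^{\infty} t^{s}\,\Tr\!\left[\frac{A}{A+t}\,(B-A)\,(B+t)^{-1}P\right]\d t .
\]

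This last step is the main obstacle, and it is exactly the heart of Audenaert \textit{et al.}'s lemma. The guiding intuition is the elementary scalar inequality $a^{1+s}\le a\,b^{s}$ whenever $0\le a<b$: if $A$ and $B$ commuted it would finish the proof eigenvalue-wise on $\operatorname{ran}(P)$. The difficulty is genuinely noncommutative—$P=\{A<B\}$ need not commute with $A$ or with $B^{s}$, and in general $\{A<B\}\neq\{A^{s}<B^{s}\}$, so neither resolvent $(A+t)^{-1}$ nor $(B+t)^{-1}$ commutes with $P$. I would attack the $t$-integral by splitting $B-A=(B-A)_{+}-(B-A)_{-}$ adapted to $P$: the positive part $(B-A)_{+}=(B-A)P\ge0$ gives the benign term, while the negative part is supported on $\operatorname{ran}(\I-P)$, where $A\ge B$ and hence $(A+t)^{-1}\le(B+t)^{-1}$, so it should be controlled by resolvent monotonicity. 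Making the resulting estimate rigorous—quite possibly only after integrating over $t$, since the integrand is not manifestly signed—is the crux; should this prove too delicate I would instead follow Audenaert \textit{et al.}'s original route (establishing the closely related one-sided quantum Chernoff estimate and re-deriving the stated form) or simply invoke the cited references.
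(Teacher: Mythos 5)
The paper does not actually prove this lemma; it states it and cites \cite[Lemma~1]{ACM+07} and \cite[Lemma~2]{ANS+08}, so the paper's ``proof'' is a citation. Your proposal correctly does two things beyond that: (i) the substitution argument showing the two displayed forms are equivalent ($A\mapsto A^\alpha$, $B\mapsto B^\alpha$, $s=(1-\alpha)/\alpha$, and conversely) is right, and the range check $s\in[0,1]\Leftrightarrow\alpha\in[\nicefrac12,1]$ is correct; (ii) the reduction of the first form to $\Tr[A(B^s-A^s)P]\ge0$ with $P=\{A<B\}$ via $A^{1+s}=A\,A^s$ is correct, as is your reality check ($\Tr[A[B^s,P]]=\Tr[[P,A]B^s]=\Tr[[P,B]B^s]=0$ by cyclicity).

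However, your attempted direct proof has a genuine gap, and you identify it yourself. The integral representation and resolvent identity are correct, but after arriving at $\int_0^\infty t^s\,\Tr[\tfrac{A}{A+t}(B-A)(B+t)^{-1}P]\,\d t$, you have no argument that this is nonnegative. The hinted split $(B-A)=(B-A)_+-(B-A)_-$ does not work as written: although $(B-A)_-P=0$, in your integrand the projector $P$ sits to the right of $(B+t)^{-1}$, which does not commute with $P$, so the negative-part contribution is not annihilated. Nor does resolvent monotonicity $(A+t)^{-1}\le(B+t)^{-1}$ on $\operatorname{ran}(\I-P)$ help directly, since the expression is sandwiched by non-commuting resolvents from both sides and is not manifestly signed even after integrating in $t$. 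The actual Audenaert \textit{et al.}\ proof proceeds differently (roughly, via a block decomposition of $A$ and $B$ with respect to $P$ and $\I-P$, exploiting that $B-A$ is block-diagonal so $A$ and $B$ share their off-diagonal blocks, and then a scalar comparison on the diagonal blocks); the integral-representation route you sketch is not obviously salvageable without importing that structural observation. Since the paper simply invokes the cited references, your final fallback is consistent with the paper, but as a direct proof attempt the crux positivity is not established.
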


\begin{lemm}[Araki-type inequality {\cite[Proposition 5]{LC_Araki}}] \label{lemm:Araki}
    Let $X,Y\geq 0$.
    For any $s\in[0,1]$ and any function $g$ on an interval $\mathcal{J}$, where $\spec(X)\subseteq \mathcal{J}$, such that $x\mapsto x^sg(x)$ is nonnegative and nonincreasing, we have 
\begin{align*}
\Tr\left[ g(X) \left(X^{\nicefrac{1}{2}} Y X^{\nicefrac{1}{2}} \right)^s \right]
\leq 
\Tr\left[ g(X) X^s Y^s \right].
\end{align*}
\end{lemm}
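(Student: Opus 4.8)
\emph{Proof proposal.}
When $X$ and $Y$ commute the statement is an \emph{equality}: then $X^{1/2}YX^{1/2}=XY$, $(XY)^s=X^sY^s$, and both sides equal $\Tr[h(X)\,Y^s]$, where I write $h(x):=x^sg(x)$. So the content is purely noncommutative, and the hypothesis that $h$ (not $g$) is nonincreasing should be what drives the argument. Since $h\geq 0$ forces $g\geq 0$ on $\spec(X)$ (hence $g(X)\geq 0$), since the endpoints $s\in\{0,1\}$ follow by continuity, and since we may restrict to $\supp(X)$, I assume throughout $0<s<1$ and $X>0$, and abbreviate $M:=X^{1/2}YX^{1/2}$.

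The first move is to reduce to two elementary weights. As $h$ is nonnegative and nonincreasing on a compact interval containing $\spec(X)$, its Lebesgue--Stieltjes decomposition reads $h(x)=h(x_{\max})+\int_0^\infty \mathbf{1}_{\{x<\tau\}}\,\d\nu(\tau)$ for a finite positive measure $\nu$, with $x_{\max}:=\max\spec(X)$; hence $g(X)=X^{-s}h(X)=h(x_{\max})\,X^{-s}+\int X^{-s}\{X<\tau\}\,\d\nu(\tau)$ with nonnegative coefficients. Since $g\mapsto \Tr[g(X)M^s]-\Tr[g(X)X^sY^s]$ is linear, it suffices (by linearity and continuity) to treat the two building blocks $\textrm{(a)}\ g(x)=x^{-s}$ and $\textrm{(b)}\ g(x)=x^{-s}\mathbf{1}_{\{x<\tau\}}$. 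Case (a) is precisely the Araki--Lieb--Thirring inequality $\Tr[A^{s/2}B^sA^{s/2}]\leq\Tr[(A^{1/2}BA^{1/2})^s]$ for $s\in(0,1)$, applied with $A=X^{-1},\,B=M$: indeed $\Tr[X^{-s}M^s]=\Tr[X^{-s/2}M^sX^{-s/2}]\leq \Tr[(X^{-1/2}MX^{-1/2})^s]=\Tr[Y^s]$, using $X^{-1/2}MX^{-1/2}=Y$.

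For case (b), set $P:=\{X<\tau\}$; the goal is $\Tr[X^{-s}PM^s]\leq\Tr[PY^s]$. I first truncate $X$: let $\widetilde X:=XP+\tau(\I-P)$, so $0<\widetilde X\leq X$, hence $\widetilde X^{-1}\geq X^{-1}$, while $\widetilde X^{-s}P=X^{-s}P$ and $\widetilde X^{-s-1}P=X^{-s-1}P$. Using $M^s=\tfrac{\sin\pi s}{\pi}\int_0^\infty M(M+t)^{-1}t^{s-1}\,\d t$ together with $M+t=X^{1/2}(Y+tX^{-1})X^{1/2}$, one computes $\Tr[X^{-s}PM(M+t)^{-1}]=\Tr[X^{-s}P]-t\Tr[X^{-s-1}P(Y+tX^{-1})^{-1}]$; since $(Y+tX^{-1})^{-1}\geq(Y+t\widetilde X^{-1})^{-1}$, replacing $X$ by $\widetilde X$ in this display only increases it, and reading the same identity backwards for $\widetilde X$ (legitimate as $\widetilde X>0$) gives $\Tr[X^{-s}PM^s]\leq \Tr[\widetilde X^{-s}P\widetilde M^s]$ with $\widetilde M:=\widetilde X^{1/2}Y\widetilde X^{1/2}$. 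This localizes the problem to the spectral subspace $\operatorname{ran}P$, on which one should invoke Araki--Lieb--Thirring \emph{weighted by the projection $P$}, the point being that $P$ cuts off the \emph{bottom} of $\spec(X)$, which is exactly what allows the bound to close against $\Tr[PY^s]$ rather than the larger $\Tr[Y^s]$.

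The main obstacle is finishing case (b), i.e. the localized inequality $\Tr[\widetilde X^{-s}P\widetilde M^s]\leq\Tr[PY^s]$. The tempting shortcuts all overshoot: compressing $\widetilde M$ and $Y$ to $\operatorname{ran}P$ and using operator concavity of $t\mapsto t^s$ yields only $\Tr[\widetilde X^{-s}P\widetilde M^s]\leq\Tr[(PYP)^s]$, and bounding $\widetilde X^{-s}P\leq \widetilde X^{-s}$ yields only $\leq\Tr[Y^s]$; both quantities strictly exceed $\Tr[PY^s]$ in general, because operator concavity is being used in the unfavourable direction. A correct argument must keep track of how the spectral projections of $X$ sit against $(X^{1/2}YX^{1/2})^s$; I expect the resolution to go through an Araki-type log-majorization \emph{relative to the fixed operator $X$}, combined with a summation by parts over the eigenvalues of $X$ so that the nonincreasing weight and the majorization pair in the favourable order. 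This is the step carrying the genuine work, and executing it cleanly is the heart of the matter.
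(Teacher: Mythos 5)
The paper does not prove this lemma itself; it cites it as \cite[Proposition 5]{LC_Araki}, so there is no in-paper argument to compare against. Judging your proposal on its own terms, the reduction is sound: writing $h(x)=x^sg(x)$, a nonnegative nonincreasing function, as $h(x_{\max})+\int_0^\infty \mathbf{1}_{\{x<\tau\}}\,\d\nu(\tau)$ with a positive measure $\nu$, and using linearity of both sides in $g$, correctly reduces the claim to the two building blocks (a) $g(x)=x^{-s}$ and (b) $g(x)=x^{-s}\mathbf{1}_{\{x<\tau\}}$. Case (a) is correctly dispatched via Araki--Lieb--Thirring with $A=X^{-1}$, $B=X^{1/2}YX^{1/2}$. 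Your resolvent identity $\Tr[X^{-s}PM(M+t)^{-1}]=\Tr[X^{-s}P]-t\Tr[X^{-s-1}P(Y+tX^{-1})^{-1}]$ and the monotone truncation to $\widetilde X$ are also correct, and your diagnosis that the naive compression plus operator-Jensen route overshoots to $\Tr[(PYP)^s]\ge\Tr[PY^s]$ is right.

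However, case (b) --- the inequality $\Tr\bigl[X^{-s}P\,(X^{1/2}YX^{1/2})^s\bigr]\leq\Tr[PY^s]$ for $P=\{X<\tau\}$ --- is never proved. You state explicitly that finishing it ``is the step carrying the genuine work'' and ``the heart of the matter,'' and you only gesture at an Araki-type log-majorization ``relative to $X$'' plus a summation-by-parts that you ``expect'' to work. That is a program, not a proof. Since you yourself observe that the commuting case is an identity and all the content of the lemma lives in the noncommutativity that case (b) isolates, this is not a minor omission: it is the entire substance of the statement. As written, the proposal is a correct and thoughtful reduction to an unsolved core sub-problem, and it must be counted as having a genuine, central gap.
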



\section{Classical-Quantum Channel Coding} \label{sec:CQ}

In this section, we first establish a one-shot bound for classical-quantum channel coding (Theorem~\ref{theo:CQ}), which resolves Burnashev and Holevo's conjecture \cite{BH98}.
We extend the result to infinite dimensions in Section~\ref{sec:inf-dim}.
In Sections~\ref{sec:large_deviation}, \ref{sec:moderate_deviation}, and \ref{sec:small_deviation}, we demonstrate the corresponding asymptotic results in the large-deviation, moderate-deviation, and small-deviation regimes, respectively, and compare them with existing results.

\begin{defn}[Classical-quantum channel coding] \label{defn:CQ}
	Let $\mathscr{N}_{\X \to \B}: x\mapsto \rho_{\B}^x \in \mathcal{S}(\B)$ be a classical-quantum channel, where each channel output $\rho_{\B}^x$ is a density operator (i.e.,~a positive semidefinite operator with unit trace).
	\begin{enumerate}[1.]
		\item Alice has classical registers $\mathsf{M}$ and $\X$, and Bob has a quantum register $\B$.
		
		
		\item An encoder at Alice, $m \mapsto x(m)$, maps each (equiprobable) message in $\mathsf{M}$ to a codeword in $\X$.
		
		\item 
        Alice's codeword in $\X$ undergoes the classical-quantum channel $\mathscr{N}_{\X\to \B}$ and outputs a state on Bob's quantum register $\B$.
		
		\item Bob applies a decoding measurement described by a positive operator-valued measure (POVM) $\left\{ {\Lambda}_{\B}^{m} \right\}_{m\in\mathsf{M}}$ (i.e.,~$0\leq \Lambda_{\B}^{m} \leq \I_{\B}$ and $\sum_{m\in\mathsf{M}} \Lambda_{\B}^m = \I_{\B}$) on his  quantum register $\B$ to obtain an estimated message $\hat{m} \in \mathsf{M}$.
	\end{enumerate}
	We consider the conventional random coding strategy as follows.
    For each message $m\in \mathsf{M}$, a codeword $x({m})$ is drawn pairwise independently according to the common input distribution $p_{\X}$.
    Then, the expected \emph{random coding error probability} with optimal decoding for sending $|\mathsf{M}|$ messages through the channel $\mathscr{N}_{\X\to\B}$ with input distribution $p_{\X}$ is defined as\footnote{
    We only write $p$ in the subscript of $\varepsilon(\X:\B)_{p}$ in \eqref{eq:error_CQ} because the channel $\mathscr{N}_{\X\to\B} : x\mapsto \rho_{\B}^x$ is usually fixed in classical-quantum channel coding.
    Hence, the random coding error $\varepsilon(\X:\B)_{p}$ depends only on the input distribution $p_{\X}$.}
	\begin{align} \label{eq:error_CQ}
    \varepsilon(\X:\B)_{p}
    &\coloneqq   \mathds{E}_{\{x(m)\}\sim p_{\X}}  \left[ 
    \inf_{ \{ {\Lambda}_{\B}^{m} \}_{m\in\mathsf{M}} } \frac{1}{|\mathsf{M}|}  \sum_{m\in \mathsf{M}}	
     \Tr\left[ \rho_{\B}^{x(m)} \left(\I_{\B} -  {\Lambda}_{\B}^{m} \right) \right]     \right],
	\end{align}
    where the minimization is over all POVMs $\{ {\Lambda}_{\B}^{m} \}_{m\in\mathsf{M}}$, and
    the joint input-output state is denoted by 
    \begin{align} \label{eq:CQ-state}
    \rho_{\X\B} \coloneqq\sum_{x\in\X} p_{\X}(x)|x\rangle \langle x|_{\X} \otimes \rho_{\B}^x.
    \end{align}
\end{defn}

Given a realization of the random codebook $\left\{ x(1), x(2), \ldots, x(|\mathsf{M}|) \right\}$, we adopt the integral $\alpha$-PGM:
\begin{align}
    \begin{split} \label{eq:alpha-PGM}
    \mathring{\Pi}_{\B}^{x(m)} 
    \coloneqq\frac{\left( \rho_{\B}^{x({m})} \right)^{\alpha}}{\sum_{\bar{m}\in \mathsf{M} } \left( \rho_{\B}^{x(\bar{m})} \right)^{\alpha} }
    + \frac{1}{|\mathsf{M}|}\I_{\B^{\perp}}, \quad \forall \, m \in \mathsf{M}, \; \alpha \in [\sfrac{1}{2},1]
    \end{split}
\end{align}
according to the associated channel output states $\big\{ \rho_{\B}^{x(m)} \big\}_{m\in \mathsf{M}}$.
Here, $\I_{\B^{\perp}}$ denotes the projection onto the common kernel of channel outputs, i.e.,~$\rho_{\B}^{x(m)} \I_{\B^{\perp}} = 0$ for all $m\in\mathsf{M}$.

\begin{theo} \label{theo:CQ} 
	For any finite-dimensional classical-quantum channel $\mathscr{N}_{\X \to \B}: x \mapsto \rho_{\B}^x$ and a given input distribution $p_{\X}$, the random coding error probability \eqref{eq:error_CQ} for sending $|\mathsf{M}|$ messages is upper bounded by
	\begin{align}
		\varepsilon(\X:\B)_{p}
		&\leq c_{\alpha}
        (|\mathsf{M}|-1)^{\frac{1-\alpha}{\alpha}}
        \Tr \left[
        \left( \sum_{x\in\X} p_{\X}(x) \left(\rho_{\B}^x\right)^{\alpha}  \right)^{\nicefrac{1}{\alpha}}
        \right]
        \notag
        \\
        &= c_{\alpha} \cdot
        2^{
        -\frac{1-\alpha}{\alpha} \left[ I_{\alpha} (\X : \B)_{\rho} - \log_2 (|\mathsf{M}|-1) \right]
        }, \quad \forall\, \alpha \in [\nicefrac{1}{2},1]. 
        \label{eq:main}
	\end{align}
    Here, the joint state $\rho_{\X\B}$ is defined in \eqref{eq:CQ-state},
    $I_{\alpha} (\X:\B)_{\rho} \coloneqq\inf_{\sigma_{\B} \in \mathcal{S}(\B) } D_{\alpha}(\rho_{\X\B}\Vert \rho_{\X} \otimes \sigma_{\B}) $ is the order-$\alpha$ Petz--\Renyi information, and the factor $c_{\alpha}$ defined in \eqref{eq:c_alpha} admits a universal upper bound $1.102$.
\end{theo}
\begin{remark}[Maximal-error criterion]
While we only consider the average error criterion in this work, Theorem~\ref{theo:CQ} with the standard expurgation argument (e.g.~\cite[p.~140]{Gal68}) implies a maximal-error achievability bound with only a one-bit message-size loss and a factor-two error loss.
\end{remark}

The quantum Sibson identity \cite[Lemma II.6]{KW09}, \cite[(3.10)]{HT14}, \cite{CGH18} shows that the minimizer in $I_{\alpha} (\X:\B)_{\rho}$ is attained by
\[
\sigma_{\B}^{\star} = \frac{
        \left( \sum_{x\in\X} p_{\X}(x) \left(\rho_{\B}^x\right)^{\alpha}  \right)^{\nicefrac{1}{\alpha}}
        }{
        \Tr \left[
        \left( \sum_{x\in\X} p_{\X}(x) \left(\rho_{\B}^x\right)^{\alpha}  \right)^{\nicefrac{1}{\alpha}}
        \right]
        },
\]
and, hence, the order-$\alpha$ Petz--\Renyi information with respect to the state $\rho_{\X\B}$
admits a closed-form expression:
\[
I_{\alpha} (\X:\B)_{\rho} 
= D_{\alpha}\left(\rho_{\X\B}\Vert \rho_{\X} \otimes \sigma_{\B}^{\star}\right)
=
\frac{\alpha}{\alpha-1} \log_2 \Tr \left[
        \left( \sum_{x\in\X} p_{\X}(x) \left(\rho_{\B}^x\right)^{\alpha}  \right)^{\nicefrac{1}{\alpha}}
        \right].
\]

The proof of Theorem~\ref{theo:CQ} is given below and only uses the tilting inequality in Proposition~\ref{prop:key}.
\begin{proof}
By the symmetry of random coding, it suffices to calculate the error probability of sending $m=1$ without loss of generality: For all $\alpha \in [\nicefrac{1}{2},1]$,
\begin{align}\notag
    \varepsilon(\X:\B)_{p} 
    &\leq \mathds{E}_{ \{x(m)\}\sim p_{\X} }
    \Tr\left[ \rho_{\B}^{x({1})}  \cdot \frac{
    \sum_{\bar{m}\neq 1} \left(\rho_{\B}^{x({\bar{m}})}\right)^{\alpha}
    }{    \left(\rho_{\B}^{x({1})}\right)^{\alpha}
    + \sum_{\bar{m}\neq 1} \left(\rho_{\B}^{x({\bar{m}})}\right)^{\alpha}
    } \right]
    \\ \notag
    &\leq \mathds{E}_{ \{x(m)\}\sim p_{\X} } c_{\alpha} \Tr\left[ \left( \rho_{\B}^{x({1})} \right)^{\alpha} \left( \sum_{\bar{m}\neq 1} \left( \rho_{\B}^{x(\bar{m})} \right)^{\alpha} \right)^{\frac{1-\alpha}{\alpha}} \right],
\end{align}
where the inequality follows from Proposition~\ref{prop:key} with $A \leftarrow \rho_{\B}^{x({1})} $ and $B \leftarrow  \left[ \sum_{\bar{m}\neq 1} \left( \rho_{\B}^{x(\bar{m})} \right)^{\alpha} \right]^{\nicefrac{1}{\alpha}} $.

Now we evaluate the expectation over the random codebook $\{x(m)\} \sim p_{\X}$ to obtain
\begin{align*}
    &\mathds{E}_{x(1)} \mathds{E}_{x(\bar{m})\mid x({1})}
    \Tr\left[ \left( \rho_{\B}^{x({1})} \right)^{\alpha} \left( \sum_{\bar{m}\neq 1} \left( \rho_{\B}^{x(\bar{m})} \right)^{\alpha} \right)^{\frac{1-\alpha}{\alpha}} \right]
    \\
    &\overset{\text{(a)}}{\leq} 
    \mathds{E}_{x(1)}
    \Tr\left[ \left( \rho_{\B}^{x({1})} \right)^{\alpha} \left( \mathds{E}_{x(\bar{m})\mid x({1})}\sum_{\bar{m}\neq 1} \left( \rho_{\B}^{x(\bar{m})} \right)^{\alpha} \right)^{\frac{1-\alpha}{\alpha}} \right]
    \\
    &\overset{\text{(b)}}{=} (|\mathsf{M}|-1)^{\frac{1-\alpha}{\alpha}} \cdot
    \mathds{E}_{x\sim p_{\X}} \Tr\left[ \left( \rho_{\B}^x \right)^{\alpha } \left( \mathds{E}_{\bar{x}} \left( \rho_{\B}^{\bar{x}} \right)^{\alpha} \right)^{\frac{1-\alpha}{\alpha}}
    \right]
    \\
    &= (|\mathsf{M}|-1)^{\frac{1-\alpha}{\alpha}} \cdot \Tr\left[ \left( \sum_{x\in \X} p_{\X}(x) \left( \rho_{\B}^x \right)^{\alpha}
    \right)^{\nicefrac{1}{\alpha}}\right],
\end{align*}
where inequality (a) is because the power function $0\leq x\mapsto x^{\frac{1-\alpha}{\alpha}}$ is operator concave for $\frac{1-\alpha}{\alpha} \in [0,1]$;
equality (b) follows from the pairwise independence of the random codebook.
\end{proof}

\subsection{Extension to Infinite Dimensions} \label{sec:inf-dim}

Theorem~\ref{theo:CQ} is established for any classical-quantum channel with finite-dimensional output Hilbert space.
In the following, we show that, by employing the technique of \emph{finite-rank approximations} developed by Hiai \cite{Hia21} and Mosonyi \cite[\S III.C]{Mos23}, the result holds in infinite dimensions as well.
This proves, for the first time, the optimal error exponent (for rates higher than the critical rate) for infinite-dimensional (separable) Hilbert spaces, as will be shown shortly in Section~\ref{sec:large_deviation}.

Consider a classical-quantum channel $\mathscr{N}_{\X\to\B}: \X \to \mathcal{S}(\B)$. Here, the input alphabet $\X$ is finite, while the output Hilbert space $\mathcal{H}_{\B}$ may be infinite dimensional.
Let $(\I_{\B_k})_{k\in\mathds{N}}$ be an increasing sequence of finite-rank projections onto subspaces $\mathcal{H}_{\B_k}\subset\mathcal{H}_{\B}$ such that $\I_{\B_k} \nearrow \I_{\B}$ strongly and $\Tr[\I_{\B_k}] = k$.
We choose an $\alpha$-PGM according to the ``truncated'' channel outputs as follows:
\begin{align}
	\begin{split}
		\mathring{\Pi}_{\B_k }^{x(m)} 
		\coloneqq\frac{\left( \I_{\B_k} \rho_{\B}^{x({m})} \I_{\B_k} \right)^{\alpha}}{\sum_{\bar{m}\in \mathsf{M} } \left( \I_{\B_k} \rho_{\B}^{x(\bar{m})} \I_{\B_k} \right)^{\alpha} } + \frac{1}{|\mathsf{M}|} \I_{\B_k^{\perp}}, \quad \forall \, m \in \mathsf{M}, \; \alpha \in [\sfrac{1}{2},1],
	\end{split}
\end{align}
and $\I_{\B_k^{\perp}} \coloneqq \I_{\B}-\I_{\B_k}$ is the complement of the truncation subspace. 

Then, the random coding error is upper bounded by 
\begin{align}
	\varepsilon(\X:\B)_{p} 
	&\leq \mathds{E}_{ x(m)\sim p_{\X} }
	\Tr\left[ \rho_{\B}^{x({1})}  \cdot \frac{
		\sum_{\bar{m}\neq 1} \left(\I_{\B_k}\rho_{\B}^{x({\bar{m}})}\I_{\B_k}\right)^{\alpha}
	}{    \left(\I_{\B_k}\rho_{\B}^{x({1})}\I_{\B_k}\right)^{\alpha}
		+ \sum_{\bar{m}\neq 1} \left(\I_{\B_k}\rho_{\B}^{x({\bar{m}})}\I_{\B_k}\right)^{\alpha}
	} \right]
	+ \frac{|\mathsf{M}|-1}{|\mathsf{M}|}\Tr\left[ \rho_{\B} \I_{\B_k^{\perp}} \right]
	\notag
	\\
	&= \mathds{E}_{ x(m)\sim p_{\X} }
	\Tr\left[ \I_{\B_k}\rho_{\B}^{x({1})}\I_{\B_k}  \frac{
		\sum_{\bar{m}\neq 1} \left(\I_{\B_k}\rho_{\B}^{x({\bar{m}})}\I_{\B_k}\right)^{\alpha}
	}{    \left(\I_{\B_k}\rho_{\B}^{x({1})}\I_{\B_k}\right)^{\alpha}
		+ \sum_{\bar{m}\neq 1} \left(\I_{\B_k}\rho_{\B}^{x({\bar{m}})}\I_{\B_k}\right)^{\alpha}
	} \right]
	+ \frac{|\mathsf{M}|-1}{|\mathsf{M}|}\Tr\left[ \rho_{\B}\I_{\B_k^{\perp}} \right]
	\notag
	\\
	&\leq c_{\alpha} \cdot (|\mathsf{M}|-1)^{\frac{1-\alpha}{\alpha}} \cdot \Tr\left[ \left( \sum_{x\in \X} p_{\X}(x) \left( \I_{\B_k} \rho_{\B}^x \I_{\B_k} \right)^{\alpha}
	\right)^{\nicefrac{1}{\alpha}}\right]
	+ \frac{|\mathsf{M}|-1}{|\mathsf{M}|}\Tr\left[ \rho_{\B} \I_{\B_k^{\perp}} \right].
\end{align}
The last inequality follows from the proof of Theorem~\ref{theo:CQ} for finite dimensions by the substitution $\rho_{\B}^x \leftarrow \I_{\B_k} \rho_{\B}^x \I_{\B_k}$.

Since the above inequality holds for all $k\in\mathds{N}$, we let $k\to \infty$.
For the second term, $\Tr\left[ \rho_{\B} \I_{\B_k^{\perp}}\right] \to 0$, and for the first term, we apply Proposition~\ref{prop:continuity} below with $\A\leftarrow \X$ and $\tau_{\A} \leftarrow \rho_{\X}$
to obtain the following result for infinite dimensions.

\begin{theo}[Infinite dimensions] \label{theo:CQ-inf} 
	For any (possibly infinite-dimensional) classical-quantum channel $\mathscr{N}_{\X \to \B}: x \mapsto \rho_{\B}^x$ and a given input distribution $p_{\X}$ on a finite alphabet $\X$, the random coding error probability \eqref{eq:error_CQ} for sending $|\mathsf{M}|$ messages is upper bounded by
	\begin{align}
		\varepsilon(\X:\B)_{p}
		&\leq c_{\alpha}
		(|\mathsf{M}|-1)^{\frac{1-\alpha}{\alpha}}
		\Tr \left[
		\left( \sum_{x\in\X} p_{\X}(x) \left(\rho_{\B}^x\right)^{\alpha}  \right)^{\nicefrac{1}{\alpha}}
		\right]
		\notag
		\\
		&= c_{\alpha} \cdot
		2^{
			-\frac{1-\alpha}{\alpha} \left[ I_{\alpha} (\X : \B)_{\rho} - \log_2 (|\mathsf{M}|-1) \right]
		}, \quad \forall\, \alpha \in [\nicefrac{1}{2},1]. 
		\label{eq:main-inf}
	\end{align}
	Here, the joint state $\rho_{\X\B}$ is defined in \eqref{eq:CQ-state},
	$I_{\alpha} (\X:\B)_{\rho} \coloneqq\inf_{\sigma_{\B} \in \mathcal{S}(\B) } D_{\alpha}(\rho_{\X\B}\Vert \rho_{\X} \otimes \sigma_{\B}) $ is the order-$\alpha$ Petz--\Renyi information, and the factor $c_{\alpha}$ defined in \eqref{eq:c_alpha} admits a universal upper bound $1.102$.
\end{theo}

\begin{prop} \label{prop:continuity}
	Let $\mathcal{H}_{\A}$ be a finite-dimensional Hilbert space and $\mathcal{H}_{\B}$ be a possibly infinite-dimensional Hilbert space.
	Let $(\I_{\B_k})_{k\in\mathds{N}}$ be an increasing sequence of finite-rank projections onto $\mathcal{H}_{\B_k}$ such that $\I_{\B_k} \nearrow \I_{\B}$ strongly.
	For any trace-class operator $\rho_{\A\B} \geq 0$, $\tau_{\A}\geq 0$, and $\alpha \in (0,1]$, 
	\begin{align}
		\Tr\left[\left(\Tr_{\A}\left[ \tau_{\A}^{1-\alpha} (\I_{\B_k} \rho_{\A\B} \I_{\B_k})^{\alpha} \right] \right)^{\nicefrac{1}{\alpha}} \right]
		\to
		\Tr\left[\left(\Tr_{\A}\left[ \tau_{\A}^{1-\alpha} \rho_{\A\B}^{\alpha} \right] \right)^{\nicefrac{1}{\alpha}} \right]
	\end{align}
	as $k\to \infty$.
\end{prop}
\begin{proof}
	First note that
	\begin{align}
		\Tr\left[\left(\Tr_{\A}\left[ \tau_{\A}^{1-\alpha} \rho_{\A\B}^{\alpha} \right] \right)^{\nicefrac{1}{\alpha}} \right]
		= \left\|\Tr_{\A}\left[ \tau_{\A}^{1-\alpha} \rho_{\A\B}^{\alpha} \right]  \right\|_{\nicefrac{1}{\alpha}}^{\nicefrac{1}{\alpha}}.
	\end{align}
	To show its continuity on system $\B$, it is sufficient to show the continuity of $\left\|\Tr_{\A}\left[ \tau_{\A}^{1-\alpha} \rho_{\A\B}^{\alpha} \right]  \right\|_{\nicefrac{1}{\alpha}}$.
	
	Let $\tau_{\A} = \sum_i \lambda_i |i\rangle_{\A} \langle i |_{\A}$ be the  spectral decomposition of $\tau_{\A}$.
	Then, for all $\alpha \in (0,1]$,
	\begin{align}
		&\left|
		\left\|\Tr_{\A}\left[ \tau_{\A}^{1-\alpha} (\I_{\B_k} \rho_{\A\B} \I_{\B_k})^{\alpha} \right]  \right\|_{\nicefrac{1}{\alpha}}
		-
		\left\|\Tr_{\A}\left[ \tau_{\A}^{1-\alpha} \rho_{\A\B}^{\alpha} \right]  \right\|_{\nicefrac{1}{\alpha}}
		\right|
		\notag
		\\
		&\leq
		\left\| \Tr_{\A}\left[ \tau_{\A}^{1-\alpha} \left( (\I_{\B_k} \rho_{\A\B} \I_{\B_k})^{\alpha} - \rho_{\A\B}^\alpha \right) \right] \right\|_{1/\alpha}
		\\
		&=
		\left\| \sum_{i} \lambda_i^{1-\alpha} \langle i |_{\A}  \left( (\I_{\B_k} \rho_{\A\B} \I_{\B_k})^{\alpha} - \rho_{\A\B}^\alpha \right) |i \rangle_{\A} \right\|_{1/\alpha}
		\\
		&\overset{\textnormal{(a)}}{\leq}
		\sum_{i} \lambda_i^{1-\alpha}
		\left\| \langle i |_{\A} \left(  (  \I_{\B_k} \rho_{\A\B} \I_{\B_k})^{\alpha} - \rho_{\A\B}^\alpha \right) |i\rangle_{\A} \right\|_{1/\alpha}
		\\
		&\overset{\text{(b)}}{\leq}
		\sum_{i} \lambda_i^{1-\alpha}
		\left\| (  \I_{\B_k} \rho_{\A\B} \I_{\B_k})^{\alpha} - \rho_{\A\B}^\alpha \right\|_{1/\alpha}
		\\
		&\overset{\text{(c)}}{\leq}
		\sum_{i} \lambda_i^{1-\alpha}
		\left\| \I_{\B_k} \rho_{\A\B} \I_{\B_k} - \rho_{\A\B} \right\|_{1}^\alpha
		\\
		&\to 0,
	\end{align}
	where (a) follows from the triangle inequality of the Schatten $p$-norm with $p = {1}/{\alpha} \geq 1$;
	(b) is because the Schatten $1/\alpha$-norm is contractive under the projection $\langle i |_{\A} \otimes \I_{\B} (\cdot) \I_{\B} \otimes |i\rangle_{\A}$ (which follows from H\"older's inequality $\| P H P^\dagger \|_{p} \leq \|P\|_{\infty} \|H\|_p \|P^\dagger\|_{\infty} = \|H\|_p$ for any projection $P$);
	and (c) follows from the Powers--Størmer inequality \cite[Corollary 4]{And88}, \cite{HN89}, \cite[(2.8)]{Hia97}.
\end{proof}
\subsection{Large Deviation Regime and Comparisons} \label{sec:large_deviation}

When the underlying channel is product, say $\mathscr{N}_{\X_1 \to \B_1} \otimes \mathscr{M}_{\X_2 \to \B_2}$, the right-hand side of \eqref{eq:main} is multiplicative up to a universal constant factor that is independent of the blocklength and the system dimensions (see e.g.~\cite[Lemma 7]{HT14}) for \emph{any} $\alpha \in [\nicefrac{1}{2},1]$ and \emph{any} product input distribution, say $p_{\X_1} \!\otimes q_{\X_2}$.

If the channel is i.i.d., i.e., $\mathscr{N}_{\X\to \B}^{\otimes n}$, 
one may choose an arbitrary i.i.d.~input distribution $p_{\X}^{\otimes n}$; the resulting random codebook is called the \emph{i.i.d.~random codebook}.
By optimizing the order $\alpha \in [\nicefrac{1}{2},1]$, Theorem~\ref{theo:CQ} implies
\begin{align} \label{eq:CQ-exponent-iid}
    -\log_2  \varepsilon\left(\X^n : \B^n\right)_{ p^{\otimes n} }
    \geq
    n \cdot \sup_{\alpha \in [\nicefrac{1}{2},1]} \frac{1-\alpha}{\alpha} \left[ I_{{\alpha}}\left( \X:\B \right)_{\rho} - R \right]
    - \log_2 (1.102), \quad \forall\, n \in \mathds{N}.
\end{align}
If one further chooses an optimal input distribution, the above bound leads to an achievable error exponent:\footnote{The maximal order-$\alpha$ Petz--\Renyi information is called the order-$\alpha$ Petz--\Renyi radius \cite[\S 4]{MO14} or Petz--\Renyi capacity.}
\begin{align} \label{eq:CQ-exponent}
    \liminf_{n\to \infty} - \frac{1}{n} \log_2 \inf_{p_{\X}} \varepsilon\left(\X^n : \B^n\right)_{ p^{\otimes n} }
    \geq
    \sup_{p_{\X}} \sup_{\alpha \in [\nicefrac{1}{2},1]} \frac{1-\alpha}{\alpha} \left[ I_{{\alpha}}\left( \X:\B \right)_{\rho} - R \right].
\end{align}
Here, the asymptotic \emph{transmission rate} is
\begin{align} \label{eq:rate}
R \coloneqq\lim_{n\to \infty} \frac{1}{n} \log_2 |\mathsf{M}|,
\end{align}
which is a \emph{fixed} constant below the channel capacity $\sup_{p_{\X}} I_{1}\left( \X:\B \right)_{p}$.
Such a setting is conventionally called the \emph{large deviation regime} because the rate deviates from the fundamental limit $\sup_{p_{\X}} I_{1}\left( \X:\B \right)_{p}$ by a constant amount (independent of the blocklength $n$).
(In some contexts, it is also called the small-error regime as the error vanishes exponentially fast.)

The above error exponent is positive if and only if $R < \sup_{p_{\X}} I_{1}\left( \X:\B \right)_{\rho} $.
It coincides with the results independently obtained by Renes  \cite{Ren25} and Li--Yang \cite{LY25}, and also matches Dalai's sphere-packing exponent for classical-quantum channels \cite[Theorem 5]{Dal13} (see also \cite[Theorem 8]{CHT19} for the non-asymptotic refined sphere-packing bound) for $R$ higher than the so-called \emph{critical rate} $\sup_{p_{\X}} \left. \frac{\d}{\d s} s I_{\frac{1}{1+s}}\left( \X:\B \right)_{\rho} \right|_{s=1}$.
The advantages of our result in Theorem~\ref{theo:CQ} are that 
(i) we do not have the polynomial prefactor $(n+1)^{|\B|}$, which could be quite large in short blocklengths $n$,
and 
(ii) our error bound accommodates any input distribution $p_{\X}$ available at the encoder, which is of practical importance for implementations in communication protocols assisted by shared randomness or shared entanglement.

\medskip
It is worth mentioning that the first exponential-decay error bound for mixed classical-quantum channels was proved by Hayashi \cite{Hay07}:
\begin{align} \label{eq:Hay07}
\varepsilon(\X:\B)_{p}
		&\leq 4
        (|\mathsf{M}|-1)^{\frac{1-\alpha}{\alpha}}
        \sum_{x\in\X} p_{\X}(x)  \Tr \left[
        \left( \rho_{\B}^x\right)^{2-\frac{1}{\alpha}} \rho_{\B}^{\frac{1-\alpha}{\alpha}}
        \right], \quad \forall\, \alpha \in [\nicefrac{1}{2},1].
\end{align}
Later, the constant $4$ was improved to $1$ by one of the present authors \cite[Proposition 1]{Cheng_simple}.
For $n$-fold product channels, Hayashi's bound leads to the following achievable error exponent (using any i.i.d.~codebook $p_{\X}^{\otimes n}$):
\begin{align} \label{eq:Hayashi-exponent}
    \sup_{\alpha \in [\nicefrac{1}{2},1]} \frac{1-\alpha}{\alpha} \left[ D_{2-\frac{1}{\alpha}}\left( \rho_{\X\B} \Vert \rho_{\X} \otimes \rho_{\B} \right) - R \right].
\end{align}

In Proposition~\ref{prop:comparison_Hayashi} below, we directly show that the established error exponent in \eqref{eq:CQ-exponent} for any fixed $p_{\X}$ is at least as large as \eqref{eq:Hayashi-exponent}.
The proof immediately follows from the duality relations for the conditional \Renyi entropies \cite[Corollary 4]{TBH14}, \cite[Corollary 5.3]{Tom16}, and \cite[Lemma 4]{HT14}, established by Tomamichel, Berta, and Hayashi \cite{TBH14}.

\begin{prop} \label{prop:comparison_Hayashi}
For any quantum state $\rho_{\A\B} \in \mathcal{S}(\A\otimes \B)$ and $\tau_{\A} \geq 0$ such that $\supp(\rho_{\A}) \subseteq \supp(\tau_{\A})$, we have
\begin{align}
    D_{2-\frac{1}{\alpha} } \left( \rho_{\A\B} \Vert \tau_{\A} \otimes \rho_{\B} \right)
    \leq \inf_{\sigma_{\B} \in \mathcal{S}(\B)} D_{\alpha } \left( \rho_{\A\B} \Vert \tau_{\A} \otimes \sigma_{\B} \right),
    \quad \forall\,\alpha \geq \nicefrac{1}{2}.
\end{align}

In particular, for any quantum ensemble, $\{ p_{\X}(x), \rho_{\B}^x \}_{x\in\X}$,
\begin{align}
    \Tr \left[
        \left( \sum_{x\in\X} p_{\X}(x) \left(\rho_{\B}^x\right)^{\alpha}  \right)^{\nicefrac{1}{\alpha}}
        \right]
    \leq 
    \sum_{x\in\X} p_{\X}(x)  \Tr \left[
        \left( \rho_{\B}^x\right)^{2-\frac{1}{\alpha}} \rho_{\B}^{\frac{1-\alpha}{\alpha}}
        \right],
        \quad \forall\,\alpha \geq \nicefrac{1}{2}.
\end{align}
\end{prop}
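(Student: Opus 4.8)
The ``in particular'' trace inequality is a routine consequence of the divergence inequality, so the plan is to reduce it first and then prove the divergence bound via the purification duality of conditional R\'enyi entropies. For the reduction, apply the divergence inequality with $\A\to\X$ a classical register and $\tau_\A\to\rho_\X=p_\X$. By the quantum Sibson identity (exactly as in the paragraph after Theorem~\ref{theo:CQ}), the right-hand side is $\inf_{\sigma_\B}D_\alpha(\rho_{\X\B}\Vert\rho_\X\otimes\sigma_\B)=I_\alpha(\X:\B)_\rho=\tfrac{\alpha}{\alpha-1}\log_2\Tr[(\sum_x p_\X(x)(\rho_\B^x)^\alpha)^{\nicefrac1\alpha}]$, while the diagonal form of $\rho_{\X\B}$ gives $D_{2-\nicefrac1\alpha}(\rho_{\X\B}\Vert\rho_\X\otimes\rho_\B)=\tfrac{\alpha}{\alpha-1}\log_2\sum_x p_\X(x)\Tr[(\rho_\B^x)^{2-\nicefrac1\alpha}\rho_\B^{\nicefrac1\alpha-1}]$ (the $p_\X(x)$-powers combine to $p_\X(x)^1$, and the two prefactors agree since $\tfrac1{(2-1/\alpha)-1}=\tfrac{\alpha}{\alpha-1}$). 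Since $\tfrac{\alpha}{\alpha-1}<0$ for $\alpha\in[\nicefrac12,1)$, dividing the divergence inequality by this negative number reverses it and produces precisely the claimed trace inequality; at $\alpha=1$ it is the trivial equality $\Tr[\rho_\B]=\sum_x p_\X(x)\Tr[\rho_\B^x]$.

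It remains to establish $D_{2-\nicefrac1\alpha}(\rho_{\A\B}\Vert\tau_\A\otimes\rho_\B)\le\inf_{\sigma_\B}D_\alpha(\rho_{\A\B}\Vert\tau_\A\otimes\sigma_\B)$. Up to a sign, its two sides are conditional R\'enyi entropies with respect to the reference operator $\tau_\A$: the right-hand side is $-\,H^{\uparrow}_\alpha(\A\mid\B)$, the \emph{optimized} Petz conditional entropy of order $\alpha$, and the left-hand side is $-\,H^{\downarrow}_{2-\nicefrac1\alpha}(\A\mid\B)$, the \emph{non-optimized} one of order $2-\nicefrac1\alpha$. Fix a purification $\psi_{\A\B\C}$ of $\rho_{\A\B}$ and invoke the purification-duality relations for conditional R\'enyi entropies \cite[Corollary~4]{TBH14}, \cite[Corollary~5.3]{Tom16}, \cite[Lemma~4]{HT14}: each of these two $\A\mid\B$-quantities is re-expressed as a conditional R\'enyi entropy of the complementary pair $\A\mid\C$ at a dual order (governed by the convention $\tfrac1\mu+\tfrac1\nu=2$), with the optimized and non-optimized flavors interchanged and the Petz divergence replaced by the sandwiched one. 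On the $\A\C$-side the inequality then follows by combining the ordering ``optimized $\ge$ non-optimized'' with monotonicity of the conditional R\'enyi entropy in its order; the relevant order shift is $\alpha\mapsto 2-\nicefrac1\alpha$, controlled by the elementary bound $2-\nicefrac1\alpha\le\alpha$, i.e.\ $(\alpha-1)^2\ge0$, which is an equality exactly at $\alpha=1$ (matching the equality case found above). Dualizing back returns the $\A\mid\B$ quantities and closes the argument.

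The hard part will be the bookkeeping in the duality step: one must pair up the duality relations correctly (Petz ``up'' $\leftrightarrow$ sandwiched ``down'' and Petz ``down'' $\leftrightarrow$ sandwiched ``up'', each with the order relation $\tfrac1\mu+\tfrac1\nu=2$) so that the ordering and order-monotonicity on the $\A\C$-side push in the correct direction---in particular, passing from the optimized quantity at order $\alpha$ to the non-optimized one at the smaller order $2-\nicefrac1\alpha$ genuinely uses the duality rather than naive monotonicity, since ``non-optimized $\ge$ optimized at a strictly larger order'' is exactly the content that the dual picture supplies. One must also carry the general reference $\tau_\A$ (rather than $\I_\A$) throughout, which forces the use of the $\tau$-dependent ``generalized'' conditional R\'enyi entropies---this is the generality in which the cited results are already phrased. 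As an alternative, more computational route, one could try to prove the trace inequality directly by Araki--Lieb--Thirring/H\"older-type manipulations (note that equality holds both when the $\rho_\B^x$ coincide and when they have mutually orthogonal supports), but the duality argument is the cleaner one.
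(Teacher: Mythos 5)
Your high-level plan---prove the divergence inequality via purification duality with the cited Hayashi--Tomamichel relations and then specialize to a classical register $\X$---is the paper's route, and the reduction paragraph is fine. But the step you flag as ``the hard part'' is where the proposal goes off, and as described it would not close.

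On the complementary side the picture is simpler than you describe: there is \emph{no} order shift and \emph{no} optimized-versus-non-optimized comparison. Writing $H^\downarrow_\mu(\A|\B)_\tau \coloneq -D_\mu(\rho_{\A\B}\Vert\tau_\A\otimes\rho_\B)$ and $H^\uparrow_\mu(\A|\B)_\tau \coloneq -\inf_{\sigma}D_\mu(\rho_{\A\B}\Vert\tau_\A\otimes\sigma_\B)$, the two relevant dualities for a purification $\rho_{\A\B\C}$ are
\begin{align*}
H^\downarrow_\mu(\A|\B)_\tau &= -\,H^\downarrow_{2-\mu}(\A|\C)_{\tau^{-1}} \qquad\text{(Petz--Petz, convention } \mu+\nu=2\text{)},\\
H^\uparrow_\mu(\A|\B)_\tau &= -\,\widetilde{H}^\downarrow_{1/\mu}(\A|\C)_{\tau^{-1}} \qquad\text{(Petz-up/sandwiched-down, convention } \mu\nu=1\text{)}.
\end{align*}
Plugging $\mu=2-\nicefrac1\alpha$ into the first and $\mu=\alpha$ into the second, \emph{both} duals land at the \emph{same} order $\nicefrac1\alpha$ on the $\A\C$ side and both are non-optimized; the only difference is Petz versus sandwiched. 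The target inequality is thus equivalent to $D_{1/\alpha}(\rho_{\A\C}\Vert\tau_\A^{-1}\otimes\rho_\C)\ge\widetilde D_{1/\alpha}(\rho_{\A\C}\Vert\tau_\A^{-1}\otimes\rho_\C)$, which is Araki--Lieb--Thirring, full stop. That is exactly the paper's three-line proof.

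Your stated closing argument---``optimized $\ge$ non-optimized'' combined with order-monotonicity via $2-\nicefrac1\alpha\le\alpha$---does not chain: one would need $H^\downarrow_{2-1/\alpha}(\A|\B)\ge H^\uparrow_\alpha(\A|\B)$, but monotonicity only yields $H^\downarrow_{2-1/\alpha}\ge H^\downarrow_\alpha$, and the flavor ordering runs the wrong way, $H^\downarrow_\alpha\le H^\uparrow_\alpha$. You sense this (``genuinely uses the duality rather than naive monotonicity'') but never supply the replacement. The duality conventions are also garbled: $\tfrac1\mu+\tfrac1\nu=2$ is the sandwiched-up/sandwiched-up relation, which is not used here, and ``Petz down $\leftrightarrow$ sandwiched up'' is not one of the three standard dualities. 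So the route is correct, but the single fact that actually closes the argument---ALT at the one common order $\nicefrac1\alpha$---is missing.
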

\begin{proof}
    For any purification $\rho_{\A\B\C}$ of $\rho_{\A\B}$, we invoke \cite[Lemma 4]{HT14} as follows: for any $\alpha \geq \nicefrac{1}{2}$,
    \begin{align*}
        D_{2-\frac{1}{\alpha} } \left( \rho_{\A\B} \Vert \tau_{\A} \otimes \rho_{\B} \right)
        &=
        - D_{\frac{1}{\alpha} } \left( \rho_{\A\C} \Vert \tau_{\A}^{-1} \otimes \rho_{\C} \right) && \text{(by \cite[(3.14)]{HT14})}
        \\
        &\leq - \widetilde{D}_{\frac{1}{\alpha} } \left( \rho_{\A\C} \Vert \tau_{\A}^{-1} \otimes \rho_{\C} \right)
        \\
        &=
        \inf_{\sigma_{\B} \in \mathcal{S}(\B)} D_{\alpha } \left( \rho_{\A\B} \Vert \tau_{\A} \otimes \sigma_{\B} \right),
        && \text{(by \cite[(3.13)]{HT14})}
    \end{align*}
    where $\widetilde{D}_{{1}/{\alpha} }$ is the \emph{sandwiched \Renyi divergence}
    \cite{MDS+13, WWY14} (whose exact definition is not important here), which is always less than the Petz--\Renyi divergence  ${D}_{2-\nicefrac{1}{\alpha} }$ by the Araki--Lieb--Thirring inequality {\cite{Ara90, LT76}}.
\end{proof}

\subsection{Moderate Deviation Regime} \label{sec:moderate_deviation}

In communication, one endeavors to operate at the highest possible rate $R$ (cf.~\eqref{eq:rate}).
Hence, we can let the transmission rate approach channel capacity as the blocklength $n$ increases and study the asymptotic minimum error.
As $R_n$ approaches channel capacity at a rate slower than $\nicefrac{1}{\sqrt{n}}$, the error probability does not decay exponentially but still vanishes \cite{PV10, AW14b, CH17, CTT2017}.
Such a situation is called the \emph{moderate deviation regime} because the rate deviates from the fundamental limit by a moderate amount, and it is also called the medium-error regime.

In \cite[Theorem 4]{CH17}, the moderate deviation analysis was done by using Hayashi's bound \eqref{eq:Hay07}.
In the following, we show that Theorem~\ref{theo:CQ} also directly leads to vanishing errors in the moderate deviation regime.
This can be considered the quantum analogue of \cite[Theorem 2.1]{AW14b}.

Let $(a_n)_{n\in\mathds{N}}$ be any real sequence such that 
\begin{align}
    \lim_{n\to \infty} a_n = 0, \quad \text{and} \quad
    \lim_{n\to \infty} \sqrt{n} \cdot a_n = \infty.
\end{align}
Suppose that the channel $\mathscr{N}_{\X\to \B} $ satisfies\footnote{Note that the global constraint (over $s\in[0,1]$) in \eqref{eq:Upsilon} can be slightly weakened to being only around $s=0$ for a capacity-achieving input distribution.}
\begin{align} \label{eq:Upsilon}
    \Upsilon \coloneqq\sup_{s\in[0,1]} \sup_{p_{\X}} \left| \frac{\d^3}{\d s^3} s I_{\frac{1}{1+s}} (\X:\B)_{\rho} \right| < \infty.
\end{align}
Denote the channel capacity by $C_{\mathscr{N}} \coloneqq\sup_{p_{\X}} I_1(\X:\B)_{\rho}$, and define the \emph{channel dispersion} as:
\begin{align}
    V_{\mathscr{N}}
    &\coloneqq \inf_{p_{\X} : I_1(\X:\B)_{\rho} = C_{\mathscr{N}} } \mathds{E}_{x \sim p_{\X}} V\left( \rho_{\B}^x \Vert \rho_{\B} \right), \label{eq:V}
    \\
    V(\rho\Vert\sigma)
    &\coloneqq \Tr\left[ \rho \left( \log_2 \rho - \log_2 \sigma - D_1(\rho\Vert\sigma)\I \right)^2 \right].
\end{align}
For channels $\mathscr{N}$ with positive $V_{\mathscr{N}}>0$,
we choose $\alpha_n = (1 + \frac{a_n}{V_{\mathscr{N}}})^{-1}$ in \eqref{eq:CQ-exponent-iid} along with the Taylor series expansion of $D_{\alpha}$ around $1$ (see e.g.~\cite[Proposition 2]{CH17}) to obtain
\begin{align}
\sup_{\alpha \in [\nicefrac{1}{2},1]} \frac{1-\alpha}{\alpha} \left[ I_{{\alpha}}\left( \X:\B \right)_{\tilde{\rho}} - (C_{\mathscr{N}} - a_n) \right]
&\geq \frac{a_n^2}{2 V_{\mathscr{N}} \log(2)} -  \frac{a_n^3}{ 6 V_{\mathscr{N}}^3 \log(2)} \Upsilon,
\end{align}
where $\tilde{\rho}_{\X\B} = \sum_{x\in\X} \tilde{p}_{\X}(x) |x\rangle \langle x|\otimes \rho_{\B}^x$ and $\tilde{p}_{\X}$ is any dispersion-achieving input distribution (i.e., $\tilde{p}_{\X}$ achieves \eqref{eq:V}).
Then, Theorem~\ref{theo:CQ} implies
\begin{align}
    - \frac{1}{n a_n^2} \log_2 \varepsilon (\X^n : \B^n)_{\tilde{p}^{\otimes n}} 
    \geq  \frac{1}{2 V_{\mathscr{N}}\log(2)} \left( 1 - \frac{a_n}{3V_{\mathscr{N}}^2}\Upsilon \right)
    - \frac{\log_2 c_{\alpha_n}}{ n a_n^2 }
    \to \frac{1}{2 V_{\mathscr{N}}\log(2)}
\end{align}
as $n\to \infty$ (by noting that $c_{\alpha_n} = \mathcal{O}(1)$).
Moreover, the sub-exponential vanishing rate $\frac{1}{2 V_{\mathscr{N}}\log(2)}$ is tight \cite[Theorem 5]{CH17}.

\subsection{Small Deviation Regime} \label{sec:small_deviation}
For the large-error regime, also known as the non-vanishing-error regime, one asks for the largest achievable transmission rate $R$ with a prescribed error tolerance, that is, subject to the minimum error probability being no larger than a fixed constant, say $\epsilon \in (0,1)$ \cite[Proposition 2]{Cheng_simple}.
In the i.i.d.~asymptotics, the largest achievable rate converges to the channel capacity at a speed of $\mathcal{O}(\nicefrac{1}{\sqrt{n}})$
\cite{TH13, Li14, TT15, DPR16, OMW19, PW20, Cheng_simple}.
Hence, it is also called the \emph{small deviation regime}, as the rate deviates from the fundamental limit by a small amount $\nicefrac{1}{\sqrt{n}}$.

Given that the optimizing order $\alpha$ approaches $1$, it is natural to choose the $1$-PGM in decoding.
As shown in \cite{Cheng_simple}, the key ingredient giving the tightest one-shot characterization for $R$ is the following trace inequality when using the conventional $1$-PGM:
\begin{align}
\Tr\left[A (A+B)^{-\nicefrac{1}{2}} B (A+B)^{-\nicefrac{1}{2}} \right] \leq \Tr\left[ A \wedge B \right], \quad \forall\, A,B\geq 0,
\end{align}
where the noncommutative minimum $A\wedge B$ is given in \eqref{eq:noncommutative_minimum}.
Later, Beigi and Tomamichel proved that the integral $1$-PGM also satisfies a similar inequality \cite[Lemma 4]{BT24}, i.e.,
\begin{align} \label{eq:BT24_trace}
\Tr\left[A \frac{B}{A+B} \right] \leq \Tr\left[ A \wedge B \right],  \quad \forall\, A,B\geq 0
\end{align}
by adapting the analysis in \cite[Lemma 1]{Cheng_simple}.
Hence, the decoder in \eqref{eq:alpha-PGM} for $\alpha = 1$ can also achieve the state-of-the-art performance in the small deviation regime.

In the following, we show that \eqref{eq:BT24_trace} is a natural consequence of the operator layer cake theorem.
Using Theorem~\ref{theo:Dlog_formula} for $\frac{B}{A+B}$, we have
\begin{align}
\Tr\left[A \frac{B}{A+B} \right]
&= \int_{0}^1 \Tr\left[ A \proj{ A < \frac{1-u}{u} B } \right] \d u
\leq \int_{0}^1 \Tr\left[ A \proj{ A < \frac{1}{u} B } \right] \d u
= \Tr\left[ A \wedge B \right].
\end{align}
Here, the inequality is because the map $\gamma \mapsto \Tr\left[ A \proj{A < \gamma B } \right]$ is monotone increasing for $A,B\geq 0$ \cite{LHC25_layer_cake}\footnote{Monotonicity of the map $\gamma \mapsto \Tr\left[ A \proj{A < \gamma B } \right]$ is equivalent to convexity of the hockey-stick divergence in $\gamma$, i.e., 
$\gamma \mapsto E_{\gamma}(A\Vert B) \coloneqq\Tr\left[(A-\gamma B)_+\right]$, which is also equivalent to the monotonicity of the
\emph{information-spectrum divergence} $D_\textnormal{S}^{\epsilon}(\rho \,\|\, \sigma )\coloneqq\sup\left\{ \gamma \in \mathds{R} : \Tr\left[ \rho \left\{ \rho\leq \mathrm{e}^\gamma \sigma \right\} \leq \eps \right]  \right\} $ \cite{VH94, Han03, HN03, NH07, TH13} in the error parameter $\epsilon$.
Note that the known smooth entropic quantities such as the \emph{$\epsilon$-hypothesis-testing divergence}
$D_\textnormal{H}^\epsilon(\rho \,\|\, \sigma) \coloneqq\sup_{0\leq T\leq \I } \left\{ -\log \Tr[\sigma T] : \Tr[\rho T] \geq 1 - \varepsilon \right\}$
\cite{TH13, WR13, Li14} and related quantities; see e.g.~\cite{RLD25}, all satisfy this property; namely, the distinguishability increases once the error tolerance is relaxed.}, which may be viewed as a relaxation.
The last equality is from the integral representation in Proposition~\ref{prop:min} below.

\begin{prop}[Integral representation for the tracial noncommutative minimum] \label{prop:min}
For positive semidefinite $A, B\geq 0$, we have
\begin{align}
\Tr\left[ A \wedge B \right]
&= \int_0^1 \Tr\left[ A \proj{ u A < B } \right] \d u,
\end{align}
where $A\wedge B \coloneqq\frac12\left[ A+B- |A-B| \right]$.
\end{prop}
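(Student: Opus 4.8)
The plan is to reduce both sides of the claimed identity to the single quantity $\Tr[(B-A)_+]$, and then to evaluate the integral by the fundamental theorem of calculus applied to $u\mapsto\Tr[(B-uA)_+]$. For the left-hand side, use the identity $|X|=2(X)_+-X$, valid for every Hermitian $X$, with $X=B-A$:
\begin{align}
A\wedge B=\frac{1}{2}\big(A+B-|A-B|\big)=\frac{1}{2}\big(A+B-2(B-A)_++(B-A)\big)=B-(B-A)_+ ,
\end{align}
so that $\Tr[A\wedge B]=\Tr[B]-\Tr[(B-A)_+]$. For the right-hand side, set $\phi(u)\coloneq\Tr[(B-uA)_+]$, so $\phi(0)=\Tr[(B)_+]$ and $\phi(1)=\Tr[(B-A)_+]$; it then suffices to show $\int_0^1\Tr[A\,\proj{uA<B}]\,\d u=\phi(0)-\phi(1)$, and the claim follows since $\Tr[(B)_+]=\Tr[B]$ for positive semi-definite $B$, which is the situation in every application of the proposition.

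The remaining work is to differentiate $\phi$. This function is convex on $\mathds{R}$, being the pointwise supremum of the affine functions $u\mapsto\Tr[(B-uA)T]$ over tests $0\leq T\leq\I$; hence $\phi$ is Lipschitz and absolutely continuous on $[0,1]$. Since the eigenvalues of the affine pencil $u\mapsto B-uA$ may be chosen real-analytic in $u$, the spectral projection $\proj{uA<B}=\proj{B-uA>0}$ is piecewise constant on $[0,1]$, changing only at the finitely many $u$ at which some eigenvalue of $B-uA$ crosses $0$. On each interval of constancy $\phi$ is real-analytic and, by the Hellmann--Feynman formula (equivalently Danskin's envelope theorem, the maximizing test being $\proj{B-uA>0}$),
\begin{align}
\phi'(u)=\Tr\big[\frac{\d}{\d u}(B-uA)\cdot\proj{B-uA>0}\big]=-\Tr\big[A\,\proj{uA<B}\big] .
\end{align}
Integrating and applying the fundamental theorem of calculus for absolutely continuous functions then gives $\int_0^1\Tr[A\,\proj{uA<B}]\,\d u=-(\phi(1)-\phi(0))=\Tr[(B)_+]-\Tr[(B-A)_+]=\Tr[A\wedge B]$.

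The one genuinely delicate step is justifying the differentiation of $u\mapsto\Tr[(B-uA)_+]$ across the eigenvalue crossings of the pencil $B-uA$, where $\proj{B-uA>0}$ jumps. I would dispatch this either through the analytic eigenvalue selection above (so the crossings are isolated, hence Lebesgue-null and harmless for the integral) or, more robustly, through Danskin's theorem: convexity of $\phi$ forces one-sided derivatives everywhere, these lie between $-\Tr[A\,\proj{B-uA\geq0}]$ and $-\Tr[A\,\proj{B-uA>0}]$, and the two bounds agree off the null crossing set, so neither the integrand nor its integral changes. A calculus-free alternative is to combine the scalar layer-cake identity $(X)_+=\int_0^\infty\proj{X>s}\,\d s$ with the operator layer cake theorem (Theorem~\ref{theo:Dlog_formula}) and a Fubini interchange, at the cost of heavier bookkeeping.
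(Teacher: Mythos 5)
Your proof follows the same strategy as the paper's: rewrite $A\wedge B=B-(B-A)_+$, set $\phi(u)=\Tr[(B-uA)_+]$, differentiate, and integrate over $[0,1]$. The paper's version is terser --- it cites a lemma for the piecewise differentiability --- whereas you supply the convexity/Danskin justification in-line, which is a strengthening rather than a different route.

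You do, however, correctly surface something the paper glosses over. The identity $-\phi(u)\big\vert_{u=0}^{u=1}=\Tr[B]-\Tr[(B-A)_+]$ used in the paper requires $\Tr[(B)_+]=\Tr[B]$, i.e.\ $B\geq 0$, yet the proposition's hypothesis only says ``Hermitian.'' You flag this and restrict to $B\geq 0$, which is in fact necessary: for $A=1$, $B=-1$ one has $\Tr[A\wedge B]=-1$ while $\int_0^1\Tr[A\{uA<B\}]\,\d u=0$. Since every invocation in the paper has $A,B\geq 0$, this is a hypothesis imprecision in the proposition's statement rather than a flaw in your argument; it would be worth stating $B\geq 0$ (or $A,B\geq 0$) explicitly.
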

\begin{proof}
Note that $A\wedge B = B - (B-A)_+$.
We have
\begin{align*}
	\Tr\left[ A \wedge B \right]
	&= \Tr[B-(B-A)_+]
    =-\Tr[(B-uA)_+]\big\vert_{u=0}^{u=1}
    \\
    &=\int_{0}^1 \left(-\frac{\d}{\d u}\Tr\left[ (B-u A)_+ \right]\right) \d u
    \\
    &=\int_{0}^1 \Tr\left[ A \left\{ B-u A > 0 \right\} \right] \d u,
\end{align*}
where we use the Lebesgue integral and note that $u\mapsto \Tr[(B-u A)_+]$ is continuous and piecewise differentiable almost everywhere (see e.g.~\cite[Lemma 2.3]{LHC25_layer_cake}).
\end{proof}

\section{Constrained Classical-Quantum Channel Coding} \label{sec:CC}

The goal of this section is still classical-quantum channel coding, but now with a constrained input set $\Z \subset \X$.
In the $n$-shot setting, we will show later in Section~\ref{sec:cc_codes} that constant-composition codes are optimal in the sense that the established error exponent matches the sphere-packing exponent for \emph{each} composition.\footnote{Since constant-composition codes are finite-blocklength objects, the error exponent here also refers to the finite-blocklength exponent, namely, the leading term of $-\log \varepsilon$ after ignoring other sublinear terms. We will discuss this issue in Remark~\ref{remark:CC} later.}

We follow the notation given in Section~\ref{sec:CQ} and fix a classical-quantum channel $\mathscr{N}_{\X\to\B}: x\mapsto \rho_{\B}^x$ and an input distribution $p_{\X}$.
Since the codewords are now constrained to the set $\Z$ with $p_{\X}(\Z)>0$, we instead employ the following conditional probability distribution for random coding:
\begin{align} \label{eq:p_breve}
    \breve{p}_{\X}(x)
    &\coloneqq\frac{\{x \in \Z \}\cdot  p_{\X}(x) }{ p_{\X}(\Z) }
\end{align}
(here $\{x \in \Z \}$ is understood as the indicator function of the event $x \in \Z$).
We use the notation $\breve{p}_{\X}$ to highlight that it is induced from the original distribution $p_{\X}$.
Additionally, we still write $\X$ in the subscript of $\breve{p}_{\X}$ instead of $\Z$ for notational convenience, since we can always regard $\Z$ as embedded in the larger system $\X$.

Our goal is to bound the random coding error probability $\varepsilon({\X}:\B)_{\breve{p}}$ as in \eqref{eq:error_CQ}, and characterize it in terms of the c-q channel $\mathscr{N}_{\X\to\B}$ and input distribution $p_{\X}$.
    
To characterize $\varepsilon(\X:\B)_{\breve{p}}$, we introduce the \emph{order-$\alpha$ Petz--Augustin information} \cite{Aug69, Aug78, Csi95, nakiboglu_augustin_2018, MO18, CGH18} for the input distribution $p_{\X}$ as
\begin{align} \label{eq:Augustin}
{I}^{\text{Aug}}_{\alpha}(p;\mathscr{N})
\coloneqq\inf_{\sigma_{\B} \in \mathcal{S}(\B)} \sum_{x\in\X} p_{\X}(x) D_{\alpha}\left( \rho_{\B}^x \Vert \sigma_{\B} \right), \quad \alpha \in (0,1].
\end{align}
It is known that when $|\X|<\infty$, the minimizer in ${I}^{\text{Aug}}_{\alpha}(p_{\X};\mathscr{N})$ is always attained by a unique state, called the \emph{Augustin mean}, and we denote it by $\breve{\sigma}_{\B}^{\star}(\alpha,p_{\X}) \in \mathcal{S}(\B)$.
(We will drop the dependence $(\alpha,p_{\X})$ for simplicity when there is no possibility of confusion.) 
Moreover, it satisfies a fixed-point property (see \cite[Proposition 2-(b)]{CHT19}, \cite[Theorem IV.14]{MO18} for the finite-dimensional case and \cite{Cheng2021d}, \cite[(32)]{CB24} for the infinite-dimensional case):
\begin{align} \label{eq:fixed-point}
\breve{\sigma}_{\B}^{\star}(\alpha,p_{\X})
=  \left( \sum_{x\in\X} p_{\X}(x) 2^{ (1-\alpha) D_{\alpha}\left(\rho_{\B}^x \Vert \breve{\sigma}_{\B}^{\star}(\alpha, p_{\X})\right)  } \left(\rho_{\B}^x\right)^{\alpha}  
\right)^{\nicefrac{1}{\alpha}}, \quad \alpha \in (0,1].
\end{align}
This further implies that the Augustin mean has the tensorization property, i.e., 
\begin{align} \label{eq:tensor_Augustin_mean}
\breve{\sigma}_{\B^n}^{\star}\left(\alpha, p_{\X}^{\otimes n}\right) = \left(\breve{\sigma}_{\B}^{\star}(\alpha, p_{\X}) \right)^{\otimes n}    
\end{align} 
for $n$-fold product $p_{\X}^{\otimes n}$ and $\mathscr{N}_{\X\to\B}^{\otimes n}$.
Note that, for $\alpha=1$, the Augustin information coincides with the usual mutual information, i.e.,
${I}^{\text{Aug}}_{1}(p;\mathscr{N}) = {I}_{1}(\X:\B)_{\rho} = D_1(\rho_{\X\B}\Vert\rho_{\X}\otimes \rho_{\B})$.

\begin{theo}[One-shot bound for constrained classical-quantum channel coding] \label{theo:CC-one-shot} 
For any classical-quantum channel $\mathscr{N}_{\X \to \B}: x\mapsto \rho_{\B}^x$ and distribution $p_{\X}$, the random coding error probability  for sending $|\mathsf{M}|$ messages with an induced input distribution $\breve{p}_{\X}$ on the constrained set $\Z \subset \X$ with $p_{\X}(\Z)>0$, i.e., \eqref{eq:p_breve}, is upper bounded by
	\begin{align} \notag
		{\varepsilon}(\X:\B)_{\breve{p}}
		&\leq \frac{c_{\alpha}}{ p_{\X}(\Z)^{\nicefrac{1}{\alpha}} } \cdot
        2^{
        -\frac{1-\alpha}{\alpha} \left[  \inf_{x \in \Z } D_{\alpha}\left(\rho_{\B}^{x}\Vert \breve{\sigma}_{\B}^{\star}(\alpha,p_{\X}) \right)  - \log_2(|\mathsf{M}|-1) \right]
        }, \quad \forall\, \alpha \in [\nicefrac{1}{2},1].
	\end{align}
    Here, $\breve{\sigma}_{\B}^{\star}(\alpha,p_{\X})$ is the minimizer in \eqref{eq:Augustin}.
\end{theo}
\begin{remark}
The analysis of Theorem~\ref{theo:CC-one-shot} is largely inspired by Nakibo\u{g}lu's variant of Gallager's inner bound for classical channels \cite[Lemma 5]{Nak20}.
Our contribution here is to show that, via the tilting inequality (Proposition~\ref{prop:key}) and the analysis established in Section~\ref{sec:CQ}, the idea extends to classical-quantum channels with a constrained set as well.
\end{remark}

\begin{proof}
Fix any $\alpha \in [\nicefrac{1}{2},1]$, any $p_{\X}$, and the corresponding $\breve{p}_{\X}$.
For simplicity, we write $ \breve{\sigma}_{\B}^{\star} = \breve{\sigma}_{\B}^{\star}(\alpha,p_{\X})$ subsequently.
We directly apply Theorem~\ref{theo:CQ} to obtain
\begin{align*}
\varepsilon(\X:\B)_{\breve{p}}
		&\leq c_{\alpha}
        (|\mathsf{M}|-1)^{\frac{1-\alpha}{\alpha}}
        \Tr \left[
        \left( \sum_{x\in \Z} \breve{p}_{\X}(x) \left(\rho_{\B}^x\right)^{\alpha}  \right)^{\nicefrac{1}{\alpha}}
        \right]
        \\
        &\overset{\text{(a)}}{\leq}
        c_{\alpha}
        (|\mathsf{M}|-1)^{\frac{1-\alpha}{\alpha}}
        \Tr \left[
        \left( \sum_{x\in\X} \frac{\{x \in \Z \}\cdot  p_{\X}(x) }{ p_{\X}(\Z) } 2^{ (1-\alpha) D_{\alpha}\left(\rho_{\B}^x \Vert \breve{\sigma}_{\B}^{\star}\right)} \left(\rho_{\B}^x\right)^{\alpha}  \right)^{\nicefrac{1}{\alpha}}
        \right] 
        \cdot \sup_{x\in\Z} 2^{ \frac{\alpha-1}{\alpha} D_{\alpha}\left(\rho_{\B}^x \Vert \breve{\sigma}_{\B}^{\star}\right)}
        \\
        &\leq
        c_{\alpha}
        (|\mathsf{M}|-1)^{\frac{1-\alpha}{\alpha}}
        \Tr \left[
        \left( \sum_{x\in\X} \frac{ p_{\X}(x) }{ p_{\X}(\Z) } 2^{ (1-\alpha) D_{\alpha}\left(\rho_{\B}^x \Vert \breve{\sigma}_{\B}^{\star}\right)} \left(\rho_{\B}^x\right)^{\alpha}  \right)^{\nicefrac{1}{\alpha}}
        \right] 
        \cdot \sup_{x\in\Z} 2^{ \frac{\alpha-1}{\alpha} D_{\alpha}\left(\rho_{\B}^x \Vert \breve{\sigma}_{\B}^{\star}\right)}
        \\
        &\overset{\text{(b)}}{=}
        \frac{c_{\alpha}}{ p_{\X}(\Z)^{\nicefrac{1}{\alpha}}  }
        (|\mathsf{M}|-1)^{\frac{1-\alpha}{\alpha}} \Tr\left[ \breve{\sigma}_{\B}^{\star}\right]\cdot \sup\nolimits_{x\in\Z} 2^{ \frac{\alpha-1}{\alpha} D_{\alpha}\left(\rho_{\B}^x \Vert \breve{\sigma}_{\B}^{\star}\right)}
        \\
        &=
        \frac{c_{\alpha}}{ p_{\X}(\Z)^{\nicefrac{1}{\alpha}}  }
        (|\mathsf{M}|-1)^{\frac{1-\alpha}{\alpha}} \sup\nolimits_{x\in\Z} 2^{ \frac{\alpha-1}{\alpha} D_{\alpha}\left(\rho_{\B}^x \Vert \breve{\sigma}_{\B}^{\star}\right)},
\end{align*}
where in (a) we changed the priors $\breve{p}_{\X}$, and 
we invoked the fixed-point property \eqref{eq:fixed-point} in (b).
\end{proof}

\subsection{Constant-Composition Codes} \label{sec:cc_codes}
Theorem~\ref{theo:CC-one-shot} is useful in bounding the error exponent for \emph{constant-composition codes}.
Fix an integer $n\in\mathds{N}$ such that the input distribution $p_{\X}$ is an \emph{$n$-type} (whose probability values are integer multiples of $\frac{1}{n}$).
Now, the constrained set is the typical set, namely, the set of sequences $x^n \in \X^n$ having the same empirical distribution or composition of $p_{\X}$, i.e.,~
\begin{align*}
\Z^n &\leftarrow  \mathcal{T}_{p_{\X}}^n
\coloneqq\left\{ x^n \in \X^n : p_{\X}(x) = \frac{ \sum_{i=1}^n \{ x_i = x \}}{n}, \quad \forall\, x\in\X \right\},
\end{align*}
and hence
\begin{align} \label{eq:CC}
\breve{p}_{\X^n}(x^n) = \frac{ 1 }{\left| \mathcal{T}^n_{p_{\X}} \right|} \left\{ x^n \in  \mathcal{T}^n_{p_{\X}}  \right\} , \quad \forall\, x^n \in \X^n.
\end{align}

Notice that the length-$n$ joint input-output state 
\begin{align}
\breve{\rho}_{\X^n \B^n} = \sum_{x^n \in \X^n} \breve{p}_{\X^n}(x^n) |x^n\rangle\langle x^n |_{\X^n} \otimes \rho_{\B_1}^{x_1} \otimes \cdots \otimes \rho_{\B_n}^{x_n}
\end{align}
is no longer a product state.

\begin{theo}[Random coding bound for constant-composition codes] \label{theo:CC}
Let $\mathscr{N}_{\X\to\B} \colon x\mapsto \rho_{\B}^x$ be a classical-quantum channel.
For any integer $n\in\mathds{N}$ for which $p_{\X}$ is an $n$-type, and any rate $R\coloneqq\frac{1}{n}\log_2 |\mathsf{M}|$, the following bound holds for the constant-composition code $\breve{p}_{\X^n}$ in \eqref{eq:CC}:
\begin{align} \label{eq:CC_bound1}
\log_2 \varepsilon(\X^n:\B^n)_{\breve{p}} \leq - n \frac{1-\alpha}{\alpha} \left[ {I}^{\textnormal{Aug}}_{\alpha}(p;\mathscr{N}) - R \right] + \frac{|\X|}{\alpha}\log_2 (n+1) + \log_2 c_{\alpha}, \quad \forall\, \alpha \in [\sfrac12, 1].
\end{align}
Here, the order-$\alpha$ Petz--Augustin information ${I}^{\textnormal{Aug}}_{\alpha}(p;\mathscr{N})$ is defined in \eqref{eq:Augustin} with respect to the state $\rho_{\X\B} = \sum_{x\in\X} p_{\X}(x)|x\rangle \langle x|_{\X} \otimes \rho_{\B}^x$.

Moreover, for $R \geq \frac{\d}{\d s} s {I}^{\textnormal{Aug}}_{\frac{1}{1+s}}(p;\mathscr{N})\big|_{s=1}$,
\begin{align} \label{eq:CC_bound2}
 -\log_2 \varepsilon(\X^n:\B^n)_{\breve{p}}
= n \cdot \sup_{\alpha \in [\nicefrac{1}{2}, 1] }\frac{1-\alpha}{\alpha} \left[ {I}^{\textnormal{Aug}}_{\alpha}(p;\mathscr{N}) - R \right]
+ \mathcal{O}(\log_2 n)
\end{align}
(where the exact factors in $\mathcal{O}(\log_2 n)$ can be found in \eqref{eq:CC_bound1} and in the converse bound \cite{CHT19}).
\end{theo}

\begin{remark} \label{remark:CK}
In the commuting case, Theorem~\ref{theo:CC} recovers (the non-asymptotic version of) Csisz{\'a}r--K{\"o}rner's random coding bound \cite[Theorem 10.2]{CK11}.
Note that our result in terms of the parametric \Renyi divergence, $\sup_{\alpha\in[\nicefrac{1}{2},1]} \frac{1-\alpha}{\alpha} \left[ {I}^{\textnormal{Aug}}_{\alpha}(p;\mathscr{N}) - R \right]$, is commonly called the \emph{dual-domain} expression \cite{Bla74}, while Csisz{\'a}r--K{\"o}rner's result is expressed in the \emph{primal domain} (which matches the Haroutunian form of the sphere-packing bound \cite{Har68}):
\begin{align} \label{eq:CK}
\inf_{ \mathscr{M}_{\X\to\B}\colon x\mapsto \sigma_{\B}^x  } \left\{
\mathds{E}_{x\sim p_{\X}} D_1( \sigma_{\B}^x \Vert \rho_{\B}^x ) + 
\left( {I}^{\text{Aug}}_1(p;\mathscr{M}) - R
\right)_+
\right\}.
\end{align}
However, in the noncommuting case, the dual-domain expression of \eqref{eq:CK} corresponds to the Augustin information defined via the \emph{log-Euclidean} \Renyi divergence \cite{MO14, Dal16, CHT19}, instead of the Petz--\Renyi divergence.
This means that the random coding error exponent of the form \eqref{eq:CK} is generally not achievable because of the sphere-packing bound established in \cite{DW14, CHT19}.
In light of this, our result in Theorem~\ref{theo:CC} is the only such expression currently known to us; we do not know if there is a primal-domain expression as in the classical case.
\end{remark}

\begin{remark} \label{remark:CC}
Theorem~\ref{theo:CC} estimates the optimal performance of a finite-blocklength constant-composition code, whose codewords, by design, have type $p_{\X}$ as in \eqref{eq:CC}.
Since the leading first-order term in \eqref{eq:CC_bound2} is optimal (for higher rates) for \emph{any} $n$-type, both the leading term and the state $\rho_{\X\B}$ still depend on the blocklength $n$.
From the operational point of view, such an analysis is useful for characterizing the performance of a finite-blocklength code.

When comparing the optimal performance of constant-composition codes with other codes, e.g.,~the i.i.d.~codebook in Section~\ref{sec:large_deviation} (see Remark~\ref{remark:Augustin_vs_Renyi} below), one may choose an $n$-type $p_{\X}^{(n)}$ to approximate an arbitrary input distribution $p_{\X}$ to any prescribed precision.\footnote{Recall that there exists an $n$-type $p_{\X}^{(n)}$ such that the total variation distance is $\frac12\big\| p_{\X}^{(n)} - p_{\X} \big\|_1 \leq \nicefrac{|\X|}{2n} =: \delta_n$ \cite{CK11}.
The continuity of the Augustin information further yields the following cost (\cite[Proposition 5-(c)]{CGH18}):
    $H_{\text{b}}(\delta_n) + \delta_n \cdot \sup\nolimits_{p_{\X}} {I}^{\text{Aug}}_{1}(p;\mathscr{N})$,
where $H_{\text{b}}(q) = - q \log_2 q - (1-q) \log_2 (1-q) \leq \sqrt{4q (1-q)}$ is the binary entropy function.
Overall, the additional cost (for allowing $p_{\X}$ to be a general distribution) on the right-hand side of \eqref{eq:CC_bound1} is bounded by
    $\frac{1-\alpha}{\alpha} \big[
    \sqrt{\frac{2|\X|}{n} } + \frac{|\X|}{2n} \sup\nolimits_{p_{\X}} {I}^{\text{Aug}}_{1}(p;\mathscr{N})
    \big]$.}
\end{remark}

\begin{remark}\label{remark:Augustin_vs_Renyi}
For $p_{\X}$ and $\mathscr{N}_{\X\to\B}:x\mapsto \rho_{\B}^x$, let $\rho_{\X\B}=\sum_{x\in\X} p_{\X}(x)|x\rangle\langle x|_{\X}\otimes \rho_{\B}^x$.
By Jensen's inequality, one has ${I}^{\text{Aug}}_{\alpha}(p;\mathscr{N}) \geq I_{\alpha}(\X:\B)_{\rho}$ \cite{nakiboglu_augustin_2018, MO14, MO18, CGH18} (after optimizing over $p_{\X}$, both quantities equal the order-$\alpha$ \Renyi radius \cite[\S IV]{MO14}).
Hence, constant-composition codes yield a faster exponential decay rate for any $p_{\X}$.
This phenomenon for classical-quantum channels again confirms Gallager's observation \cite{Gal94}: 
constant-composition codes act as a better ensemble than the ensemble with independently chosen letters.
\end{remark}

\begin{proof}[Proof of Theorem~\ref{theo:CC}]
First, the probability of the typical sequences under the i.i.d.~$p_{\X}^{\otimes n} $ is lower bounded by (\cite[\S 2]{CK11}):\footnote{The probability $p_{\X}^{\otimes n}(\mathcal{T}_{p_{\X}}^n)$ can be explicitly calculated by Stirling's formula; see e.g.~\cite[p.~26]{CK11}.
Hence, the polynomial prefactor in Theorem~\ref{theo:CC} can be tightened slightly.}
\begin{align}
p_{\X}^{\otimes n}(\mathcal{T}_{p_{\X}}^n) \geq  (n+1)^{- |\X|}.
\end{align}
Then, we apply the one-shot bound in Theorem~\ref{theo:CC-one-shot}, the tensorization property of the Augustin mean given in \eqref{eq:tensor_Augustin_mean}, and the additivity of the Petz--\Renyi divergence to conclude the proof, namely,
\begin{align}
\inf_{x^n \in \mathcal{T}_{p_{\X}}^n } \frac{1}{n} D_{\alpha}\left( \rho_{\B^n}^{x^n} \Vert \sigma_{\B^n}^{\star} \right)
= \inf_{x^n \in \mathcal{T}_{p_{\X}}^n } \frac{1}{n} D_{\alpha}\left( \rho_{\B^n}^{x^n} \Vert (\sigma_{\B}^{\star})^{\otimes n} \right)
= {I}^{\text{Aug}}_{\alpha}(p;\mathscr{N}).
\end{align}

The finite-blocklength converse (i.e., the so-called sphere-packing bound) was derived in \cite[Theorem 8]{CHT19}:
\begin{align}
- \log_2 \varepsilon(\X^n:\B^n)_{\breve{p}} \leq n \cdot  \sup_{\alpha \in (0,1]} \frac{1-\alpha}{\alpha} \left[ {I}^{\text{Aug}}_{\alpha}(p;\mathscr{N}) - R \right] + \mathcal{O}(\log_2 n)
\end{align}
(the asymptotic result was first shown in \cite{DW14}).
\end{proof}


\section{Classical Data Compression with Quantum Side Information} \label{sec:CQSW}

In this section, we first show a generic one-shot achievability bound for classical data compression with quantum side information.
In Section~\ref{sec:iid_fixed-length}, we consider the $n$-shot setting where the sources are generated in an i.i.d.~fashion, and the compressed indices have a fixed length.
In Section~\ref{sec:type_fixed-length}, we consider sources drawn uniformly from a constant-type class.
In Section~\ref{sec:iid_variable-length}, we return to $n$-shot i.i.d.~sources but now with variable-length coding.

\begin{defn}[Classical data compression with quantum side information]
	Let $\rho_{\X\B} = \sum_{x\in\X} p_{\X} (x) |x\rangle \langle x|_{\X} \otimes \rho_{\B}^x \in \mathcal{S}(\X\B)$ be a classical-quantum state.
	\begin{enumerate}[1.]
		\item Alice has classical registers $\X$ and $\mathsf{M}$, and Bob has a quantum register $\B$.
				
		\item An encoder $\mathcal{E}\colon x\mapsto m(x)$
        at Alice compresses the source in $\X$ to an index in $\mathsf{M}$.
		
		\item According to the compressed index $m$, Bob applies a decoding measurement described by a family of POVMs on register $\B$ indexed by $m\in\mathsf{M}$, i.e.,~$\{ {\Lambda}_{\B}^{x,m} \}_{x\in\X}$, to recover the source $x\in\X$.
	\end{enumerate}
	Again, we adopt the conventional random coding (also known as random binning in source coding) to compress each source letter $x\in\X$ to a uniform index $m\in\mathsf{M}$.
    Then, the random coding error probability for $\rho_{\X\B}$ with index size $|\mathsf{M}|$ is 
	\begin{align} \label{eq:error_CQSW}
    \varepsilon(\X\mid\B)_{\rho}
    \coloneq
    \mathds{E}_{\{m(x)\}\sim \frac{1}{|\mathsf{M}|} }\left[
		\inf_{ \{ \Lambda_{\B}^{x,m} \} } \sum_{ x\in\X }	p_{\X}(x) 
        \Tr\left[ \rho_{\B}^{x} \left(\I_{\B} -  {\Lambda}_{\B}^{x, m} \right) \right]
        \right].
	\end{align}
\end{defn}

Given a realization of the random codebook $\{\mathcal{E}(x)\}_{x\in\X}$, we adopt the integral $\alpha$-PGM for each $m\in\mathsf{M}$:
\begin{align} \notag
    \mathring{\Pi}_{\B}^{x,m} 
    \coloneqq\frac{ \left(p_{\X}(x) \rho_{\B}^x\right)^{\alpha} }{ \sum_{\bar{x}: \mathcal{E}(\bar{x}) = m } \left( p_{\X}(\bar{x}) \rho_{\B}^{\bar{x}}\right)^{\alpha} } + \frac{p_{\X}(x)}{ \sum_{\bar{x}: \mathcal{E}(\bar{x}) = m } p_{\X}(\bar{x}) } \I_{\B^{\perp}}, \quad \forall x\in \X: \mathcal{E}(x)=m, \; \alpha \in [\sfrac{1}{2},1].
\end{align}
(Here, $p_{\X}(x) \rho_{\B}^x \I_{\B^{\perp}} = 0$ for all $x\in\X:\mathcal{E}(x)=m$; we also omit $\I_{\B^{\perp}}$'s dependence on $m$ for brevity.)

\begin{theo} \label{theo:CQSW}
    For any classical-quantum state $\rho_{\X\B}$, the random coding error probability \eqref{eq:error_CQSW} for compressing the source into $|\mathsf{M}|$ indices is upper bounded by
    \begin{align*}
    \varepsilon(\X\mid\B)_{\rho}
    &\leq c_{\alpha} |\mathsf{M}|^{\frac{\alpha-1}{\alpha}} \Tr\left[ \left( \sum_{x\in\X} \left[ p_{\X}(x) \rho_{\B}^x \right]^{\alpha} \right)^{\nicefrac{1}{\alpha}} \right]
    \\
    &= c_{\alpha} \cdot 2^{ - \frac{1-\alpha}{\alpha} \left[ \log_2 \left|\mathsf{M}\right| - H_{\alpha}(\X\mid\B)_{\rho} \right]  },
    \quad \forall\, \alpha \in [\nicefrac{1}{2},1].
    \end{align*}
    Here, $H_{\alpha} (\X\mid\B)_{\rho} \coloneqq-\inf_{\sigma_{\B} \in \mathcal{S}(\B) } D_{\alpha}(\rho_{\X\B}\Vert \I_{\X} \otimes \sigma_{\B}) $ is the order-$\alpha$ conditional Petz--\Renyi entropy.
\end{theo}

The quantum Sibson identity 
\cite[Lemma 3 in Supplementary Material]{SW12}, \cite[(3.10)]{HT14}, \cite{CGH18} shows that the minimizer in $H_{\alpha} (\X\mid\B)_{\rho}$ is attained by
\begin{align} \label{eq:Renyi_mean_CQSW}
\sigma_{\B}^{\star} = \frac{
        \left( \sum_{x\in\X} \left[p_{\X}(x) \rho_{\B}^x\right]^{\alpha}  \right)^{\nicefrac{1}{\alpha}}
        }{
        \Tr \left[
        \left( \sum_{x\in\X} \left[p_{\X}(x) \rho_{\B}^x\right]^{\alpha}  \right)^{\nicefrac{1}{\alpha}}
        \right]}, \quad \alpha > 0,
\end{align}
and, hence, the order-$\alpha$ conditional Petz--\Renyi entropy with respect to the state $\rho_{\X\B}$
admits a closed-form expression:
\[
H_{\alpha} (\X \mid \B)_{\rho} 
= -D_{\alpha}\left(\rho_{\X\B}\Vert \I_{\X} \otimes \sigma_{\B}^{\star}\right)
=
\frac{\alpha}{1-\alpha} \log_2 \Tr \left[
        \left( \sum_{x\in\X} \left[p_{\X}(x) 
 \rho_{\B}^x\right]^{\alpha}  \right)^{\nicefrac{1}{\alpha}}
        \right].
\]

\begin{proof}
We apply Proposition~\ref{prop:key} with $A\leftarrow p_{\X}(x) \rho_{\B}^x$ and $B \leftarrow  \left[ \sum_{\bar{x}\neq x, \, \mathcal{E}(\bar{x})= m } \left(p_{\X}(\bar{x}) \rho_{\B}^{\bar{x}}\right)^{\alpha} \right]^{\nicefrac{1}{\alpha}}$ to obtain, for every $x\in\X$,
\begin{align*}
		&\mathds{E}_{m \sim \frac{1}{|\mathsf{M}|} } \Tr\left[ p_{\X}(x) \rho_{\B}^x \frac{ \sum_{\bar{x}\neq x, \, \mathcal{E}(\bar{x})= m } \left(p_{\X}(\bar{x}) \rho_{\B}^{\bar{x}}\right)^{\alpha} }{ 
        \left( p_{\X}(x) \rho_{\B}^x \right)^{\alpha} + \sum_{\bar{x}\neq x, \, \mathcal{E}(\bar{x})= m } \left(p_{\X}(\bar{x}) \rho_{\B}^{\bar{x}}\right)^{\alpha} } \right]
        \\
		&\leq
		c_{\alpha}  \mathds{E}_{m \sim \frac{1}{|\mathsf{M}|} } \Tr\left[ \left( p_{\X}(x) \rho_{\B}^x\right)^{\alpha} \left( \sum_{\bar{x}\neq x, \, \mathcal{E}(\bar{x})= m } \left(p_{\X}(\bar{x}) \rho_{\B}^{\bar{x}}\right)^{\alpha} 
        \right)^{\frac{1-\alpha}{\alpha}} \right]
        \\
        &\overset{\text{(a)}}{\leq}
        c_{\alpha}   \Tr\left[ \left( p_{\X}(x) \rho_{\B}^x\right)^{\alpha} \left( \mathds{E}_{m \sim \frac{1}{|\mathsf{M}|} } \sum_{\bar{x}\neq x, \, \mathcal{E}(\bar{x})= m } \left(p_{\X}(\bar{x}) \rho_{\B}^{\bar{x}}\right)^{\alpha} 
        \right)^{\frac{1-\alpha}{\alpha}} \right]
        \\
        &= c_{\alpha}   \Tr\left[ \left( p_{\X}(x) \rho_{\B}^x\right)^{\alpha} \left( \sum_{\bar{x}\neq x} \frac{1}{|\mathsf{M}|} \left(p_{\X}(\bar{x}) \rho_{\B}^{\bar{x}}\right)^{\alpha} 
        \right)^{\frac{1-\alpha}{\alpha}} \right]
        \\
        &\overset{\text{(b)}}{\leq} c_{\alpha} |\mathsf{M}|^{\frac{\alpha-1}{\alpha}}  \Tr\left[ \left( p_{\X}(x) \rho_{\B}^x\right)^{\alpha} \left( \sum_{\bar{x} \in \X } \left(p_{\X}(\bar{x}) \rho_{\B}^{\bar{x}}\right)^{\alpha} 
        \right)^{\frac{1-\alpha}{\alpha}} \right],
	\end{align*}
    where inequality (a) is because the power function $0\leq x\mapsto x^{\frac{1-\alpha}{\alpha}}$ is operator concave for $\frac{1-\alpha}{\alpha} \in [0,1]$;
    inequality (b) is because the power function $0\leq x\mapsto x^{\frac{1-\alpha}{\alpha}}$ is operator monotone for $\frac{1-\alpha}{\alpha} \in [0,1]$.

    Summing the above inequality over all $x\in\X$ concludes the proof.
\end{proof}

\subsection{I.I.D.~Sources With Fixed-Length Coding} \label{sec:iid_fixed-length}

In the i.i.d.~scenario, each length-$n$ source $x^n = x_1 x_2 \ldots x_n \in \X^n$ is distributed according to a common i.i.d.~distribution $p_{\X}^{\otimes n}$, and each source is associated with quantum side information $\rho_{\B^n}^{x^n} \coloneqq\rho_{\B_1}^{x_1} \otimes \rho_{\B_2}^{x_2} \otimes \cdots \otimes \rho_{\B_n}^{x_n}$.
In fixed-length coding, each source $x^n$ is compressed to a shorter fixed-length sequence, i.e., $|\mathsf{M}| = \ceil{ 2^{nR} }$, where $R$ is called the \emph{compression rate}.

The resulting joint classical-quantum state for the i.i.d.~source is $\rho_{\X\B}^{\otimes n}$.
By applying the one-shot bound in Theorem~\ref{theo:CQSW}, we have the following characterization of the random coding error.

\begin{theo} \label{theo:CQSW_iid}
    The logarithmic random coding error for compressing the i.i.d.~source $\rho_{\X\B}^{\otimes n}$ under fixed-length coding with compression rate $R$ is characterized by
    \begin{align*}
    -\log_2 \varepsilon(\X^n\mid\B^n)_{\rho^{\otimes n}}
    =
    n\cdot \sup_{\alpha \in [\nicefrac{1}{2},1]} \frac{1-\alpha}{\alpha} \left[ R - H_{\alpha}(\X\mid\B)_{\rho} \right] 
    + \mathcal{O}(\log_2 n),
    \end{align*}
    for $H_1(\X\mid\B)_{\rho} < R \leq -\left.\frac{\d}{\d s}s {D}_{\frac{1}{1+s}}(\rho_{\X\B}\Vert\I_{\X}\otimes \rho_{\B})\right|_{s=1}$
    (where the exact factors in $\mathcal{O}(\log_2 n)$ can be found in the converse bound \cite{CHDH-2018}).
\end{theo}
\begin{remark}
Renes also obtained the characterization of the optimal error exponent \cite{Ren23}, while the achievability bound has a dimension-dependent prefactor (which is of order $(n+1)^{|\B|}$ in the i.i.d.~scenario).
\end{remark}

\begin{proof}
The converse bound has been derived in \cite[Theorem 2]{CHDH-2018}:
\begin{align*}
    - \log_2 \varepsilon(\X^n\mid\B^n)_{\rho^{\otimes n}}
    \leq  n \cdot \sup_{\alpha \in (0,1]} \frac{1-\alpha}{\alpha} \left[ R - H_{\alpha}(\X\mid\B)_{\rho} \right] +  \mathcal{O}(\log_2 n).
\end{align*}
The lower bound follows from Theorem~\ref{theo:CQSW} and the additivity:
$H_{\alpha}(\X^n \mid\B^n )_{\rho^{\otimes n}} = n H_{\alpha}(\X\mid\B)_{\rho}$ as the minimizer in \eqref{eq:Renyi_mean_CQSW} is multiplicative.
\end{proof}

\subsection{Constant-Type Sources With Fixed-Length Coding} \label{sec:type_fixed-length}

For any integer $n\in\mathds{N}$, we let $q_{\X}$  be an $n$-type on $\X$.
Now, we consider the scenario in which each source $x^n \in \X^n$ is distributed uniformly over the type class $\mathcal{T}_{q_{\X}}^n$, i.e.,
\begin{align} \label{eq:CC-source}
\breve{q}_{\X^n}(x^n) = \frac{ 1  }{\left| \mathcal{T}_{q_{\X}}^n \right|} \left\{ x^n \in  \mathcal{T}_{q_{\X}}^n  \right\}, \quad \forall\, x^n \in \X^n.
\end{align}
The resulting classical-quantum state
\begin{align} \label{eq:const-type}
    \breve{\rho}_{\X^n \B^n}
    = \sum_{x^n \in \X^n} \breve{q}_{\X^n}(x^n) |x^n\rangle \langle x^n |_{\X^n} \otimes \rho_{\B^n}^{x^n}
\end{align}
is no longer a product state as in Section~\ref{sec:cc_codes}.

\begin{theo} \label{theo:CQSW_cc}
    For compressing the source $\breve{\rho}_{\X^n \B^n}$ in \eqref{eq:const-type} induced by an $n$-type $q_{\X}$ with a compression rate $R$, we have
    \begin{align*}
    -\log_2 \varepsilon(\X^n\mid\B^n)_{\breve{\rho}}
    =
    n\cdot \sup_{\alpha \in [\nicefrac{1}{2},1]} \frac{1-\alpha}{\alpha} \left[ R - H(\X)_q + {I}^{\textnormal{Aug}}_{\alpha}(q;x\mapsto\rho_{\B}^x) \right]
    + \mathcal{O}(\log_2 n)
    \end{align*}
    for $H(\X)_q - {I}^{\textnormal{Aug}}_1(q;x\mapsto\rho_{\B}^x) < R \leq H(\X)_q - \frac{\d}{\d s} s {I}^{\textnormal{Aug}}_{\frac{1}{1+s}}(q;x\mapsto\rho_{\B}^x)\big|_{s=1}$.
    Here, the order-$\alpha$ Petz--Augustin information ${I}^{\textnormal{Aug}}_{\alpha}(q;x\mapsto\rho_{\B}^x)$ is defined in \eqref{eq:Augustin} for the input distribution $q_{\X}$ and the quantum side information $x\mapsto \rho_{\B}^x$,
    and $H(\X)_q \coloneqq- \sum_{x\in\X} q_{\X}(x) \log_2 q_{\X}(x)$ is the Shannon entropy.
\end{theo}

\begin{remark}
The higher-order term $\mathcal{O}(\log_2 n)$ in Theorem~\ref{theo:CQSW_cc} depends only on $|\X|$ and can be explicitly determined from \cite[Theorem 5]{CHDH2-2018} and Theorem~\ref{theo:CC} for the achievability part, and by referring to \cite[Theorem 5]{CHDH2-2018} for the converse part.
Here, we only focus on the fact that the higher-order term is independent of the dimension $|\B|$ and contributes only a polynomial factor (in $n$) to the error probability.
\end{remark}

\begin{remark}
The order-$\alpha$ Petz--Augustin information generally does not have a closed-form expression for $\alpha \neq 1$.
However, \cite{CB24} showed that it is still analytic in the order $\alpha$.
\end{remark}

\begin{proof}
The converse has been derived in 
\cite[Theorem 5]{CHDH2-2018}:
\begin{align}
    - \log_2 \varepsilon(\X^n\mid\B^n)_{\breve{\rho}}
    \leq n \cdot \sup_{\alpha \in [0,1]} \frac{1-\alpha}{\alpha} \left[ R - H(\X)_q + {I}^{\textnormal{Aug}}_{\alpha}(q;x\mapsto\rho_{\B}^x) \right]
    + \mathcal{O}(\log_2 n).
\end{align}
However, the achievability bound therein is not asymptotically tight.
Nonetheless, \cite[Theorem 2]{CHDH2-2018} showed that the minimum error of the constant-type source compression at rate $R$ with quantum side information is at most twice the minimum error of communication over the classical-quantum channel $x\mapsto \rho_{\B}^x$ via codes of the same composition $q_{\X}$ with rate $ H(\X)_q - R + \frac{1}{n} \log_2( 2n \log_2(|\X|+1) )$.
By invoking Theorem~\ref{theo:CC},
we conclude the proof.
\end{proof}

\subsection{I.I.D.~Sources With Variable-Length Coding} \label{sec:iid_variable-length}

We now return to the i.i.d.~source scenario, i.e., $\rho_{\X\B}^{\otimes n}$ for
\begin{align} \label{eq:state_variable}
    \rho_{\X\B} = \sum_{x\in\X} p_{\X}(x) |x\rangle\langle x|_{\X} \otimes \rho_{\B}^x.
\end{align}
In variable-length coding, each source is compressed to a binary string in $\{0,1\}^*$ with a possibly different length via an encoder $\mathscr{E}^n: \X^n \to \{0,1\}^*$.
We then define the \emph{average rate} of the code as the normalized expected length
\begin{align} \label{eq:rate_CQSW_variable}
    \bar{R} \coloneqq\frac{1}{n} \mathds{E}_{x^n \sim p_{\X}^{\otimes n} } \left[ \mathtt{length}\left( \mathscr{E}^n(x^n) \right) \right].
\end{align}

\begin{theo} \label{theo:CQSW_variable}
    The logarithmic random coding error for compressing the i.i.d.~source $\rho_{\X\B}^{\otimes n}$ in \eqref{eq:state_variable} via variable-length coding with average rate $\bar{R}$ is characterized by
    \begin{align*}
    \lim_{n\to \infty} - \frac{1}{n} \log_2 \varepsilon(\X^n\mid\B^n)_{\rho^{\otimes n}}
    =
    \sup_{\alpha \in [\nicefrac{1}{2},1]} \frac{1-\alpha}{\alpha} \left[ \bar{R} - H(\X)_{p} + {I}^{\textnormal{Aug}}_{\alpha}(p;x\mapsto\rho_{\B}^x) \right]
    \end{align*}
    for $H(\X\!\mid\!\B)_{\rho} < \bar{R} \leq H(\X)_p - \frac{\d}{\d s} s {I}^{\textnormal{Aug}}_{\frac{1}{1+s}}(p;x\mapsto\rho_{\B}^x)\big|_{s=1}$.
    Here, the order-$\alpha$ Petz--Augustin information ${I}^{\textnormal{Aug}}_{\alpha}(p;x\mapsto\rho_{\B}^x)$ is defined in \eqref{eq:Augustin} for the input distribution $p_{\X}$ and the quantum side information $x\mapsto \rho_{\B}^x$,
    and $H(\X)_p \coloneqq- \sum_{x\in\X} p_{\X}(x) \log_2 p_{\X}(x)$ is the Shannon entropy.
\end{theo}

\begin{proof}
Ref.~\cite[Theorem 13]{CHDH2-2018} showed that the optimal error exponent of variable-length coding with average rate $\bar{R} > H(\X\mid\B)_{\rho}$ is equal to that of classical-quantum channel coding with constant-composition $p_{\X}$ and rate $H(\X)_p - \bar{R} \geq 0$.
Our contribution here is to fill the achievability gap in \cite[Proposition 5.1]{CHDH2-2018} by invoking Theorem~\ref{theo:CC}.
\end{proof}


\section{Unassisted Classical Communication over Quantum Channels} \label{sec:unassisted}

\begin{defn}[Unassisted classical communication over quantum channels]
	Let $\mathscr{N}_{\A\to \B}\colon \mathcal{S}(\A) \to \mathcal{S}(\B)$ be a quantum channel.
	\begin{enumerate}[1.]
		\item Alice has a classical register $\mathsf{M}$ and a quantum register $\A$, and Bob has a quantum register $\B$.
		
		\item To send any (equiprobable) message $m\in\mathsf{M}$, Alice encodes it into an input state on her quantum register $\A$.
		
		\item 
        Alice's quantum state on the register $\A$ undergoes the quantum channel $\mathscr{N}_{\A\to \B}$ and produces an output state on Bob's quantum register $\B$.
		
		\item Bob applies a decoding measurement $\{ {\Pi}_{\B}^m\}_{m\in\mathsf{M}}$ on his quantum register $\B$ to obtain an estimated message $\hat{m} \in \mathsf{M}$.
	\end{enumerate}
    As in Section~\ref{sec:CQ}, we employ the conventional random coding strategy but now with a quantum ensemble $\{ p_{\X}(x), \rho_{\A}^x \}_{x \in \X} $, where the classical register $\X$ has an arbitrary finite alphabet.
    We also represent the ensemble by a classical-quantum state \begin{align} \label{eq:CQ-state_unassisted}
        \rho_{\X\A} = \sum_{x\in\X} p_{\X}(x)|x\rangle \langle x|_{\X} \otimes \rho_{\A}^x.
    \end{align}
    Specifically, each message $m\in \mathsf{M}$ is encoded into a quantum codeword $\rho_{\A}^{x(m)}$, drawn pairwise independently according to $p_{\X}$.
    Then, the minimum \emph{random coding error probability} for sending $|\mathsf{M}|$  messages through the channel $\mathscr{N}_{\mathsf{A}\to\B}$ with the quantum ensemble $\rho_{\X\A}$ is
	\begin{align} \label{eq:error_unassisted}
    \varepsilon(\X:\B)_{\mathscr{N}(\rho)}
    &\coloneqq \mathds{E}_{\{x(m)\}\sim p_{\X}} \left[ \inf_{ \{ {\Lambda}_{\B}^{m} \}_{m\in\mathsf{M}} } \frac{1}{|\mathsf{M}|}  \sum_{m\in \mathsf{M}}	
     \Tr\left[ \mathscr{N}_{\A\to \B}\left(\rho_{\A}^{x(m)}\right) \left(\I_{\B} -  {\Lambda}_{\B}^{m} \right) \right] \right].
    \end{align}
\end{defn}

Evidently, our analysis for classical-quantum channel coding in Section~\ref{sec:CQ} applies to classical communication over any quantum channel $\mathscr{N}_{\A\to \B}$ by the substitution $\rho_{\B}^{x(m)} \leftarrow \mathscr{N}_{\A\to \B}\left( \rho_{\A}^{x(m)} \right)$. Hence, the integral $\alpha$-PGM is constructed according to the channel images:
\begin{align} \notag
    \Pi_{\B}^{x(m)} 
    \coloneqq\frac{\left[ \mathscr{N}_{\A\to \B}\left( \rho_{\A}^{x(m)} \right) \right]^{\alpha}}{\sum_{\bar{m}\in \mathsf{M} } \left[ \mathscr{N}_{\A\to \B}\left( \rho_{\A}^{x(\bar{m})} \right)\right]^{\alpha} } + \frac{1}{|\mathsf{M}|} \I_{\B^{\perp}}, \quad \forall \, m \in \mathsf{M}, \; \alpha \in [\sfrac{1}{2},1],
\end{align}
where $\mathscr{N}_{\A\to \B}\big( \rho_{\A}^{x(m)} \big) \I_{\B^{\perp}} = 0$ for all $m\in\mathsf{M}$.

We then get the following result.
\begin{theo} \label{theo:CQ_unassisted} 
	For any quantum channel $\mathscr{N}_{\mathsf{A} \to \B}$, the  random coding error probability \eqref{eq:error_unassisted} for sending $|\mathsf{M}|$ messages with a quantum ensemble $\rho_{\X\A} = \sum_{x\in\X} p_{\X}(x)|x\rangle\langle x|_{\X} \otimes \rho_{\A}^x$ is upper bounded by
	\begin{align} \notag
		\varepsilon(\X:\B)_{\mathscr{N}(\rho)}
		&\leq c_{\alpha}
        (|\mathsf{M}|-1)^{\frac{1-\alpha}{\alpha}}
        \Tr \left[
        \left( \sum_{x\in\X} p_{\X}(x) \left[ \mathscr{N}_{\A\to\B}\left(\rho_{\mathsf{A}}^x\right)\right]^{\alpha}  \right)^{\nicefrac{1}{\alpha}}
        \right]
        \\
        \notag
        &= c_{\alpha} \cdot
        2^{
        -\frac{1-\alpha}{\alpha} \left[ I_{\alpha} (\X : \B)_{\mathscr{N}(\rho)} - \log_2 (|\mathsf{M}|-1)  \right]
        }, \quad \forall\, \alpha \in [\nicefrac{1}{2},1].
	\end{align}
\end{theo}

We may define the (one-shot) \emph{order-$\alpha$ Holevo quantity} by optimizing over the input ensemble $\rho_{\X\A}$ at the encoder:
\begin{align} \label{eq:Holevo_quantity}
\chi_{\alpha}(\mathscr{N}_{\A\to\B})
\coloneqq\sup_{ \rho_{\X\A} \in \mathcal{S}(\X\otimes\A) } I_{\alpha} (\X : \B)_{\mathscr{N}(\rho)}.
\end{align}
For any channel whose Holevo quantity is additive, i.e., $\chi_{\alpha}\left(\mathscr{N}_{\A\to\B}^{\otimes n}\right) = n \chi_{\alpha}(\mathscr{N}_{\A\to\B})$,
our result in Theorem~\ref{theo:CQ_unassisted} yields exponential decay for \emph{any}  blocklength $n\in\mathds{N}$.
Even if the Holevo quantity is not additive, Theorem~\ref{theo:CQ_unassisted} implies a finite-blocklength bound: \begin{align}  \varepsilon(\X^n:\B^n)_{\mathscr{N}^{\otimes n}(\rho^n)} \leq 1.102 \cdot 2^{ - \sup_{\alpha \in [\nicefrac{1}{2},1]} \frac{1-\alpha}{\alpha} \left[ \chi_{\alpha} (\mathscr{N}_{\A\to\B}^{\otimes n}) - \log_2 (|\mathsf{M}|-1)  \right]}, \quad \forall\, n \in \mathds{N}. \end{align}
Moreover, one may choose a sequence of entangled input states in the ensemble so as to asymptotically achieve the \emph{regularized} Holevo quantity:\footnote{As in the order-$1$ scenario, by definition, $\chi_{\alpha}(\cdot)$ is superadditive. Hence, taking supremum over all $n \in \mathds{N}$ is equivalent to the limit superior as $n\to \infty$.}
\begin{align} \label{eq:Holevo-reg} \chi_{\alpha}^{\text{reg}}(\mathscr{N}_{\A\to\B}) \coloneqq\sup_{n\in \mathds{N}} \frac{1}{n} \chi_{\alpha}\left(\mathscr{N}_{\A\to\B}^{\otimes n}\right).
\end{align}
For theoretical analysis, our result shows that, for any rate $R \coloneqq\lim_{n\to \infty} \frac{1}{n} \log_2 |\mathsf{M}|$,  the following regularized exponent is achievable in principle: 
\begin{align}
\sup_{\alpha \in [\sfrac12,1]} \frac{1-\alpha}{\alpha} \left[ \chi_{\alpha}^{\text{reg}}(\mathscr{N}_{\A\to\B}) - R \right].
\end{align}
This error exponent is positive if and only if the communication rate is below the order-$1$ regularized Holevo quantity $R < \chi_{1}^{\text{reg}}(\mathscr{N}_{\A\to \B})$, which is known as the (asymptotic) classical capacity of the quantum channel $\mathscr{N}_{\A\to \B}$.
Unfortunately, computationally finding the input ensembles that achieve $\chi_{\alpha}^{\text{reg}}$ and practically implementing such an asymptotic encoding strategy remain highly challenging.
Nevertheless, our one-shot result established in Theorem~\ref{theo:CQ_unassisted} applies to any input ensemble $\rho_{\X^n \A^n}$ for any $n\in\mathds{N}$ (some of which may be  easier to implement in practice).
Moreover, for product channels (not necessarily stationary) and using only product input states,  Theorem~\ref{theo:CQ_unassisted} already demonstrates exponential decay for any blocklength $n\in\mathds{N}$.

\section{Entanglement-Assisted Classical Communication} \label{sec:EA}

\begin{defn}[Entanglement-assisted classical communication over quantum channels] 
	Let $\mathscr{N}_{\A\to \B}\colon\mathcal{S}(\A) \to \mathcal{S}(\B)$ be a quantum channel.
	\begin{enumerate}[1.]
		\item Alice has a classical register $\mathsf{M}$ with cardinality $M\coloneqq|\mathsf{M}|$ representing the message set and has quantum registers $\A$ and $\A'$.
        Bob has quantum registers $\B$ and $\R'$.\footnote{The quantum register $\R'$ at Bob is viewed as the \emph{reference system} of $\A'$ that is left entirely untouched and evolves trivially (via the identity map $\id_{\R'}$). At the end of the protocol, Bob employs $\R'$ together with his received noisy system $\B$ to jointly decode the message. The correlation between $\R'$ and $\B$ thus determines the communication performance.}
		
		\item An arbitrary state $\theta_{\R'\A'}$ is shared between Bob (holding $\R'$) and Alice (holding $\A'$) as a resource for assisting communication.
		
		\item To send any (equiprobable) message $m\in\mathsf{M}$, Alice applies an encoding quantum operation $\mathcal{E}_{\A'\to \A}^m$ on $\theta_{\R'\A'}$.
		
		\item 
        Alice's quantum state on the register $\A$ undergoes the quantum channel $\mathscr{N}_{\A\to \B}$ and produces an output state on Bob's quantum register $\B$.
		
		\item Bob applies a decoding measurement $\{ {\Lambda}_{\R'\B}^m\}_{m\in\mathsf{M}}$ on registers $\R'$ and $\B$ to obtain an estimated message $\hat{m} \in \mathsf{M}$.
	\end{enumerate}
    We adopt the encoder of the \emph{position-based coding} \cite{AJW19a} as follows.
    Consider $\R'\A' = \R_1\cdots\R_M \A_1\cdots\A_M$.
    Alice and Bob pre-share an $M$-fold product state $\theta_{\R'\A'} \coloneqq\theta_{\R\A}^{\otimes M} = \theta_{\R_1\A_1}\otimes \cdots \otimes \theta_{\R_M\A_M}$, where $\A_m \cong \A$ and $\R_m \cong \R$ for each $m\in\mathsf{M}$, and Bob holds the reference systems $\R_1\cdots\R_M$ corresponding to Alice's systems $\A_1\cdots\A_M$.
	To send each $m\in\mathsf{M}$, 
	Alice sends her system $\A_m$, i.e., $\mathcal{E}_{\A^M\to \A}^m = \Tr_{\A^{\mathsf{M}\backslash \{m\}}}$, by tracing out systems $\A_{\bar{m}}$ for all $\bar{m}\neq  m$. 	
    The minimum error probability for sending $M$ messages through the channel $\mathscr{N}_{\A\to \B}$ with the assistance of ($M$ copies of) the state $\theta_{\R\A}$ is defined as
	\begin{align} \label{eq:error_EA}
    \varepsilon(\R:\B)_{\mathscr{N}(\theta)}
    \coloneq
    \inf_{ \left\{ {\Lambda}_{\R'\B}^m\right\}_{m\in\mathsf{M}} }
		\frac{1}{M} \sum_{m\in \mathsf{M}}	\Tr\left[  \mathscr{N}_{\A\to\B} \circ \mathcal{E}_{\A'\to \A}^m\left(\theta_{\R'\A'}\right)
        \left( \I_{\R'\B}-  {\Lambda}_{\R'\B}^m \right) \right].
	\end{align}
    The minimum error over all assisting entangled states $\theta_{\R\A}$ in the position-based coding is defined as
    \begin{align} \label{eq:error_EAC}
    \varepsilon_{\text{EAC}}^\star\left(M;\mathscr{N}\right)
    \coloneq
    \inf_{\theta_{\R\A}} \varepsilon(\R:\B)_{\mathscr{N}(\theta)}.
    \end{align}
\end{defn}

Bob applies the integral $\alpha$-PGM corresponding to the channel output states:
	\begin{align} 
        \notag
		\mathring{\Pi}_{\R_1 \ldots \R_M \B}^m 
        &\coloneqq\frac{ \left(\rho_{\R_1 \ldots \R_M  \B}^m\right)^{\alpha} }{ \sum_{\bar{m}\in\mathsf{M}} \left(\rho_{\R_1 \ldots \R_M \B}^{\bar{m}}\right)^{\alpha} } 
        + \frac{1}{|\mathsf{M}|} \I_{(\R_1 \ldots \R_M  \B)^{\perp}} \quad \forall \, m\in\mathsf{M},
        \; \alpha \in [\nicefrac{1}{2},1],
        \\
        \notag
		\rho_{\R_1 \ldots \R_M \B}^m 
        &\coloneqq\theta_{\R}^{\otimes (m-1)} \otimes \mathscr{N}_{\A\to\B}(\theta_{\R_m\A_m}) \otimes \theta_{\R}^{\otimes (M-m)}, \quad \forall\, m\in\mathsf{M},
	\end{align}
where $\rho_{\R_1 \ldots \R_M \B}^m \I_{(\R_1 \ldots \R_M  \B)^{\perp}} = 0$ for all $m\in\mathsf{M}$.

\begin{theo} \label{theo:EA} 
	For any quantum channel $\mathscr{N}_{\A \to \B}$, the error probability \eqref{eq:error_EA} for sending $M$ messages with the assistance state $\theta_{\R\A}$ is upper bounded by
	\begin{align} \notag
		\varepsilon(\R:\B)_{\rho}
		&\leq c_{\alpha}
        (M-1)^{\frac{1-\alpha}{\alpha}}
        \Tr_{\B} \left[
        \left( \Tr_{\mathsf{R}} \left[ \rho_{\mathsf{RB}}^{\alpha} \theta_{\mathsf{R}}^{1-\alpha} \right]
        \right)^{\nicefrac{1}{\alpha}}
        \right]
        \\
        \notag
        &= c_{\alpha} \cdot
        2^{
        -\frac{1-\alpha}{\alpha} \left[ I_{\alpha} (\R : \B)_{\mathscr{N}(\theta)} - \log_2 (M-1) \right]
        }, \quad \forall\, \alpha \in [\nicefrac{1}{2},1].
	\end{align}
    Here, $\rho_{\mathsf{RB}} \coloneqq\mathscr{N}_{\A \to \B}\left(\theta_{\R\A}\right)$ with $\rho_{\R} = \theta_{\R}$
    and $I_{\alpha} (\R:\B)_{\rho} \coloneqq\inf_{\sigma_{\B} \in \mathcal{S}(\B) } D_{\alpha}(\rho_{\R\B}\Vert \rho_{\R} \otimes \sigma_{\B}) $ is the order-$\alpha$ Petz--\Renyi information.
\end{theo}

The quantum Sibson identity \cite[(3.10)]{HT14}, \cite{CGH18} shows that the minimizer in $I_{\alpha} (\R:\B)_{\rho}$ is attained by
\[
\sigma_{\B}^{\star} = \frac{
        \left( \Tr_{\mathsf{R}} \left[ \rho_{\mathsf{RB}}^{\alpha} \rho_{\mathsf{R}}^{1-\alpha} 
        \right]\right)^{\nicefrac{1}{\alpha}}
        }{
        \Tr_{\B} \left[
        \left( \Tr_{\mathsf{R}} \left[ \rho_{\mathsf{RB}}^{\alpha} \rho_{\mathsf{R}}^{1-\alpha} \right]
        \right)^{\nicefrac{1}{\alpha}}
        \right]
        },
\]
and, hence, the order-$\alpha$ Petz--\Renyi information admits a closed-form expression:
\[
I_{\alpha} (\R:\B)_{\rho} 
= D_{\alpha}\left(\rho_{\R\B}\Vert \rho_{\R} \otimes \sigma_{\B}^{\star}\right)
=
\frac{\alpha}{\alpha-1} \log_2 \Tr_{\B} \left[
        \left( \Tr_{\mathsf{R}} \left[ \rho_{\mathsf{RB}}^{\alpha} \rho_{\mathsf{R}}^{1-\alpha} \right]
        \right)^{\nicefrac{1}{\alpha}}
        \right].
\]

\begin{remark} \label{remark:first_exp_EA}
The early developments of entanglement-assisted classical communication can be traced back to 
\cite{BSS+99, BSS+02, Hol02, DH13, MW14}.
The first i.i.d.~asymptotic exponential-decay bound of the error probability for the entanglement-assisted setting was implied by the second-order asymptotics of the maximal achievable rate given in \cite[Proposition 14]{DTW16}.
Later, \cite[Theorem 6]{QWW18} showed a one-shot bound using the position-based coding \cite{AJW19a} (see also \cite[Theorem 2]{Cheng_simple}), whose achievable error exponent has a form similar to classical-quantum channel coding \cite{Hay07}, i.e.,~for any $\theta_{\R\A}$,
\begin{align} \notag
\sup_{\alpha \in [\nicefrac{1}{2},1]} \frac{1-\alpha}{\alpha} \left[ D_{2-\frac{1}{\alpha}}\left( \mathscr{N}_{\A\to\B}(\theta_{\R\A}) \Vert \theta_{\R} \otimes \mathscr{N}_{\A\to\B}(\theta_{\A}) \right) - R \right].
\end{align}
The above quantity can be related to Theorem~\ref{theo:EA} by applying Proposition~\ref{prop:comparison_Hayashi} as discussed in Section~\ref{sec:large_deviation}.
\end{remark}

\begin{proof}
By symmetry of the position-based encoding, we calculate the error probability for sending $m=1$ without loss of generality:
\begin{align*}
\varepsilon(\R:\B)_{\rho}
&\leq \Tr\left[ \rho_{\R_1 \ldots \R_M \B}^1 \frac{ \sum_{\bar{m}\neq 1} \left(\rho_{\R_1 \ldots \R_M \B}^{\bar{m}}\right)^{\alpha} }{ 
\left(\rho_{\R_1 \ldots \R_M \B}^{1}\right)^{\alpha}
+ \sum_{\bar{m}\neq 1} \left(\rho_{\R_1 \ldots \R_M \B}^{\bar{m}}\right)^{\alpha}}
\right]
\\
&\leq c_{\alpha} \cdot \Tr\left[ 
\rho_{\mathsf{R}_1 \B}^{\alpha} \otimes \theta_{\mathsf{R}_2}^{\alpha} \otimes \theta_{\mathsf{R}_3}^{\alpha} \otimes \cdots \otimes \theta_{\mathsf{R}_M}^{\alpha}
\cdot
\left( 
\sum_{\bar{m}= 2}^M \rho_{\mathsf{R}_{\bar{m}} \B}^{\alpha} \bigotimes\limits_{ m'\neq \bar{m} } \theta_{\mathsf{R}_{m'}}^{\alpha}
\right)^{\frac{1-\alpha}{\alpha}}
\right].
\end{align*}
Here, we again employ Proposition~\ref{prop:key} with
\begin{align*}
A &\leftarrow \rho_{\R_1 \ldots \R_M \B}^1 = \rho_{\mathsf{R}_1 \B }\otimes \theta_{\mathsf{R}_2} \otimes \theta_{\mathsf{R}_3} \otimes \cdots \otimes \theta_{\mathsf{R}_M},
\\
B^{\alpha} &\leftarrow \sum_{\bar{m}\neq 1} \left(\rho_{\R_1 \ldots \R_M \B}^{\bar{m}}\right)^{\alpha}
= 
\sum_{\bar{m}= 2}^M \rho_{\mathsf{R}_{\bar{m}} \B}^{\alpha} \bigotimes\limits_{ m'\neq \bar{m} } \theta_{\mathsf{R}_{m'}}^{\alpha}.
\end{align*}

To evaluate the trace term, we apply an operator Jensen inequality, proved in Lemma~\ref{lemm:operator_Jensen} below with 
${\mathsf{A}} \leftarrow \mathsf{R}_1 \B$ and
$\B \leftarrow \mathsf{R}_2 \mathsf{R}_3 \ldots \mathsf{R}_M$,
\begin{align*}
    Y_{\mathsf{AB}} 
    &\leftarrow \sum_{\bar{m}= 2}^M \rho_{\mathsf{R}_{\bar{m}} \B}^{\alpha} \bigotimes\limits_{ m'\neq \bar{m} } \theta_{\mathsf{R}_{m'}}^{\alpha},
    \\
    \tau_{\B}
    &\leftarrow \theta_{\mathsf{R}_2} \otimes \theta_{\mathsf{R}_3} \otimes\cdots \otimes \theta_{\mathsf{R}_M},
\end{align*} 
to obtain
\begin{equation*}
\begin{aligned}[b]
&\Tr\left[ 
\rho_{\mathsf{R}_1 \B}^{\alpha} \otimes \theta_{\mathsf{R}_2}^{\alpha} \otimes \theta_{\mathsf{R}_3}^{\alpha} \otimes \cdots \otimes \theta_{\mathsf{R}_M}^{\alpha}
\cdot
\left( 
\sum_{\bar{m}= 2}^M \rho_{\mathsf{R}_{\bar{m}} \B}^{\alpha} \bigotimes\limits_{ m'\neq \bar{m} } \theta_{\mathsf{R}_{m'}}^{\alpha}
\right)^{\frac{1-\alpha}{\alpha}}
\right]
\\
&=
\Tr_{\mathsf{R}_1 \B} \left[
\rho_{\mathsf{R}_1 \B}^{\alpha} \cdot
\Tr_{\mathsf{R}_2 \ldots \mathsf{R}_M}\left[
\theta_{\mathsf{R}_2}^{\alpha} \otimes \cdots \otimes \theta_{\mathsf{R}_M}^{\alpha} \cdot
\left( 
\sum_{\bar{m}= 2}^M \rho_{\mathsf{R}_{\bar{m}} \B}^{\alpha} \bigotimes\limits_{ m'\neq \bar{m} } \theta_{\mathsf{R}_{m'}}^{\alpha}
\right)^{\frac{1-\alpha}{\alpha}}
\right]
\right]
\\
&\leq
\Tr_{\mathsf{R}_1 \B} \left[
\rho_{\mathsf{R}_1 \B}^{\alpha} \cdot
\left( \sum_{\bar{m}= 2}^M \theta_{\mathsf{R}_1}^{\alpha}\otimes \Tr_{ \mathsf{R}_{\bar{m}} } \left[ \rho_{\mathsf{R}_{\bar{m}} \B}^{\alpha} \theta_{\mathsf{R}_{\bar{m}}}^{1-\alpha} \right] \right)^{\frac{1-\alpha}{\alpha}}
\right]
\\
&=
(M-1)^{\frac{1-\alpha}{\alpha}} \cdot 
\Tr_{\mathsf{R}_1 \B} \left[ \rho_{\mathsf{R}_1 \B}^{\alpha} \theta_{\mathsf{R}_1}^{1-\alpha} \cdot
\left( \Tr_{\mathsf{R}}\left[ \rho_{\mathsf{R} \B}^{\alpha} \theta_{\mathsf{R}}^{1-\alpha} \right]
\right)^{\frac{1-\alpha}{\alpha}}
\right]
\\
&=
(M-1)^{\frac{1-\alpha}{\alpha}} \cdot 
\Tr_{\B} \left[
\left( \Tr_{\mathsf{R}} \left[ \rho_{\mathsf{RB}}^{\alpha} \theta_{\mathsf{R}}^{1-\alpha} \right]
\right)^{\frac{1}{\alpha}}
\right].
\end{aligned}\qedhere
\end{equation*}
\end{proof}

\begin{lemm} \label{lemm:operator_Jensen}
    For any bounded operator $Y_{\mathsf{AB}}  \geq 0$ on a Hilbert space $\mathcal{H}_{\mathsf{A}} \otimes 
    \mathcal{H}_{\B}$
    and
    any normalized state $\tau_{\B}$ on $\mathcal{H}_{\B}$,
    we have
    \begin{align*}
    \Tr_{\B} \left[ \tau_{\B}^{\alpha}\cdot 
    Y_{\mathsf{AB}}^{\frac{1-\alpha}{\alpha}}  
    \right]
    \leq 
    \left( \Tr_{\B} \left[ Y_{\mathsf{AB}} \cdot \tau_{\B}^{1-\alpha}\right] \right)^{\frac{1-\alpha}{\alpha}}, \quad \forall\, \alpha \in [\nicefrac{1}{2},1].
    \end{align*}
\end{lemm}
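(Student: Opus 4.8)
The plan is to reduce the operator inequality to two applications of operator concavity of $t\mapsto t^{s}$ (with $s:=\tfrac{1-\alpha}{\alpha}\in[0,1]$), bridged by a reweighting that exploits both the normalization $\Tr[\tau_{\mathsf B}]=1$ and the exact arithmetic identity $1-\alpha s=\alpha$. Throughout I read $\Tr_{\mathsf B}[\tau^{\alpha}Y^{s}]$ as $\Tr_{\mathsf B}\big[(\I_{\mathsf A}\otimes\tau_{\mathsf B}^{\alpha})\,Y_{\mathsf{AB}}^{s}\big]$, and similarly for the right-hand side, and I would treat first the range $\alpha\in[\nicefrac12,1)$, recovering $\alpha=1$ by letting $\alpha\to1^{-}$.

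First diagonalize $\tau_{\mathsf B}=\sum_{j}p_{j}\,|j\rangle\!\langle j|_{\mathsf B}$ with $p_{j}\ge0$, $\sum_{j}p_{j}=1$, and evaluate the partial traces in this eigenbasis, so that $\Tr_{\mathsf B}[(\I\otimes\tau^{\alpha})Y^{s}]=\sum_{j:p_{j}>0}p_{j}^{\alpha}\,(V_{j}^{*}Y^{s}V_{j})$ and $\Tr_{\mathsf B}[Y\tau^{1-\alpha}]=\sum_{j:p_{j}>0}p_{j}^{1-\alpha}A_{j}$, where $V_{j}:=\I_{\mathsf A}\otimes|j\rangle_{\mathsf B}$ is an isometry and $A_{j}:=V_{j}^{*}YV_{j}\ge0$. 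Step one is termwise: since $x\mapsto x^{s}$ is operator concave on $[0,\infty)$ and vanishes at $0$, the Hansen--Pedersen form of the operator Jensen inequality gives $V_{j}^{*}Y^{s}V_{j}\le(V_{j}^{*}YV_{j})^{s}=A_{j}^{s}$; summing against the weights $p_{j}^{\alpha}\ge0$ yields $\Tr_{\mathsf B}[(\I\otimes\tau^{\alpha})Y^{s}]\le\sum_{j:p_{j}>0}p_{j}^{\alpha}A_{j}^{s}$. Step two: set $E_{j}:=p_{j}^{-\alpha}A_{j}\ge0$; then $\sum_{j}p_{j}E_{j}=\sum_{j}p_{j}^{1-\alpha}A_{j}=\Tr_{\mathsf B}[Y\tau^{1-\alpha}]$, while $\sum_{j}p_{j}E_{j}^{s}=\sum_{j}p_{j}^{1-\alpha s}A_{j}^{s}=\sum_{j}p_{j}^{\alpha}A_{j}^{s}$ precisely because $s=\tfrac{1-\alpha}{\alpha}$ (equivalently $1-\alpha s=\alpha$). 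As $(p_{j})_{p_{j}>0}$ is an honest probability vector, operator concavity of $x\mapsto x^{s}$ in the convex-combination (Jensen) form gives $\sum_{j}p_{j}E_{j}^{s}\le\big(\sum_{j}p_{j}E_{j}\big)^{s}$, and chaining the two bounds closes the argument.

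The one genuine subtlety is that the normalization of $\tau_{\mathsf B}$ cannot be discarded: a naive first move such as bounding $\tau^{\alpha}\le\tau^{(1-\alpha)s}$ via $\tau\le\I$ destroys exactly the slack that makes the inequality hold (the resulting weaker claim already fails for commuting scalars, e.g.\ two equal atoms). Keeping the spectrum of $\tau_{\mathsf B}$ as a probability distribution, and letting the identity $1-\alpha s=\alpha$ match the two sides of the reweighting, is the whole point; the remaining ingredients are the two standard invocations of operator concavity/operator Jensen (see, e.g., \cite{HP14}). One should also record the harmless bookkeeping that the indices with $p_{j}=0$ drop out of every sum---$p_{j}^{\alpha}=0$ since $\alpha>0$, and $p_{j}^{1-\alpha}=0$ since $1-\alpha>0$ on $[\nicefrac12,1)$---so that $E_{j}=p_{j}^{-\alpha}A_{j}$ is only ever formed for $p_{j}>0$.
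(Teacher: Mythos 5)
Your proof is correct and rests on the same essential mechanism as the paper's: diagonalize $\tau_{\mathsf B}$, exploit the arithmetic identity $1-\alpha\cdot\tfrac{1-\alpha}{\alpha}=\alpha$ (the ``reweighting'' you point out), and invoke operator concavity of $x\mapsto x^{(1-\alpha)/\alpha}$. The only difference is organizational: the paper makes a single application of the Hansen--Pedersen operator Jensen inequality with the family $C_i=\I_{\mathsf A}\otimes\sqrt{\lambda_i}\,|i\rangle_{\mathsf B}$ (satisfying $\sum_i C_i^\dagger C_i=\I_{\mathsf A}$) applied to the rescaled operators $X_i=Y_{\mathsf{AB}}\lambda_i^{-\alpha}$, whereas you factor this into two elementary invocations---first the isometry compression $V_j^*Y^sV_j\leq(V_j^*YV_j)^s$, then the scalar convex-combination Jensen on $E_j=p_j^{-\alpha}A_j$---which buys you a slightly more self-contained argument at the cost of two steps in place of one.
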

\begin{proof}
Denote the spectral decomposition of $\tau_{\B}$ by
$\tau_{\B} = \sum_i \lambda_i |i\rangle_{\B}\langle i|_{\B}$, where $\lambda_i\geq 0$ for each $i$, $\sum_i \lambda_i = 1$, and $\{|i\rangle_{\B}\}_i$ is an orthonormal basis of $\mathcal{H}_{\B}$.
We calculate\footnote{By convention, the power in $\lambda_i^{-\alpha}$ is understood as taken on the support of $\lambda_i$. If $\lambda_i=0$ for some $i$, then the term $\lambda_i^{-\alpha}$ is void.}
\begin{align*}\Tr_{\B} \left[ \tau_{\B}^{\alpha}\cdot 
    Y_{\mathsf{AB}}^{\frac{1-\alpha}{\alpha}}  
    \right]
&=
\sum_i \lambda_i^{\alpha} \cdot \mathbf{1}_{\mathsf{A}}\otimes \Big\langle i \Big|_{\B} Y_{\mathsf{AB}}^{\frac{1-\alpha}{\alpha}} \mathbf{1}_{\mathsf{A}}\otimes\Big|i\Big\rangle_{\B}
\\
&=
\sum_i \lambda_i \cdot \mathbf{1}_{\mathsf{A}}\otimes \Big\langle i \Big|_{\B} \left( Y_{\mathsf{AB}} \cdot \lambda_i^{-\alpha} \right)^{\frac{1-\alpha}{\alpha}} \mathbf{1}_{\mathsf{A}}\otimes \Big|i\Big\rangle_{\B}
\\
&\leq 
\left( 
\sum_i \lambda_i \cdot \mathbf{1}_{\mathsf{A}}\otimes \Big\langle i \Big|_{\B} Y_{\mathsf{AB}} \cdot \lambda_i^{-\alpha} \mathbf{1}_{\mathsf{A}}\otimes \Big|i\Big\rangle_{\B}
\right)^{\frac{1-\alpha}{\alpha}}
\\
&=
\left( \Tr_{\B} \left[ Y_{\mathsf{AB}} \cdot \tau_{\B}^{1-\alpha}\right] \right)^{\frac{1-\alpha}{\alpha}}.
\end{align*}
Here, the inequality follows from the operator concavity of $0\leq x\mapsto x^{\frac{1-\alpha}{\alpha}}$ for $\frac{1-\alpha}{\alpha} \in [0,1]$ and the
operator Jensen inequality \cite{HP03}, which states that,
for any operator concave function $f$,
any sequence $(X_1, X_2, \ldots)$
of bounded self-adjoint operators on a Hilbert space $\mathcal{H}$ supported on the domain of $f$,
and any sequence
$(C_1, C_2, \ldots)$ of bounded operators from $\mathcal{K}$ to $\mathcal{H}$ satisfying $\sum_i C_i^{\dagger} \mathbf{1}_{\mathcal{H}} C_i = \mathbf{1}_{\mathcal{K}}$, 
\begin{align*}
\sum_i C_i^{\dagger} f(X_i) C_i
\leq
f\left( \sum_i C_i^{\dagger}  X_i C_i\right).
\end{align*}
In view of 
$\mathcal{K} \leftarrow \mathcal{H}_\mathsf{A}$,
$\mathcal{H} \leftarrow \mathcal{H}_{\mathsf{A}}\otimes \mathcal{H}_{\B}$,
$C_i \leftarrow \mathbf{1}_{\mathsf{A}} \otimes \sqrt{\lambda_i} |i\rangle_{\B}$ with 
$\sum_i C_i^{\dagger} \cdot \mathbf{1}_{\mathsf{A}} \otimes \mathbf{1}_{\B} \cdot
C_i = \sum_i \mathbf{1}_{\mathsf{A}} \otimes \lambda_i = \mathbf{1}_{\mathsf{A}}$, 
and $X_i \leftarrow Y_{\mathsf{AB}} \cdot \lambda_i^{-\alpha}$,
the proof is completed.
\end{proof}

\section{Entanglement-Assisted Quantum Communication} \label{sec:EAQ}

In this section, we consider entanglement-assisted quantum communication over a quantum channel $\mathscr{N}_{\A\to\B}$. The goal is to transmit quantum information residing in a quantum register $\Q$ instead of sending classical messages in $\mathsf{M}$.
There are different notions of the error criteria that have been considered in the literature \cite{BKN00, KW04}.
For simplicity, we consider transmitting one part of the maximally entangled state, called \emph{entanglement transmission}, and discuss its relation to \emph{strong subspace transmission} in Remark~\ref{remark:quantum_criteria} later.
We refer the reader to the monograph \cite[\S 14]{KW20} for the relevant notation.

\begin{defn}[Entanglement-assisted quantum communication over quantum channels] 
	Let $\mathscr{N}_{\A\to \B}\colon\mathcal{S}(\A) \to \mathcal{S}(\B)$ be a quantum channel,
    and let $\Phi_{\underline{\Q}\Q} \coloneqq |\Phi\rangle\langle \Phi |_{\underline{\Q}\Q}$
    be the maximally entangled state, i.e.,
    \begin{align}
        |\Phi\rangle_{\underline{\Q}\Q}
        \coloneqq \frac{1}{\sqrt{|\Q|}} \sum_{i=0}^{|\Q|-1} |i\rangle_{\underline{\Q}} \otimes |i\rangle_{\Q}.
    \end{align}
	\begin{enumerate}[1.]
		\item Alice has a quantum register $\Q$ with dimension $|\Q|$
        and quantum registers $\A$ and $\A'$.
        Bob has quantum registers $\B$ and $\R'$.
		
		\item An arbitrary state $\theta_{\R'\A'}$ is shared between Bob and Alice as a resource for assisting communication.
        In addition, we suppose that shared randomness, a strictly weaker resource, is also available.
		
		\item 
        Alice applies an encoding quantum operation $\mathcal{E}_{\Q\A'\to \A}$ on the $\Q$ system of $\Phi_{\underline{\Q}\Q}$ and the $\A'$ system of $\theta_{\R'\A'}$.
		
		\item 
        Alice's quantum state on the register $\A$ undergoes the quantum channel $\mathscr{N}_{\A\to \B}$ and produces an output state on Bob's quantum register $\B$.
		
		\item Bob applies a decoding operation $\mathscr{D}_{\R'\B\to\Q}$ on registers $\R'$ and $\B$ to recover the $\Q$ system of $\Phi_{\underline{\Q}\Q}$.
	\end{enumerate}
    The minimum error of transmitting entanglement of rank $|\Q|$ over $\mathscr{N}_{\A\to\B}$ is defined as
    \begin{align} \label{eq:error_EAQ}
    \varepsilon_{\text{EAQ}}^\star(|\Q|;\mathscr{N})
    \coloneqq \inf_{\mathscr{E},\mathscr{D},\theta} 1 - \langle \Phi |_{\underline{\Q}\Q} (\mathscr{D} \circ \mathscr{N} \circ \mathscr{E} \otimes \id_{\underline{\Q}}) ( \Phi_{\underline{\Q}\Q} \otimes \theta_{\R'\A'}) | \Phi\rangle_{\underline{\Q}\Q},
\end{align}
where the second term is the (squared) \emph{entanglement fidelity} $F_{\text{e}}$ of the overall operation $(\mathscr{D} \circ \mathscr{N} \circ \mathscr{E} \otimes \id_{\underline{\Q}}) ( \,\cdot\, \otimes \theta_{\R'\A'})$.\footnote{The entanglement fidelity of a channel $\mathscr{M}_{\Q\to\Q}$ is defined as 
$F_{\text{e}}(\mathscr{M}) \coloneqq
\langle \Phi_{\underline{\Q}\Q} | \left( \id_{\underline{\Q}} \otimes \mathscr{M}_{\Q\to\Q} \right) (\Phi_{\underline{\Q}\Q})| \Phi \rangle_{\underline{\Q}\Q}
$.
}
\end{defn}

Leung and Matthews showed that the minimum error of entanglement transmission in \eqref{eq:error_EAQ} is related to the error probability of entanglement-assisted classical communication via the formula \cite[(46) and Appendix B]{LM15} (see also \cite{LY24b}):
\begin{align} \label{eq:EAQ_EAC}
     \varepsilon_{\text{EAQ}}^\star(|\Q|;\mathscr{N})
     =  \varepsilon_{\text{EAC}}^\star\left(|\Q|^2;\mathscr{N}\right).
\end{align}
Hence, by Theorem~\ref{theo:EA}, we directly have the following upper bound. 

\begin{theo} \label{theo:EAQ} 
	For any quantum channel $\mathscr{N}_{\A \to \B}$, the minimum error \eqref{eq:error_EAQ} for transmitting entanglement of rank $|\Q|$ with entanglement assistance is upper bounded by
	\begin{align} 
    \notag
        \varepsilon_{\textnormal{EAQ}}^\star(|\Q|;\mathscr{N})
		&\leq c_{\alpha}
        \left(|\Q|^2-1\right)^{\frac{1-\alpha}{\alpha}}
        \Tr_{\B} \left[
        \left( \Tr_{\mathsf{R}} \left[ \rho_{\mathsf{RB}}^{\alpha} \theta_{\mathsf{R}}^{1-\alpha} \right]
        \right)^{\nicefrac{1}{\alpha}}
        \right]
        \\
        &\leq c_{\alpha} \cdot
        2^{
        -\frac{1-\alpha}{\alpha} \left[ I_{\alpha} (\R : \B)_{\mathscr{N}(\theta)} - 2 \log_2 |\Q| \right]
        }, \quad \forall\, \theta_{\R\A} \in \mathcal{S}(\R\A), \alpha \in [\nicefrac{1}{2},1].
        \label{eq:EAQ}
	\end{align}
    Here, $\rho_{\mathsf{RB}} \coloneqq\mathscr{N}_{\A \to \B}\left(\theta_{\R\A}\right)$ with $\rho_{\R} = \theta_{\R}$
    and $I_{\alpha} (\R:\B)_{\rho} \coloneqq\inf_{\sigma_{\B} \in \mathcal{S}(\B) } D_{\alpha}(\rho_{\R\B}\Vert \rho_{\R} \otimes \sigma_{\B}) $ is the order-$\alpha$ Petz--\Renyi information.
\end{theo}

\begin{proof}
    To achieve the upper bound in \eqref{eq:EAQ} for any $\theta_{\R\A}$ and $\alpha \in [\nicefrac{1}{2},1]$, 
    we implement the position-based coding employed in Theorem~\ref{theo:EA} with assisting entanglement $\theta_{\R\A}^{\otimes M}$, $M = |\Q|^2$, the teleportation protocol \cite{BBC+93} with a maximally entangled state of rank $|\Q|$, and the symmetrization process via twirling (which only consumes shared randomness without extra entanglement) detailed in \cite[Appendix B]{LM15}.
    The resulting coded channel is a depolarizing channel whose infidelity is bounded by Theorem~\ref{theo:EA} \cite[(78)]{LM15}.
    We complete the proof.
\end{proof}

\begin{remark}[Equivalent one-shot criteria after shared-randomness twirling] \label{remark:quantum_criteria}
A stronger notion of quantum communication is \emph{strong subspace transmission}; namely, the entanglement fidelity $F_{\text{e}}$ in \eqref{eq:error_EAQ} is replaced by the following (squared) fidelity of a channel $\mathscr{M}_{\Q\to\Q}$:
\begin{align}
F(\mathscr{M}) \coloneqq \inf_{\psi_{\underline{\Q}\Q}}
\langle \psi |_{\underline{\Q}\Q} \left( \id_{\underline{\Q}} \otimes \mathscr{M}_{\Q\to\Q} \right) (\psi_{\underline{\Q}\Q})| \psi \rangle_{\underline{\Q}\Q},
\end{align}
where the infimum is over all pure states on a reference system
\(\underline{\Q}\) and the input system \(\Q\).
The standard references establish asymptotic quantum-capacity equivalence between entanglement transmission and strong subspace transmission; see Barnum--Knill--Nielsen \cite{BKN00} and Kretschmann--Werner \cite{KW04}.
In the entanglement-assisted setting, 
the exact one-shot equality for the two criteria follows from applying the twirling argument of Leung--Matthews \cite[Appendix B]{LM15} and observing that for depolarizing channels, the worst-case entangled-input ($\psi_{\underline{\Q}\Q}$) fidelity is attained on the maximally entangled state ($\Phi_{\underline{\Q}\Q}$).

This is analogous to the relation between average and maximal error probability in classical communication with shared randomness: a random relabeling of the messages symmetrizes the code, making all messages have the same error probability.
Here, the shared-randomness twirl symmetrizes the logical quantum
channel, making the worst input ($\psi_{\underline{\Q}\Q}$) no worse than the maximally entangled input ($\Phi_{\underline{\Q}\Q}$).
\end{remark}

\section{Conclusions} \label{sec:conclusions}

We resolved Burnashev and Holevo's 1998 conjecture for classical-quantum channels with a dimension-independent prefactor $c< 1.102$ and showed that a similar form of the random coding bound holds even beyond classical-quantum channels to include arbitrary fully quantum channels for sending classical information with and without entanglement assistance and transmitting quantum information with entanglement assistance.
The same reasoning naturally extends to constant-composition codes and classical data compression with quantum side information under fixed-length coding or variable-length coding.

The general proof recipe inherits Shannon and Gallager's random coding principle---employing random coding and a union-bound-type argument (via PGMs) to reduce the channel output ensemble to a proper binary quantum hypothesis testing problem.
Our key contribution is to show that the integral $\alpha$-PGM decomposes into a family of Holevo--Helstrom measurements over priors $(u,1-u)$ with the parameter $u$ integrated uniformly. This demonstrates that the resulting test is essentially optimal, up to a dimension-independent constant factor.
This advantage allows standard techniques developed in binary hypothesis testing to be naturally applied here, thereby yielding the optimal tilting in large deviation analysis.

The operator layer cake theorem (Theorem~\ref{theo:Dlog_formula}) not only serves as the main technique for proving the error exponents for various quantum packing-type problems (i.e.,~those problems in Table~\ref{table:survey_packing} whose error exponents are associated with $\alpha <1$), but also leads to quantum covering-type results (e.g., classical-quantum soft covering, convex splitting, privacy amplification, quantum information decoupling, and quantum channel simulation), whose relative-entropy error exponents are associated with $\alpha >1$; see Table~\ref{table:survey_covering} and the recent follow-up work \cite{sharp25}.
Finally, Theorem~\ref{theo:Dlog_formula} also provides an alternative proof of 
Frenkel's integral formula for quantum relative entropy \cite{Fre23}; see \cite{LHC25_layer_cake}.
Hence, Theorem~\ref{theo:Dlog_formula} established in this work may be of independent interest.


\section*{Acknowledgments}
We sincerely thank the anonymous reviewers of the \emph{13th Beyond IID in Information Theory} (2025) for giving us very insightful and helpful comments and suggestions, and for pointing out typos in our first version.
We are very grateful to Prof.~Fumio Hiai for teaching us the use of the Powers--Størmer inequality in Proposition~\ref{prop:continuity}.
We are supported by the Emerging Young Scholars Program of the National Science and Technology Council, Taiwan (R.O.C.) under Grant numbers~NSTC 114-2628-E-002-006, NSTC 114-2119-M-001-002, and NSTC 114-2124-M-002-003, by the Yushan Young Scholar Program of the Ministry of Education, Taiwan (R.O.C.) under Grant number~NTU-114V2016-1, and by the research project `Forefront Quantum Computing, Learning, and Engineering in Noisy Intermediate-Scale Quantum Era’ of National Taiwan University under Grant NTU-114L895005. H.-C.~Cheng acknowledges the support from the `Center for Advanced Computing and Imaging in Biomedicine (NTU-115L900702)' through The Featured Areas Research Center Program within the framework of the Higher Education Sprout Project by the Ministry of Education (MOE) in Taiwan.


\newpage 
\appendix
\section{Properties of the Operator Logarithm} \label{sec:log}

In this section, we record the properties of the operator logarithm.
We denote the \emph{principal logarithm} as follows:
\begin{align} \label{eq:defn:Log}
	\Log (z) \coloneqq\log (|z|) + \mathrm{i} \Arg(z) 
	= \log (r) + \mathrm{i} \theta, \quad \forall z = r \mathrm{e}^{\mathrm{i}\theta} \in \mathds{C}\backslash\mathds{R}_{\leq 0},
\end{align}
where $r>0$, $-\pi < \theta < \pi$, and $u\mapsto \log(u) $ on $(0,\infty)$ denotes the real natural logarithm.

For an operator $A$ such that $\spec(A)\subset\mathds C\setminus\mathds{R}_{\leq 0}$, the principal logarithm is well defined on \(\spec(A)\).
We then define the operator logarithm via the holomorphic functional calculus (see e.g.~\cite[\S 3]{HP14}, \cite[\S 11]{Hig08}, \cite[(5.47), p.~44]{Kat95}):
\begin{align}
	\Log(A) \coloneqq\frac{1}{2\pi \mathrm{i}}\ointctrclockwise_{C_A} \Log(z)\left( z \I - A \right)^{-1}\, \d z,
\end{align}
where \(C_A\) is a simple closed counterclockwise contour enclosing
\(\spec(A)\) and avoiding the branch cut.
If, furthermore, \(A>0\) is positive definite, then this coincides with
the usual operator logarithm, i.e.,~\(\Log(A)=\log(A)\).

We denote the G{\^{a}}teaux differential (i.e.,~the directional derivative) of the principal logarithm at $A$ along the direction $B$ by
\begin{align} \label{eq:defn:DLog}
	\mathrm{D}\Log[A](B)
	&\coloneqq\lim_{t\to 0} \frac{\Log(A+tB) - \Log(A)}{t}.
\end{align}
When $A>0$ is positive definite, we adopt the notation $\mathrm{D}\Log[A](B) = \mathrm{D}\log[A](B)$.

\begin{fact}
	Let $A$ be an operator with $\spec(A) \subset \mathds{C}\backslash \mathds{R}_{\leq 0}$.
	The G{\^{a}}teaux differential of the principal logarithm in \eqref{eq:defn:DLog} satisfies the following properties.
	\begin{enumerate}[(i)]
		\item \label{item:DLog_identity}
		It holds that $\mathrm{D} \Log[A](A) = \I$.
		
		

		\item \label{item:Dlog_linear}
		The principal logarithm is \emph{continuously Fr\'echet differentiable} 
        in a neighborhood of any $A$ with $\spec(A) \subseteq \mathds{C}\backslash \mathds{R}_{\leq0}$ and, hence, the \emph{Fr\'echet derivative}, $B \mapsto \mathrm{D}\Log[A](B)$, is a linear map.

        \item \label{item:DLog_positivity}
		For $A>0$, the \emph{Fr\'echet derivative}, $B \mapsto \mathrm{D}\log[A](B)$, is a (completely) positive map.
		Namely, $\mathrm{D}\log[A](B) \geq 0 $ for any $B\geq 0$.
		
		\item \label{item:Dlog_integral}
		(Lieb's formula) The G{\^{a}}teaux differential admits the following integral representations:
		\begin{align}
        \label{eq:log-formula1}
			\mathrm{D}\Log[A](B)
			&= \int_{0}^{\infty} \frac{1}{A+t\I} B \frac{1}{A+t\I} \,\mathrm{d}t
			\\
        \label{eq:log-formula2}    
			&= \int_{0}^{1} \frac{1}{tA+(1-t)\I} B \frac{1}{tA+(1-t)\I} \,\mathrm{d}t.
		\end{align}
		
		\item \label{item:Dlog_division}
		If $A>0$, we have
		\begin{align}
			\mathrm{D}\log[A](B)
			&=            A^{-1/2}\mathrm{D}\log\left[A^{-1}\right]\left(A^{-1/2} B A^{-1/2} \right)A^{-1/2}.
		\end{align}

        \item \label{item:Beigi-Tomamichel}
        (Beigi--Tomamichel's inequality)
        For $A>0$ and $B\geq 0$,
        \begin{align}
        \mathrm{D}\log[A+B](B) 
        \leq \mathrm{D} \log [A] (B).
        \end{align}

        \item \label{item:Dlog_norm}
        If $A>0$ and $B\geq 0$, then 
        \begin{align}
        \|\mathrm{D}\log[A](B)\|_\infty\leq \|A^{-1/2}BA^{-1/2}\|_\infty.
        \end{align}

	\end{enumerate}
\end{fact}
\begin{proof}
    Item~\ref{item:DLog_identity} follows from the definition.
    Item~\ref{item:Dlog_linear} follows from \cite{Ped00}, \cite[\S 6]{Peller2006} (see also \cite{Daleckii1956}).
    
    For item~\ref{item:Dlog_integral},
    the formula \eqref{eq:log-formula1} was first pointed out by Lieb \cite{Lie73}
    by using the integral formula
    $\Log(A) 
    = \int_0^{\infty} (1+t)^{-1}(A-\I)(A+t\I)^{-1}\d t
    =
    \int_0^{\infty} \left[ (1+t)^{-1}\I - (A+t\I)^{-1}\right]\d t$
    (see also \cite[(3.11)]{HP14}).
    The second formula follows from the substitution $t=\frac{1-s}{s}$, or by the integral representation:
	$\Log (A) =    \int_0^1  (A-\I)\left[ t A + (1-t) \I\right]^{-1} \d t $
    \cite{Wou65}, 
    \cite[(3.12)]{HP14},
    \cite[\S 11]{Hig08}.

    Item~\ref{item:DLog_positivity} follows from the formula given in Item~\ref{item:Dlog_integral}.

Item~\ref{item:Dlog_division} can be shown as follows:
Set $\Delta \coloneqq A^{-\frac12} B A^{-\frac12}$. Then, by Item~\ref{item:Dlog_integral}, we have 
 \begin{align*}
	{\rm D} \log[A^{-1}](\Delta)
	&=\int_0^\infty \frac{1}{u\I+A^{-1}}\,\Delta\,\frac{1}{u\I+A^{-1}}\,\d u
    \notag
	\\
	&=\int_0^\infty A^{\frac{1}{2}}\frac{1}{Au+\I}A^{\frac{1}{2}}\Delta A^{\frac{1}{2}}\frac{1}{Au+\I}A^{\frac{1}{2}}\,\d u
    \notag
	\\
	&=A^{\frac{1}{2}}\int_0^\infty \frac{1}{Au+\I}\,B\,\frac{1}{Au+\I}\,\d u\,A^{\frac{1}{2}} 
    \notag
	\\
	&\overset{(*)}{=} A^{\frac{1}{2}}\int_0^\infty \frac{1}{A+v\I}\,B\,\frac{1}{A+v\I}\,\d v\,A^{\frac{1}{2}}
    \notag
	\\
	&=A^{\frac{1}{2}} \cdot{\rm D}\log[A](B)\cdot A^{\frac{1}{2}}.
 \end{align*}
where ($*$) follows from the substitution $v=u^{-1}$.

Item~\ref{item:Beigi-Tomamichel} was first shown by Beigi and Tomamichel \cite[Lemma 3]{BT24}.
Here, we provide an alternative proof.
Using Theorem~\ref{theo:Dlog_formula}:
\begin{align*}
\mathrm{D}\log[A+B](B) 
&= \int_0^1 \left\{ u A < (1-u) B \right\} \, \d u
\notag
\\
&\overset{(*)}{=}\int_0^{\infty} \left\{ v A <  B \right\} \frac{1}{(1+v)^2} \, \d v
\notag
\\
&\leq
\int_0^{\infty} \left\{ v A <  B \right\} \, \d v
\notag
\\
&= \mathrm{D}\log[A](B),
\notag
\end{align*}
where ($*$) follows from the substitution $v = \frac{u}{1-u}$.

Item~\ref{item:Dlog_norm} can be shown by using Theorem~\ref{theo:Dlog_formula}. Let $r=\|A^{-1/2}BA^{-1/2}\|_{\infty}$, then
\begin{align}
    \|\mathrm{D}\log[A](B)\|_\infty=\left\|\int_{0}^{r} \{ uA < B \} \d u\right\|_{\infty} \leq\int_{0}^{r} \left\|\{ uA < B \}\right\|_{\infty} \d u=r.
\end{align}

\end{proof}


\section{Operator Layer Cake Theorem} \label{sec:layer-cake}

This section is devoted to establishing a layer cake representation for the operator logarithm introduced in Appendix~\ref{sec:log}.

\begin{theo}[Operator layer cake representation] \label{theo:Dlog_formula}
For any positive definite operator $A$ and any self-adjoint operator $B$ on a finite-dimensional Hilbert space,
the following representation holds:
\begin{align} \label{eq:Dlog_formula}
\mathrm{D} \log[A](B) = \int_{0}^{\infty} \{ uA < B \} \d u\,-\,\int_{-\infty}^{0} \{ uA > B \} \d u,
\end{align}
where $ \mathrm{D} \log[A](B) $ is the directional derivative of the operator logarithm (see Appendix~\ref{sec:log}),
and $ \{ uA < B \} \equiv \{ B - uA > 0 \} $ denotes the projection onto the positive spectral subspace of $B-uA$. Similarly, $ \{ uA > B \} \equiv \{ uA - B > 0 \} $ denotes the projection onto the negative spectral subspace of $B-uA$.
\end{theo}

\begin{remark}
Let $\Delta = A^{-\frac{1}{2}}BA^{-\frac{1}{2}}$.
Notice that $\{u\in \mathbb{R} \mid B-uA \; \text{singular}\}=\{u\in \mathbb{R} \mid \Delta -u\I \; \text{singular}\}=\spec\left(\Delta\right)$ as $B-uA=A^{\frac{1}{2}}(\Delta-u\I)A^{\frac{1}{2}}$ and $A$ has full support.
Since $\Delta$ has discrete spectrum (due to the finite-dimensional assumption), the set $\{u\in \mathbb{R} \mid B-uA \; \text{singular}\}$ has measure zero.
Hence, in the two integrals, one can replace $\{ uA < B \}$ or $\{ uA > B \}$ in \eqref{eq:Dlog_formula} by $\{ uA \leq B \}$ or $\{ uA \geq B \}$, respectively.
\end{remark}



We begin with a definition and a lemma that appear repeatedly in the proof of the theorem.

    Let $X$ be an operator on a finite-dimensional Hilbert space. We define the real part (Hermitian part) and the imaginary part (skew-Hermitian part) of $X$ by \[\mathrm{Re}(X)\coloneqq\frac{X+X^\dagger}{2}, \quad\mathrm{Im}(X)\coloneqq\frac{X-X^\dagger}{2\mathrm{i}}.\]

\begin{lemm}\label{lemm:real_part}
    If $\mathrm{Re}(X)>0$ (resp.~$<0$), then all the eigenvalues of $X$ have positive (resp.~negative) real part; similarly, if $\mathrm{Im}(X)>0$ (resp.~$<0$), then all the eigenvalues of $X$ have positive (resp. negative) imaginary part.
\end{lemm}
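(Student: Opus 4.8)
The statement is a standard localization fact about numerical ranges of matrices, and the plan is to prove it via the numerical range (field of values) $W(X) \coloneq \{\langle v|X|v\rangle : \langle v|v\rangle = 1\}$. The crucial classical facts are that $W(X)$ is a compact convex set containing $\spec(X)$ (the Toeplitz--Hausdorff theorem for convexity, and the spectral-inclusion property $\lambda \in \spec(X) \Rightarrow \lambda \in W(X)$ since an eigenvector is a unit vector witnessing the eigenvalue), and that taking real/imaginary parts of the argument commutes with taking real/imaginary parts of the numerical range, i.e.\ $\operatorname{Re} W(X) = W(\operatorname{Re}(X))$ and $\operatorname{Im} W(X) = W(\operatorname{Im}(X))$, which is immediate from $\operatorname{Re}\langle v|X|v\rangle = \langle v|\operatorname{Re}(X)|v\rangle$ and likewise for the imaginary part.

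First I would reduce to the Hermitian statement: if $\operatorname{Re}(X) > 0$, then for every unit vector $|v\rangle$ we have $\langle v|\operatorname{Re}(X)|v\rangle > 0$, since a positive definite operator has strictly positive expectation values on nonzero vectors. Hence $\operatorname{Re}\langle v|X|v\rangle > 0$ for all unit $|v\rangle$, that is, $W(X)$ lies in the open right half-plane. Now let $\lambda \in \spec(X)$ with eigenvector $|v\rangle$, normalized; then $\lambda = \langle v|X|v\rangle \in W(X)$, so $\operatorname{Re}(\lambda) > 0$. The three remaining cases ($\operatorname{Re}(X) < 0$, $\operatorname{Im}(X) > 0$, $\operatorname{Im}(X) < 0$) follow by the obvious symmetries: replace $X$ by $-X$ for the sign flip, and note $\operatorname{Im}(X) = \operatorname{Re}(-\mathrm{i}X)$ together with $\spec(-\mathrm{i}X) = -\mathrm{i}\,\spec(X)$ to convert the imaginary-part statements into real-part statements.

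There is essentially no obstacle here; the only point requiring a word of care is that the eigenvalue/eigenvector argument does not actually need the Toeplitz--Hausdorff convexity theorem at all—spectral inclusion alone suffices, since we only need that each eigenvalue is attained as some expectation value $\langle v|X|v\rangle$, and that is trivially true by taking $|v\rangle$ a unit eigenvector. So the proof is short: expand $\operatorname{Re}\langle v|X|v\rangle = \langle v|\operatorname{Re}(X)|v\rangle > 0$ on a unit eigenvector, and similarly for $\operatorname{Im}$. I would present it in two or three lines, invoking only the identity $\operatorname{Re}\langle v|X|v\rangle = \langle v|\tfrac{X+X^\dagger}{2}|v\rangle$ and positive-definiteness of $\operatorname{Re}(X)$, then quoting the symmetry reductions for the other three cases.
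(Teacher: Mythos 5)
Your proof is correct and, once you strip away the opening framing in terms of numerical ranges (which you yourself observe is unnecessary), it is essentially identical to the paper's: take a unit eigenvector $\ket{v}$, write $\lambda = \bra{v}X\ket{v}$, and use $\operatorname{Re}(\lambda) = \bra{v}\operatorname{Re}(X)\ket{v} > 0$, with the other three cases handled by symmetry.
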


\begin{proof}
    We only prove the case when $\mathrm{Re}(X)>0$.
    Let $\lambda$ be an eigenvalue of $X$ and let $\ket{v}$ be a corresponding non-zero normalized eigenvector, i.e.,
\begin{align} X\ket{v} = \lambda \ket{v}, \quad \langle v | v \rangle = 1. 
\end{align}
Then,
\begin{align}
\mathrm{Re}(\lambda)=\frac{\lambda+\bar{\lambda}}{2}
=\frac{\bra{v}X\ket{v}+\overline{\bra{v}X\ket{v}}}{2}=\bra{v} \mathrm{Re}(X) \ket{v}>0.
\end{align}
The remaining cases follow from similar proofs.
\end{proof}

Now, we are ready to present the proof of Theorem~\ref{theo:Dlog_formula}.

\begin{proof}[Proof of Theorem~\ref{theo:Dlog_formula}]
	In the following argument, $\I$ denotes the identity operator.
	Define $\Delta=A^{-\frac{1}{2}}BA^{-\frac{1}{2}}$ and let $r, R>0$ satisfy $r\I>\Delta>-r\I$ and $R\I>B>-R\I$.
	Additionally, let $C_{R}^+$ (resp.~$C_{R}^-$) denote the boundary of the counterclockwise
	semidisc with center $z = 0$ and radius $R$ that lies entirely in the right (resp.~left) half-plane, let $C_{R,\text{arc}}^+$ (resp.~$C_{R,\text{arc}}^-$) denote the counterclockwise
	semicircular arc with center $z = 0$ and radius $R$ that lies entirely in the right (resp.~left) half-plane, and let $C_{R}^\circ$ denote the full circle centered at $z=0$ with radius $R$.
	
	Note that if $u>r$, then $\{B-uA> 0\}$ is the zero operator, because $r\I>\Delta$ implies $rA>B$, and thus $B-uA<B-rA<0$. Similarly, if $u<-r$, then $\{B-uA< 0\}=0$, since $-r\I<\Delta$ gives $B-uA>B+rA>0$.
	Therefore, for the first and second integration regions in \eqref{eq:Dlog_formula}, we only need to integrate over $[0,r]$ and $[-r, 0]$, respectively.
	
	To tackle $\{ B - uA > 0 \}$, let $C(u)$ be any simple closed counterclockwise contour that encloses exactly the positive eigenvalues of $B-uA$.
	We apply the residue theorem to $B-uA$ (see e.g.~\cite[Problem 5.9, p.~40]{Kat95}) to obtain
	\begin{align}
	\left\{ B - uA > 0 \right\} = \frac{1}{2\pi \mathrm{i}}\ointctrclockwise_{C(u)} \frac{1}{z\I-(B-uA)}\, \d z.
	\end{align}

	Suppose that \( u \notin \spec(\Delta) \). Then \( B - uA \) is non-singular.  
For \(u\geq 0\), we have
\begin{align}
    B-uA \leq B < R\I .
\end{align}
Hence all positive eigenvalues of \(B-uA\) lie in \((0,R)\). Moreover,
since \(u\notin \spec(\Delta)\), zero is not an eigenvalue of \(B-uA\).
Therefore, the right semidisc contour \(C_R^+\) encloses exactly the
positive eigenvalues of \(B-uA\), so we may use \( C_R^+ \) in place of \( C(u) \) for all such values of \( u \) and write:
\begin{align}\label{eq:residue-theorem}
	\left\{ B - uA > 0 \right\} = \frac{1}{2\pi \mathrm{i}}\ointctrclockwise_{C_R^+} \frac{1}{z\I-(B-uA)}\, \d z.
\end{align}

Although the above integral is well-defined in the Riemann sense, for the purposes of the following calculations, we shall interpret it in the sense of Lebesgue. Specifically, by writing
\begin{align}
\ointctrclockwise_{C_R^+} f(z)\, \mathrm{d}z,
\end{align}
we mean the Lebesgue integral
\begin{align}
\int_{R}^{-R} f(\mathrm{i}x)\, \mathrm{i}\, \mathrm{d}x + \int_{-\pi/2}^{\pi/2} f\left( R e^{\mathrm{i} \theta} \right) \mathrm{i}R e^{\mathrm{i} \theta} \, \mathrm{d}\theta,
\end{align}
where the contour \( C_R^+ \) is parameterized by  
\begin{align}
C_R^+ = \{ \mathrm{i}x \mid R \geq x \geq -R \} \cup \{ R e^{\mathrm{i} \theta} \mid -\pi/2 \leq \theta \leq \pi/2 \}.
\end{align}
Moreover, not only the integral over $C_R^+$ but all contour integrals appearing below are to be interpreted in the Lebesgue sense, with the following parameterizations:
\begin{align*}
    C_{R,\text{arc}}^+ &=  \{ R e^{\mathrm{i} \theta} \mid -\pi/2 \leq \theta \leq \pi/2 \},\\
    C_R^- &= \{ \mathrm{i}x \mid -R \leq x \leq R \} \cup \{ R e^{\mathrm{i} \theta} \mid \pi/2 \leq \theta \leq 3\pi/2 \},\\
    C_{R,\text{arc}}^- &= \{ R e^{\mathrm{i} \theta} \mid \pi/2 \leq \theta \leq 3\pi/2 \},\\
    C_R^\circ &= \{ R e^{\mathrm{i} \theta} \mid 0 \leq \theta \leq 2\pi \}.
\end{align*}

\medskip
\noindent\textbf{Step 1: Rewriting the integrals.}
Returning to the calculation, since the set $\spec(\Delta)$ has measure zero, we may integrate both sides of \eqref{eq:residue-theorem} over $[0,r]$ in the sense of Lebesgue to obtain:
	\begin{align}
	\int_0^r\left\{ B - uA > 0 \right\} \d u 
	&= \frac{1}{2\pi \mathrm{i}} \int_0^r\ointctrclockwise_{C_R^+} \frac{1}{z \I-(B-uA)} \d z \, \d u
	\\
	&\overset{\text{(a)}}{=} \frac{1}{2\pi \mathrm{i}} \ointctrclockwise_{C_R^+} \int_0^r \frac{1}{z \I-(B-uA)} \d u \, \d z
	\\
	&= \frac{1}{2\pi \mathrm{i}} A^{\frac{-1}{2}}\ointctrclockwise_{C_R^+}\int_0^r \frac{1}{u \I+zA^{-1}-\Delta}\, \d u\, \d z\,A^{\frac{-1}{2}} 
	\\
	&\overset{\text{(b)}}{=} \frac{1}{2\pi \mathrm{i}} A^{\frac{-1}{2}}\ointctrclockwise_{C_R^+} \Log(r\I+zA^{-1}-\Delta)-\Log(zA^{-1}-\Delta)\,\d z\,A^{\frac{-1}{2}}  
    \\\label{eq:integral_positive}
    &\overset{\text{(c)}}{=} -\frac{1}{2\pi \mathrm{i}} A^{\frac{-1}{2}}\ointctrclockwise_{C_R^+}\Log(zA^{-1}-\Delta)\,\d z\,A^{\frac{-1}{2}}.
	\end{align}
	Here, (a) follows from Fubini's theorem; the required absolute integrability is established in Lemma~\ref{lemm:boundedness_Fubini} below. 
    In (b), the integral is evaluated by the fundamental theorem of calculus, since the spectrum of $u\I+zA^{-1}-\Delta$ does not touch the branch cut $\mathds{R}_{\leq 0}$ as $u$ varies from $0$ to $r$, except possibly at $z=0$.
    \footnote{
    Note that as long as $z\not\in \{0,R\}$, the spectrum of $z A^{-1}-\Delta$ will never be real. This is because  $\mathrm{Im}(z A^{-1}-\Delta)\gtrless 0$ implies $\mathrm{Im}(\spec(z A^{-1}-\Delta))\gtrless 0$ by Lemma~\ref{lemm:real_part}. For $z=R$, $zA^{-1}-\Delta$ is positive.
    For $z\in C_R^+\setminus\{0\}$ as the real number $u$ varies from $0$ to $r$, the spectrum of $ u \I + z A^{-1}-\Delta$ only shifts to the right in the complex plane, and hence never touches the branch cut.
    }
	For (c), by hypothesis $r \I > \Delta$, $\Log(r\I+zA^{-1}-\Delta)$ is holomorphic in $\mathrm{Re}(z)>-\varepsilon_+$, where $\varepsilon_+=\left\Vert(A^{\frac{1}{2}}(r\I-\Delta)A^{\frac{1}{2}})^{-1}\right\Vert^{-1}>0$.\footnote{
	The spectrum of $r\I+zA^{-1}-\Delta$ will never have negative real parts as long as $\mathrm{Re}(z)>-\left\Vert(A^{\frac{1}{2}}(r\I-\Delta)A^{\frac{1}{2}})^{-1}\right\Vert^{-1}$ by checking that $\mathrm{Re}(r\I+zA^{-1}-\Delta)> 0$.
	}
    Since the region enclosed by $C_R^+$ lies in $\{z\in \mathds{C} \mid \mathrm{Re}(z)>-\varepsilon_+\}$, by Cauchy's integral theorem,    
	\[
	\ointctrclockwise_{C_R^+}\Log(r\I+zA^{-1}-\Delta)\, \d z =0.
	\]

    Similarly, for the negative part,
    \begin{align}
		\int_{-r}^0\{ B - uA < 0 \} \,\d u 
		&= \frac{1}{2\pi \mathrm{i}}\int_{-r}^0 \ointctrclockwise_{C_R^-}\frac{1}{z\I-(B-uA)}\,\d z\,\d u 
		\\
		&\overset{\text{(a)}}{=}\frac{1}{2\pi \mathrm{i}}\ointctrclockwise_{C_R^-}\int_{-r}^0 \frac{1}{z\I-(B-uA)}\,\d u\,\d z 
		\\
		&=\frac{1}{2\pi \mathrm{i}} A^{\frac{-1}{2}}\ointctrclockwise_{C_R^-}\int_{-r}^0 \frac{-1}{-u\I-zA^{-1}+\Delta}\,\d u\,\d z\,A^{\frac{-1}{2}}  
		\\
		&\overset{\text{(b)}}{=} \frac{1}{2\pi \mathrm{i}} A^{\frac{-1}{2}}\ointctrclockwise_{C_R^-}\Log(-zA^{-1}+\Delta)-\Log(r\I-zA^{-1}+\Delta)\,\d z\,A^{\frac{-1}{2}}  
        \\
        &\overset{\text{(c)}}{=}\frac{1}{2\pi \mathrm{i}} A^{\frac{-1}{2}}\ointctrclockwise_{C_R^-}\Log(-zA^{-1}+\Delta)\,\d z\,A^{\frac{-1}{2}}.
	\end{align}
Here, (a) is justified by the same Fubini argument as above, using Lemma~\ref{lemm:boundedness_Fubini}. 
In (b), the integral is again evaluated by the fundamental theorem of calculus, since the spectrum of
$-u\I-zA^{-1}+\Delta$ does not touch the branch cut $\mathds{R}_{\leq 0}$ for $u\in[-r,0]$, except possibly at $z=0$.
For (c), by the hypothesis $r \I > -\Delta$, $\Log(r\I-zA^{-1}+\Delta)$ is holomorphic in $\mathrm{Re}(z)<\varepsilon_-$, where $\varepsilon_-=\left\Vert(A^{\frac{1}{2}}(r\I+\Delta)A^{\frac{1}{2}})^{-1}\right\Vert^{-1}>0$. Since the region enclosed by $C_R^-$ lies in $\{z\in \mathds{C} \mid \mathrm{Re}(z)<\varepsilon_-\}$, (c) follows from Cauchy's integral theorem.

\medskip
\noindent\textbf{Step 2: Combining the contour integrals.}
Concluding the first step, we have
\begin{align}
&\int_0^r\left\{ B - uA > 0 \right\} \d u-\int_{-r}^0\left\{ B - uA < 0 \right\} \d u\notag
\\
&=
-\frac{1}{2\pi \mathrm{i}} A^{\frac{-1}{2}}\left(\ointctrclockwise_{C_R^+}\Log(zA^{-1}-\Delta)\,\d z+\ointctrclockwise_{C_R^-}\Log(-zA^{-1}+\Delta)\,\d z\right)\,A^{\frac{-1}{2}}.
\label{eq:step1}
\end{align}

We split the integrals on the right-hand side of \eqref{eq:step1}. 
Before doing so, we verify the logarithmic decompositions used on the two arcs.

For $z\in C_{R,\mathrm{arc}}^+ \cup C_{R,\mathrm{arc}}^-$, write $z=Re^{\mathrm{i}\theta}$ with
$-\pi\leq\theta\leq\pi$. Then
\begin{align}
\mathrm{Re}(A^{-1}-z^{-1}\Delta)
&=
A^{-1}-R^{-1}\cos(\theta)\Delta 
=
A^{-1/2}\left(\I-R^{-1}\cos(\theta)B\right)A^{-1/2}>0,
\end{align}
where the last inequality follows from the choice of $R$. Hence, the spectrum of
$A^{-1}-z^{-1}\Delta$ lies in the open right half-plane by Lemma \ref{lemm:real_part}. For $z\in C_{R,\mathrm{arc}}^+$, since $z$ is nonzero and lies in the closed right half-plane, the spectrum of $zA^{-1}-\Delta=z\cdot(A^{-1}-z^{-1}\Delta)$ stays off the branch cut and the principal arguments
add without crossing $\pm \pi$. Therefore, the principal logarithm satisfies
\begin{align}
\Log(zA^{-1}-\Delta)
=
\Log(z)\I+\Log(A^{-1}-z^{-1}\Delta)
\end{align}
on $C_{R,\mathrm{arc}}^+$.
Similarly, for $z\in C_{R,\mathrm{arc}}^-$, $-z$ is nonzero and lies in the closed right half-plane. Thus
\begin{align}
\Log(-zA^{-1}+\Delta)
=
\Log(-z)\I+\Log(A^{-1}-z^{-1}\Delta)
\end{align}
on $C_{R,\mathrm{arc}}^-$.

Using these decompositions, the positive part becomes
\begin{align*}
&\ointctrclockwise_{C_R^+}\Log(zA^{-1}-\Delta)\,\d z\\
&=\int_{C_{R,\text{arc}}^+}\Log(zA^{-1}-\Delta)\,\d z
+\int_R^{-R} \Log(x\mathrm{i}A^{-1}-\Delta)\,\mathrm{i}\d x\\
&=\int_{C_{R,\text{arc}}^+}\Log(A^{-1}-z^{-1}\Delta)\,\d z
+\int_{C_{R,\text{arc}}^+}\Log(z)\I\,\d z
+\int_R^{-R} \Log(x\mathrm{i}A^{-1}-\Delta)\,\mathrm{i}\d x.
\end{align*}
Similarly, the negative part becomes
\begin{align*}
&\ointctrclockwise_{C_R^-}\Log(-zA^{-1}+\Delta)\,\d z\\
&=\int_{C_{R,\text{arc}}^-}\Log(-zA^{-1}+\Delta)\,\d z
+\int_{-R}^R \Log(-x\mathrm{i}A^{-1}+\Delta)\,\mathrm{i}\d x\\
&=\int_{C_{R,\text{arc}}^-}\Log(A^{-1}-z^{-1}\Delta)\,\d z
+\int_{C_{R,\text{arc}}^-}\Log(-z)\I\,\d z
+\int_{-R}^R \Log(-x\mathrm{i}A^{-1}+\Delta)\,\mathrm{i}\d x.
\end{align*}


However,
\begin{align*}
\int_{C_{R,\text{arc}}^+}\Log(z)\I\,\d z+\int_{C_{R,\text{arc}}^-}\Log(-z)\I\,\d z
=\int_{C_{R,\text{arc}}^+}\Log(z)\I\,\d z-\int_{C_{R,\text{arc}}^+}\Log(w)\I\,\d w=0
\end{align*}
by the substitution $w=-z$.

Also, for \(x>0\), the operator \(x\mathrm{i}A^{-1}-\Delta\) has
strictly positive imaginary part. Hence its spectrum lies in the open
upper half-plane, and the scalar principal-log identity, applied through
the holomorphic functional calculus, gives
\[
    \Log(-x\mathrm{i}A^{-1}+\Delta)
    -
    \Log(x\mathrm{i}A^{-1}-\Delta)
    =
    -\pi\mathrm{i}\I .
\]
Similarly, for \(x<0\), one obtains
\[
    \Log(-x\mathrm{i}A^{-1}+\Delta)
    -
    \Log(x\mathrm{i}A^{-1}-\Delta)
    =
    \pi\mathrm{i}\I .
\]
Therefore,
\begin{align*}
&\int_R^{-R} \Log(x\mathrm{i}A^{-1}-\Delta)\,\mathrm{i}\d x
+\int_{-R}^R \Log(-x\mathrm{i}A^{-1}+\Delta)\,\mathrm{i}\d x\\
&=-\int_{-R}^{R} \Log(x\mathrm{i}A^{-1}-\Delta)\,\mathrm{i}\d x
+\int_{-R}^R \Log(-x\mathrm{i}A^{-1}+\Delta)\,\mathrm{i}\d x\\
&=\int_{0}^R
\left(
    \Log(-x\mathrm{i}A^{-1}+\Delta)
    -
    \Log(x\mathrm{i}A^{-1}-\Delta)
\right)\,\mathrm{i}\d x
+\int_{-R}^{0}
\left(
    \Log(-x\mathrm{i}A^{-1}+\Delta)
    -
    \Log(x\mathrm{i}A^{-1}-\Delta)
\right)\,\mathrm{i}\d x\\
&=\int_{0}^R \left(-\pi\mathrm{i}\I\right)\,\mathrm{i}\d x
+\int_{-R}^{0} \left(\pi\mathrm{i}\I\right)\,\mathrm{i}\d x
=0.
\end{align*}
Hence,
\begin{align}
&\ointctrclockwise_{C_R^+}\Log(zA^{-1}-\Delta)\,\d z+\ointctrclockwise_{C_R^-}\Log(-zA^{-1}+\Delta)\,\d z\notag\\
&=\int_{C_{R,\text{arc}}^+}\Log(A^{-1}-z^{-1}\Delta)\,\d z+\int_{C_{R,\text{arc}}^-}\Log(A^{-1}-z^{-1}\Delta)\,\d z=\ointctrclockwise_{C_{R}^\circ}\Log(A^{-1}-z^{-1}\Delta)\,\d z.
\label{eq:step2-cancel}
\end{align}
Combining \eqref{eq:step1} and \eqref{eq:step2-cancel}, we obtain
\begin{align}
\int_0^r\left\{ B - uA > 0 \right\} \d u-\int_{-r}^0\left\{ B - uA < 0 \right\} \d u=
-\frac{1}{2\pi \mathrm{i}} A^{\frac{-1}{2}}
\ointctrclockwise_{C_R^\circ}
\Log(A^{-1}-z^{-1}\Delta)\,\d z
A^{\frac{-1}{2}}.
\label{eq:step2}
\end{align}

\medskip
\noindent\textbf{Step 3: Integrating by parts.}
By the holomorphic functional calculus chain rule and the linearity of the Fr\'echet derivative, we have
\[
\frac{\d}{\d z}\Log(A^{-1}-z^{-1}\Delta)
=
\mathrm{D}\Log[A^{-1}-z^{-1}\Delta](\Delta)\cdot z^{-2}.
\]
By integration by parts, we have
\begin{align}
\ointctrclockwise_{C_R^\circ}
\Log(A^{-1}-z^{-1}\Delta)\,\d z
&=
-\ointctrclockwise_{C_R^\circ}
z^{-1}\cdot\mathrm{D}\Log[A^{-1}-z^{-1}\Delta](\Delta)\,\d z
\\
&=
-\int_{0}^{2\pi}
{\mathrm D} \Log\left[A^{-1}-(R \mathrm{e}^{\mathrm{i}\theta})^{-1}\Delta\right](\Delta)
\cdot \mathrm{i} \d\theta,
\label{eq:ibp-step}
\end{align}
where the last equality is from the substitution $z=Re^{\mathrm{i}\theta}$.

Since \eqref{eq:ibp-step} holds for every sufficiently large $R$ satisfying $R\I>B>-R\I$, and its left-hand side is actually independent of $R$ by \eqref{eq:step2}, we may let $R\to\infty$. Since
\[
    A^{-1} - (Re^{\mathrm{i}\theta})^{-1}\Delta
    \rightarrow A^{-1}
\]
uniformly in \(\theta\) as \(R\to\infty\), by the continuity of the Fr\'echet derivative $\mathrm{D}\Log[\,\cdot\,]$ (Fact~\ref{item:Dlog_linear}), we obtain
\begin{align}
\ointctrclockwise_{C_R^\circ}\Log(A^{-1}-z^{-1}\Delta)\,\d z
=-\int_{0}^{2\pi}\mathrm{D} \Log[A^{-1}](\Delta)\cdot  \mathrm{i} \d\theta
=-2\pi\mathrm{i}\cdot\mathrm{D} \log[A^{-1}](\Delta).
\label{eq:ibp-limit}
\end{align}
Therefore,
\begin{align}
\int_0^r\left\{ B - uA > 0 \right\} \d u-\int_{-r}^0\left\{ B - uA < 0 \right\} \d u
=
A^{\frac{-1}{2}}\cdot\mathrm{D} \log[A^{-1}](\Delta)\cdot A^{\frac{-1}{2}}.
\end{align}
Invoking 
\[
\mathrm{D}\log[A^{-1}](\Delta) = A^{\frac{1}{2}} \cdot\mathrm{D}\log[A](B)\cdot A^{\frac{1}{2}}
\]
(Fact~\ref{item:Dlog_division}), we complete the proof.
\end{proof}

\begin{lemm} \label{lemm:boundedness_Fubini}
Following the notation used in the proof above, we have
\begin{align}
\int_0^r\ointctrclockwise_{C_R^+}
\left\| \frac{1}{z \I-(B-uA)}\right\| |\d z|\,\d u < \infty,
\qquad
\int_{-r}^0\ointctrclockwise_{C_R^-}
\left\| \frac{1}{z \I-(B-uA)}\right\| |\d z|\,\d u < \infty.
\end{align}
\end{lemm}
\begin{proof}
	We break $C_R^+$ into two parts: the straight line $C_{R,\,\text{straight}}^+ \coloneqq\{\mathrm{i}x : R\geq x \geq -R\}$ and the arc $C_{R,\,\text{arc}}^+ \coloneqq\{R\mathrm{e}^{\mathrm{i}\theta}: -\pi/2\leq\theta\leq\pi/2\}$.
	Notice that the integrand is the reciprocal of the minimum distance between $z$ and the spectrum of $B-uA$. 
	For $z\in C_{R,\,\text{arc}}^+$, if $\beta$ is in the spectrum of $B-uA$ and $\beta\geq 0$, then
	\[
	|z-\beta|\geq |z| - |\beta| =|z|-\beta\geq R-\Vert B\Vert>0.
	\]
	On the other hand, if $\beta< 0$, then it is clear that $|z-\beta|>R\geq R-\Vert B\Vert$. As a result, the integrand is bounded by $(R-\Vert B\Vert)^{-1}$. Thus 
	\[
	\int_0^r\int_{C_{R,\,\text{arc}}^+} \left \Vert\frac{1}{z\I-(B-uA)}\right\Vert |\d z|\,\d u\leq r\cdot\pi R\cdot \frac{1}{R-\Vert B\Vert}< \infty.
	\]
	
	For the straight line part, we rewrite the integral:
	\[
	\int_0^r\int_{C_{R,\,\text{straight}}^+}  \left\Vert\frac{1}{z\I-(B-uA)}\right\Vert |\d z|\,\d u =\int_0^r\int_{-R}^R \left \Vert\frac{1}{\mathrm{i}x\I-(B-uA)}\right\Vert \d x\, \d u.
	\]
	Fix $x$ and $u$. 
    The only possible singularities occur when
\(\mathrm{i}x\I-(B-uA)\) is not invertible. Since \(B-uA\) is self-adjoint,
its spectrum is real. Thus this can happen only if
 \(x=0\). In this case, non-invertibility is equivalent to
\(0\in\spec(B-uA)\), or equivalently \(u\in\spec(\Delta)\). Therefore the possible
singularities are contained in
\[
    \spec(\Delta)\times\{0\}.
\]
Since \(\spec(\Delta)\) is finite, this set has two-dimensional Lebesgue
measure zero and hence does not affect the double integral.
    
    Let $\{\beta_j\}$ be the set of eigenvalues of $B-uA$; then $\{\mathrm{i}x-\beta_j\}$ is the set of eigenvalues of $\mathrm{i}x\I-(B-uA)$. Thus 
	\begin{align*}
		\left\Vert\frac{1}{\mathrm{i}x\I-(B-uA)}\right\Vert
		=\frac{1}{\min_{j}|\mathrm{i}x-\beta_j|}
        =\frac{1}{\sqrt{x^2+\min_{j}|\beta_j|^2}}
        =\frac{1}{\sqrt{x^2+\left\Vert(B-uA)^{-1}\right\Vert^{-2}}},
	\end{align*}
	as
	$$
	\min_{j}|\beta_j|=\left\Vert(B-uA)^{-1}\right\Vert^{-1}.
	$$
	However, 
	\begin{align*}
		\left\Vert(B-uA)^{-1}\right\Vert
		&=\left\Vert A^{-\frac{1}{2}}(\Delta-u\I)^{-1}A^{-\frac{1}{2}}\right\Vert
        \leq\left\Vert A^{-1}\right\Vert\left\Vert (\Delta-u\I)^{-1}\right\Vert.
	\end{align*}
	Hence
	\begin{align*}
		\left \Vert\frac{1}{\mathrm{i}x\I-(B-uA)}\right\Vert
		&=\frac{1}{\sqrt{x^2+\left\Vert(B-uA)^{-1}\right\Vert^{-2}}}\\
		&\leq\frac{1}{\sqrt{x^2+\left\Vert A^{-1}\right\Vert^{-2}\left\Vert (\Delta-u\I)^{-1}\right\Vert^{-2}}}\\
		&=\frac{1}{\sqrt{x^2+\left\Vert A^{-1}\right\Vert^{-2}\left(\min_{1\leq j\leq n}\left\vert\delta_j-u\right\vert\right)^2}}\\
		&=\max_{1\leq j\leq n}\frac{1}{\sqrt{x^2+\left\Vert A^{-1}\right\Vert^{-2}\left(\delta_j-u\right)^2}},
	\end{align*}
	where $\{\delta_j\}_{j=1}^n$ are the eigenvalues of $\Delta = A^{-\frac12} B A^{-\frac12}$.
	
	For the integral
	\[
	\int_0^r\int_{-R}^R \left \Vert\frac{1}{\mathrm{i}x \I-(B-uA)}\right\Vert \d x\,\d u
	\]
	to be finite, we only need each of
	\[
	\int_0^r\int_{-R}^R \frac{1}{\sqrt{x^2+\left\Vert A^{-1}\right\Vert^{-2}\left(\delta_j-u\right)^2}} \d x\,\d u
	\]
	to be finite.
    Indeed, the substitution \(y=\|A^{-1}\|^{-1}(u-\delta_j)\) reduces the integrand to 
\((x^2+y^2)^{-1/2}\), whose local integrability is immediate in polar coordinates.
    
The estimate for the negative part follows similarly.
\end{proof}

{\larger
\bibliographystyle{myIEEEtran}
\bibliography{reference.bib, operator.bib}
}

\end{document}